\theoremstyle{plain}
\newtheorem{theorem}{Theorem}
\newtheorem{conjecture}[theorem]{Conjecture}
\newtheorem*{observation*}{Observation}
\newtheorem*{theorem*}{Theorem}
\newtheorem{corollary}[theorem]{Corollary}
\newtheorem{lemma}[theorem]{Lemma}
\newtheorem*{lemma*}{Lemma}
\newtheorem{observation}[theorem]{Observation}
\newtheorem{definition}[theorem]{Definition}
\newtheorem{remark}[theorem]{Remark}
\newtheorem{example}[theorem]{Example}
\let\epsilon=\varepsilon
\def\boldx{\mathbf{x}}
\def\E{\mathbb{E}} 
\newcommand{\pr}{\mathbb{P}}
\def\calB{\mathcal{B}}
\def\calE{\mathcal{E}}
\def\calF{\mathcal{F}}
\def\calT{\mathcal{T}}
\def\Cinit{C_{\mathrm{init}}}
\def\Cblock{\kappa}
\def\Cslack{\Cblock}
\def\Einit{\calE_{\mathrm{init}}}
\newcommand{\Ejam}{\calE_{\mathrm{jam}}}
\newcommand{\revS}{\widetilde{S}}
\newcommand{\revRandom}{\widetilde{\Random}}
\newcommand{\Jmax}{J_\mathrm{max}}
\newcommand{\unstickProb}{p_{\mathrm{unstick}}}
\newcommand{\unstickIndicator}{\mathbf{unstick}}
\newcommand{\tauleave}{\tau_\mathrm{leave}}
 \def\tbirth{t_{\mathrm{birth}}}
 \newcommand{\tauend}{\tau_{\mathrm{end}}}
 \def\jmin{j_{\mathrm{min}}}
 \newcommand{\tauinit}{\tau_{\mathrm{init}}}
 \def\bins{\mathrm{bins}}
 \def\revT{\widetilde{\trajectory}}
 \def\revS{\widetilde{S}}
 \def\revc{\tilde{c}}
 \def\Fill{\mathrm{Fill}}
 \def\birth{\mathrm{birth}}
 \def\stuck{\mathbf{stuck}}
 \def\newstuck{\mathbf{newstuck}}
 \def\unstuck{\mathbf{unstuck}}
\newcommand{\stuckvect}{\mathbf{stuckvect}}
\newcommand{\ballvect}{\mathbf{ballvect}}
\newcommand{\balls}{\mathbf{balls}}
\newcommand{\send}{\mathbf{send}}
\newcommand{\stucksend}{\mathbf{stucksend}}
\newcommand{\YABfull}{T(Y^A,Y^B)}
\newcommand{\YAB}{T}
\newcommand{\Random}{R} 
\newcommand{\trajectory}{B}
\newcommand{\TrajectorySet}{\mathcal{B}}
\renewcommand{\Pr}{\pr}  
\title{Instability of backoff protocols with arbitrary arrival rates }
\author{Leslie Ann Goldberg and John Lapinskas\thanks{A short version, without the proofs, will appear in the Proceedings of SODA 2023. For the purpose of Open Access, the
authors have applied a CC BY public copyright licence to any Author Accepted Manuscript version arising
from this submission. All data is provided in full in the results section of this paper.} 
}
\date{16 February 2025}
\begin{document}
\maketitle

\begin{abstract}
In contention resolution, multiple processors are trying to coordinate to send discrete messages through a shared channel with limited communication. If two processors send at the same time, the messages collide and are not transmitted successfully. Queue-free backoff protocols are an important special case --- for example, Google Drive and AWS instruct their users to implement binary exponential backoff to handle busy periods. It is a long-standing conjecture of Aldous (IEEE Trans. Inf. Theory 1987) that no stable backoff protocols exist for any positive arrival rate of processors. This foundational question remains open; instability is only known in general when the arrival rate of processors is at least $0.42$ (Goldberg et al.\ SICOMP 2004). We prove Aldous' conjecture for all backoff protocols outside of a tightly-constrained special case using a new domination technique to get around the main difficulty, which is the strong dependencies between messages.
 \end{abstract}
 
\section{Introduction}\label{sec:intro}

In the field of contention resolution, multiple processors (sometimes called ``stations'') are trying to coordinate to send discrete messages (sometimes called ``packets'') through a shared channel called a multiple access channel. The multiple access channel is not centrally controlled and the processors cannot communicate, except by listening to the channel. The operation of the channel is straightforward. In each (discrete) time step one or more processors might send messages to the channel. If exactly one message is sent then it is delivered successfully and the sender is notified of the success. If multiple messages are sent then they collide and are not transmitted successfully (so they will have to be re-sent later). 

We typically view the entire process as a discrete-time Markov chain. At each time step, new messages arrive at processors with rates governed by probability distributions with total rate $\lambda > 0$. After the arrivals, each processor independently chooses whether to send a message through the channel. 
A \emph{contention-resolution protocol} is a randomised algorithm that the processors use to decide
when to send messages to the channel (and when to wait because the channel is too busy!). 
Our objective is to find a \emph{stable} protocol~\cite{GGMM},
which is a protocol with the property that the
corresponding Markov chain is  positive recurrent, implying that there is a stationary distribution bounding the likely extent to which messages build up over time. Other objectives include bounding the expected  
waiting time of messages and maximising the throughput, which is the rate at which messages succeed.
 
Issues of contention resolution naturally arise when designing networking protocols~\cite{wifi-IEEE,ALOHA}, but it is also relevant to hardware design that enables concurrency~\cite{hardware-lockfree,hardware-elision} and to interaction with cloud computing services~\cite{cloud-AWS, cloud-Google}. 
The example of cloud computing will be an instructive one, so we expand on it. Suppose that an unknown number of users is submitting requests to a server which is struggling under the load, with more users arriving over time. The users do not have access to load information from the server and they do not have knowledge of each other --- all they know is whether or not their own requests to the server are getting through. The central question of contention resolution is then: how often should users re-send their requests in order to get everyone's requests through as quickly as possible?

There are two main categories of contention resolution protocol --- these differ according to the extent to which processors listen to the channel. In \textit{full-sensing protocols}, processors constantly listen to the shared channel obtaining
partial information.
For example, in addition to learning
whether its own sends are successful, a
processor may learn 
on which steps the channel is quiet (with no sends)  \cite{SST} 
or it may learn on which steps
there are successful sends 
\cite{BKKP-adversarial}
or it may learn both \cite{MH}.
While full-sensing protocols are suitable in many settings, there are important settings where they cannot reasonably be implemented (such as the cloud computing example above). In \textit{acknowledgement-based protocols}, the only information that processors receive about the shared channel is whether their own messages get through.

We distinguish between two ways of modelling message arrival. The earliest work in the field focused on  \textit{queueing models}, in which the number $N$ of processors is fixed and each processor maintains a queue of messages to be sent. These models are appropriate for a static network.
A particularly simple example is the slotted ALOHA protocol~\cite{Roberts-ALOHA},  
one of the first networking protocols. In this protocol, if there are $N$ processors with non-empty queues, then
these processors   send independently with probability $1/N$. For large $N$, this is stable if $\lambda < 1/e$. However, it requires the processors to know the value of $N$. 
In order to get around this difficulty, Metcalfe and Boggs proposed \textit{binary exponential backoff}, in which a processor which has already 
had $i$ unsuccessful attempts to send a given message
waits a random amount of time (a geometric random variable with mean $2^i$) before again attempting to send.  Binary exponential backoff (with some modifications) forms the basis for Ethernet~\cite{MB-BEB} and TCP/IP~\cite{TCP}.
For any $N$, binary exponential backoff is known to be stable for sufficiently small $\lambda$  \cite{GGMM,AGM}.  
Unfortunately, this value of
$\lambda$ depends on $N$ and binary exponential backoff
is unstable if $\lambda$ is sufficiently large~\cite{HLR}.  Remarkably, H{\aa}stad, Leighton and Rogoff~\cite{HLR} showed that 
\emph{polynomial backoff} (where the waiting time after the $i$'th collision is a geometric
random variable with expectation $i^{\alpha}$ for  some $\alpha>1$) is  stable for all $\lambda \in (0,1)$. 
For contention resolution with queues
even more powerful full-sensing protocols are known   --- in particular, there is a stable full-sensing protocol even for the more general model in which some specified pairs of processors are allowed to use the channel simultaneously~\cite{ST,SST}.

In this paper, we focus on \textit{queue-free models}, which allow for dynamic networks and are more appropriate for public wi-fi hotspots~\cite{wifi-IEEE} or cloud computing~\cite{cloud-AWS, cloud-Google}. 
We again consider these models in discrete time.
In these models, processors arrive in the system according to a Poisson distribution with rate $\lambda$, and each processor only wants to send a single message rather than maintaining a queue; in fact, we typically identify the processors with the messages that they are trying to send. As usual, only one message can pass through the channel at any given time step. In this setting, an acknowledgement-based protocol can be viewed as a joint distribution $(T_1,T_2,\dots)$ of times. For each message, the corresponding processor independently samples $(\tau_1,\tau_2,\dots)$ from $(T_1,T_2,\dots)$.
If the message does not get through during the first $j-1$ times that it is sent then the processor waits for $\tau_j$ time steps before sending it for the $j$'th time. An important special case is that of \textit{backoff protocols}, in which $(T_1,T_2,\dots)$ is a tuple of independent geometric variables. Equivalently, a backoff protocol is associated with a \textit{send sequence} of probabilities $\mathbf{p}=p_0,p_1,\dots$ such that, if a processor 
has already had $j$ unsuccessful sends,   then it will send its message on the following time step with probability $p_j$; thus $1/p_j$ is the expected waiting time $\E(T_j)$. For example, the case $p_j = 2^{-j}$ gives rise to the binary exponential backoff protocol that
we have already described. This protocol  is widely used in the queue-free model: both AWS and Google 
Drive advise users to implement binary exponential backoff when using their services~\cite{cloud-AWS, cloud-Google}.

In the queue-free setting, there has been a great deal of interesting work developing full-sensing protocols and proving that these perform
well. Some of this is described in the survey
of Chlebus~\cite{Chlebus}. See also \cite{MH, EH, GFL}.
More recently,
Bender et al.~\cite{BKKP-adversarial}
have shown that
in the full-sensing model without collision detection (where processors listen to the channel to learn on which steps there are successful sends but are unable to distinguish between silence and collisions) 
there is a full-sensing protocol which achieves constant throughput 
even when the message arrival is adversarial rather than random. Chen et al.~\cite{CJZ-jamming} 
demonstrate a full-sensing protocol that can
achieve a decent throughput, even in the presence
of adversarial jamming; Bender et al.~\cite{bender2024fully} have strengthened this result to work even when each individual sender can only listen to the channel rarely. Despite these advances regarding full-sensing protocols,
and other protocols assuming more capabilities from processors than acknowledgement-based protocols
\cite{RU,GMPS}, for acknowledgement-based protocols the most fundamental possible question remains open: {\bf do stable protocols exist at all?} Indeed, this problem remains open even for backoff protocols, and 
most work on the question has focused on this case.

The following foundational conjecture was made by Aldous~\cite{Aldous} in 1987, and is widely believed. It is the focus of this work.

\begin{conjecture}[Aldous's Conjecture]\label{conj:backoff} 
    In the queue-free setting, no backoff protocol is stable for \textit{any} positive value of $\lambda$.
\end{conjecture}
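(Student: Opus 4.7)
My plan is to establish instability by identifying a self-reinforcing ``jammed'' state which, once entered, the system cannot escape quickly enough to keep up with the Poisson arrival rate $\lambda$. The rough dichotomy driving the argument is the following: if $p_j$ fails to decay sufficiently fast, then in a window where many messages are alive the expected number of senders per step is large and collisions dominate; if $p_j$ decays rapidly, then surviving messages quickly become near-dormant and accumulate. In either regime, one aims to show that for any positive threshold $p^\star$, the number of messages currently sending with probability below $p^\star$ grows without bound in expectation, which is incompatible with positive recurrence.

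The key quantitative step is a persistence estimate for ``stuck'' messages, i.e.\ those whose current send probability is below $p^\star$. Conditional on there being $k$ stuck messages at time $t$, one would aim to show that the expected number of \emph{new} arrivals that themselves become stuck before the stuck population returns to $k/2$ is at least a constant fraction of $k$, producing the supercritical drift needed for a Foster--Lyapunov argument. Throughput bounds during such congested periods would rely on the elementary inequality
\[
\sum_i q_i \prod_{j\neq i}(1-q_j) \;\le\; S \exp(-S + \max_i q_i), \quad \text{where } S = \sum_i q_i,
\]
so that a single step is likely to be a collision or an idle step whenever the total effective send rate $S$ of the live messages is either substantially above or substantially below $1$. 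Choosing $p^\star$ carefully should make it impossible for $S$ to sit near $1$ for long without fresh arrivals being swept into the stuck population.

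The principal obstacle, as flagged by the abstract, is the strong dependence between messages. Whether a given message is currently stuck depends on the entire channel history, which is in turn shaped by the random send decisions of every message ever alive; the joint distribution over the ages of the live messages is therefore highly intricate, and naive independence approximations change these marginals. The ``domination technique'' announced in the paper presumably replaces the true dynamics with an auxiliary process in which each processor sends according to its own independent randomness (possibly with a frozen or adversarially-driven age profile), and then proves that this auxiliary process stochastically dominates the true one in the number of stuck messages while remaining tractable enough to analyse directly. The hardest part, I expect, will be choosing the coupling so that the domination goes in the usable direction without introducing error terms that cancel the drift --- and, relatedly, isolating exactly which ``tightly-constrained special case'' of send sequences forces the coupling to degenerate, so that one can at least pin down the conjecture away from that case.
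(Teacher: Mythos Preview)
This statement is a \emph{conjecture}, not a theorem, and the paper does not prove it. Aldous's Conjecture remains open; the paper's contribution is Theorem~\ref{thm:LCED}, which establishes instability for every send sequence that is \emph{not} LCED (Definition~\ref{def:LCED}). So there is no ``paper's own proof'' of Conjecture~\ref{conj:backoff} to compare against, and your write-up is not a proof either --- it is an outline of heuristics with the central step left unspecified.

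That said, your instincts about where the difficulty lies are broadly aligned with the paper. The paper does exactly what you guess in your last paragraph: it introduces an auxiliary process (the \emph{random-unsticking process}~$R$ of Section~\ref{sec:defRandom}) in which balls become unstuck via independent Bernoulli coins rather than through the true channel dynamics, couples it below the two-stream process~$T$ while $T$ remains ``jammed'' (Lemma~\ref{lem:domcouplenew}), and then analyses $R$ via a time-reversal (Section~\ref{sec:rev-def}) to extract the key independent-Poisson domination of Lemma~\ref{lem:newdombinoms}. Where your sketch diverges from what actually works is the proposed Foster--Lyapunov drift on the count of ``stuck'' messages with $p_j<p^\star$: the paper does not --- and, as Section~\ref{sec:intro-future} explains, \emph{cannot} --- run a uniform drift argument, because LCED sequences genuinely exhibit arbitrarily long ``quiet periods'' during which the noise drops below~$1$ and your throughput inequality gives nothing. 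The paper's block construction and jammedness hierarchy (Section~\ref{sec:jammedness}) are designed precisely to show that, \emph{outside} the LCED regime, once a suitable initialisation event $\Einit^T(t_0)$ occurs the noise stays large forever with positive probability; the LCED case is the ``tightly-constrained special case'' you anticipate, and it is left open because the domination of Lemma~\ref{lem:newdombinoms} is actually false during quiet periods.
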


Aldous's conjecture remains open to this day. It has been proved for arrival rates $\lambda \ge 0.42$~\cite{GJKP}, 
but for arbitrary arrival rates the only known results concern special cases which avoid a central difficulty inherent to the problem. Consider a backoff protocol with send sequence $p_0,p_1,\dots$ and arrival rate $\lambda > 0$. For all integers $j,t \ge 1$, write $b_j(t)$ for the set of messages in the system at time $t$ which have already sent $j$ times  (all unsuccessfully). Write $S_j(t)$ for the number of messages in $b_j(t)$ which send at time $t$,
$S_0(t)$ for the number of ``newborn'' messages which send for the first time at time~$t$,  
and $S(t) = S_0(t) + S_1(t) + \dots$ for the total number of sends from all messages in the system. Thus a message escapes the system at time $t$ if and only if $S(t) = 1$, Aldous's conjecture implies that $S(t)=1$ for less than a $\lambda$ proportion of times,  i.e.\ that messages arrive faster than they escape. 
Very often, the reason that this occurs is that $S(t) \ge 2$ for most values of $t$. However, it is not hard to show that for all $j\geq 1$ and most times $t$ we have $\E(S_j(t)) = p_j\E(|b_j(t)|) \lessapprox  \lambda$, and so on most time steps $S_j(t) = 0$; thus to show $S(t) \ge 2$ on most time steps, we must engage with the complicated joint distribution $(|b_1(t)|, |b_2(t)|,\dots)$. This is the key difficulty that 
all current arguments have avoided, which restricts the classes
of send sequences to which they apply.
The tool that enabled us to prove Aldous's conjecture for most protocols
is a new domination technique for bounding this joint distribution.
Before stating our result and the new technique we first summarise
progress that can be made without engaging with the   joint distribution.
In the following summary, we classify protocols in terms of the key quantity~$1/p_j$, which is the expected waiting time before a message sends after having its $j$'th collision. 
\begin{itemize}
\item Kelly and MacPhee~\cite{kelly-macphee} categorised the class of backoff protocols for which $S(t) \ge 2$ for all sufficiently large $t$. This result covers all protocols with subexponential expected waiting times, i.e.\ whose send sequences satisfy 
$1/p_j = o(c^j)$ as $j\to\infty$ for all $c>1$ (see Corollary~\ref{cor:backoff-kmp}). Since these protocols
are unstable in such a very strong way, Kelly and MacPhee are  able to avoid working with the joint distribution in favour of applying the Borel-Cantelli lemmas. 
\item Aldous~\cite{Aldous} proved that binary exponential backoff is unstable for all $\lambda > 0$, and his argument easily extends to all backoff protocols with exponential expected waiting times, i.e.\ whose send sequences satisfy $1/p_j = \Theta(c^{j})$ as $j\to\infty$ for some $c>1$ (see Section~\ref{sec:intro-context}). This proof relies on proving concentration for specific variables $|b_j(t)|$ as $t\to\infty$, and then applying union bounds over suitable ranges of $j$ and $t$, again avoiding the joint distribution. This concentration fails in general; for example, if $p_j \ge 3\lambda$, then $b_j(t) = \emptyset$ for most values of $t$. (See Definition~\ref{def:reliable} for a more detailed discussion.)
\item A simple argument known to the authors of~\cite{GJKP} (but not previously published) shows that there is no stable backoff protocol which has infinitely many super-exponential expected waiting times, i.e.\ with an infinite subsequence $p_{j_1}, p_{j_2}, \dots$ satisfying $1/p_{j_k} = \omega(c^{j_k})$ as $k\to\infty$ for all $c>1$. We state and prove this as Lemma~\ref{lem:killer}. 
In this case there is no need to engage with the joint distribution of the
$S_j(t)$ variables because the proof relies on bounding $S(t) \ge S_0(t)$, i.e.\ only considering newborn messages.
\end{itemize}

Unfortunately, the above results cannot be combined 
in any simple way to prove Aldous's conjecture.
For example, by 
including some $p_j$~values such that the expected waiting time
$1/p_j$ is less than exponential,
 it is easy to construct  protocols which 
neither exhibit concentration for specific variables $|b_j(t)|$ nor satisfy $S(t) \ge 2$ for all sufficiently large $t$ (see Section~\ref{sec:intro-context}), so to show that these protocols are unstable we must engage with the joint distribution.

Our main technical contribution is a proof (see Lemma~\ref{lem:simple-newdombinoms}) that, roughly speaking, we can dominate the joint distribution of $(|b_1(t)|,|b_2(t)|,\dots)$ below by a much simpler collection of independent Poisson variables whenever $\E(S(t))\to\infty$ as $t\to\infty$. Using this, we are able to almost entirely solve the problem of inconsistent decay rates and prove Aldous's conjecture except in some extreme cases characterised in  Definition~\ref{def:LCED}. 
Before describing these extreme cases, we give some easier-to-state consequences of our main result (Theorem~\ref{thm:LCED}).
In the following theorems (and throughout the paper) a \emph{backoff process} is a backoff protocol in the queue-free model.
The first consequence is that all protocols with monotonically non-increasing send sequences are unstable.

\begin{restatable}{theorem}{thmmono}
 \label{thm:mono}
For every   $\lambda \in (0,1)$ and
every   monotonically non-increasing send sequence $\mathbf{p}= p_0,p_1,\ldots$,
the backoff process
with arrival rate $\lambda$ and send sequence~$\mathbf{p}$
is  unstable.
\end{restatable}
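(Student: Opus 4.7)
The plan is to reduce to the paper's main theorem, Theorem~\ref{thm:LCED}, which establishes Aldous's conjecture for every backoff protocol whose send sequence is not LCED (the narrow class of Definition~\ref{def:LCED} capturing the ``inconsistent decay rate'' extreme cases that the Poisson domination technique does not directly handle). The task is then to verify that no monotonically non-increasing send sequence is LCED.

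I would split into cases on the limit $p_\infty = \lim_{j\to\infty} p_j$, which exists because $p_0,p_1,\ldots$ is non-increasing and non-negative. If $p_\infty > 0$, then $1/p_j \le 1/p_\infty$ is uniformly bounded, which is certainly subexponential ($1/p_j = o(c^j)$ for every $c > 1$), so Corollary~\ref{cor:backoff-kmp} (the Kelly--MacPhee result) immediately gives $S(t) \ge 2$ for all sufficiently large $t$ almost surely. This is strictly stronger than instability and handles the case without invoking Theorem~\ref{thm:LCED} at all.

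In the remaining case $p_\infty = 0$ we have $1/p_j \to \infty$ monotonically. The heart of the argument is then to unpack Definition~\ref{def:LCED} and verify that its defining inequalities cannot all be satisfied when the sequence of waiting times $(1/p_j)$ is non-decreasing. As suggested by the authors' discussion in the introduction, the LCED condition essentially captures protocols whose waiting times oscillate, repeatedly dropping back to small values after growing large along a delicately chosen subsequence --- behaviour that monotonicity forbids outright, since once $1/p_j$ has grown it cannot shrink. Once the LCED check is complete, Theorem~\ref{thm:LCED} delivers instability.

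The main obstacle is the bookkeeping in this last step: with Theorem~\ref{thm:LCED} available as a black box, the entire argument hinges on the precise form of Definition~\ref{def:LCED}, and one must trace through it carefully to confirm that monotonicity precludes every combination of its defining conditions simultaneously. No further probabilistic input should be required, since the Poisson domination lemma (Lemma~\ref{lem:simple-newdombinoms}) is already absorbed into Theorem~\ref{thm:LCED}; the only real work is a clean definitional check using nothing more than $p_j \le p_i$ for $i < j$.
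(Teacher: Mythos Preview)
Your plan is correct and matches the paper's approach of reducing to Theorem~\ref{thm:LCED} by showing that a monotone non-increasing sequence cannot be LCED. The paper's argument is marginally more streamlined in that it avoids your case split on $p_\infty$: it directly notes that LCED condition~(\ref{LCED:largelyconstant}) forces $p_j \ge c$ infinitely often for some $c>0$, while condition~(\ref{LCED:exponentialdecay}) forces $p_{\ell_i} < c$ eventually along the decaying subsequence, and monotonicity then gives $p_j < c$ for all large $j$ --- a contradiction that covers both of your cases at once (when $p_\infty > 0$ it is (\ref{LCED:exponentialdecay}) that fails, and when $p_\infty = 0$ it is (\ref{LCED:largelyconstant})).
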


We have included Theorem~\ref{thm:mono} because it has a clean statement, but
our proof technique doesn't rely on any kind of monotonicity.
For example, our main Theorem, Theorem~\ref{thm:LCED} also has the following corollary. 

\begin{restatable}{theorem}{thmsecond}
\label{thm:second} 
Let $\mathbf{p}$ be a send sequence.
Let $m_{\mathbf{p}}(n)$ be the median of $p_0,\ldots,p_n$.
Suppose that $m_{\mathbf{p}}(n)=o(1)$.
Then for every $\lambda\in(0,1)$
the backoff process with arrival rate $\lambda$ and
send sequence $\mathbf{p}$ is unstable. 
\end{restatable}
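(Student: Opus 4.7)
My plan is to derive Theorem~\ref{thm:second} as a corollary of the main result, Theorem~\ref{thm:LCED}. The excerpt makes clear that Theorem~\ref{thm:LCED} proves instability for every send sequence $\mathbf{p}$ that fails the tightly constrained LCED condition of Definition~\ref{def:LCED}, so the whole task reduces to verifying that any $\mathbf{p}$ with $m_{\mathbf{p}}(n) = o(1)$ lies outside the LCED exceptional class.

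First I would unpack the median hypothesis into a counting statement. Saying $m_{\mathbf{p}}(n) = o(1)$ is equivalent to: for every $\varepsilon > 0$, for all sufficiently large $n$, at least $\lceil (n+1)/2\rceil$ of the values $p_0,\ldots,p_n$ are at most $\varepsilon$. Equivalently, $|\{j \le n : p_j > \varepsilon\}| \le n/2 + O(1)$ for every fixed $\varepsilon$. In other words, the set of indices $j$ with $p_j \le \varepsilon$ has positive upper density, and small $p_j$-values are scattered across all scales of $j$ rather than concentrated in the tail.

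Second, I would check that this density conclusion is incompatible with LCED. The discussion surrounding Kelly--MacPhee, Aldous, and Lemma~\ref{lem:killer} in the introduction strongly suggests that Definition~\ref{def:LCED} carves out a narrow regime of send sequences whose $1/p_j$ behave essentially like a clean exponential schedule $\Theta(c^j)$ on long blocks of indices, which is precisely the regime where the concentration arguments for $|b_j(t)|$ behave well. Under any reasonable formalisation of such ``locally consistent exponential decay,'' being LCED forces the count $|\{j \le n : p_j \le \varepsilon\}|$ to be $O_{c,\varepsilon}(1) + \mathrm{tail}$, which grows only logarithmically with how small $\varepsilon$ is, and certainly not linearly in $n$. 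Pairing this with the positive-density statement from the previous paragraph yields a contradiction, so $\mathbf{p}$ is not LCED and Theorem~\ref{thm:LCED} directly gives instability for every $\lambda \in (0,1)$.

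The main obstacle I expect is that Definition~\ref{def:LCED} may be phrased using windows of indices, or in terms of the ``reliable'' indices of Definition~\ref{def:reliable}, rather than by a clean pointwise decay bound. In that case, the counting argument in the second step has to be calibrated: given a window-based LCED condition, I need to locate, for each candidate window, enough indices $j$ with $p_j$ below the required threshold to violate the window's regularity requirement, and this should follow straightforwardly from the positive density of $\{j \le n : p_j \le \varepsilon\}$ together with a union bound over windows. Apart from tailoring this counting step to the precise definition of LCED, no further engagement with the joint distribution or the new domination technique of Lemma~\ref{lem:simple-newdombinoms} is needed, since that work is already absorbed into Theorem~\ref{thm:LCED}.
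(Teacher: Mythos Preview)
Your overall strategy matches the paper's exactly: deduce Theorem~\ref{thm:second} from Theorem~\ref{thm:LCED} by showing that the median hypothesis rules out $\mathbf{p}$ being LCED, and your unpacking of $m_{\mathbf{p}}(n)=o(1)$ as a counting statement is the right first move.

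However, your second step is speculation about what LCED might mean, and that speculation is off-target. The actual Definition~\ref{def:LCED}(\ref{LCED:largelyconstant}) is already a density statement, not anything about exponential windows or logarithmic growth: it says that for every $\eta>0$ there exists $c>0$ with $|\{j\le n : p_j > c\}| \ge (1-\eta)n$ for infinitely many $n$. Taking $\eta = 1/2$ gives a $c>0$ such that for infinitely many $n$ at least half of $p_0,\ldots,p_n$ exceed $c$, i.e.\ $m_{\mathbf{p}}(n) > c$ infinitely often, which directly contradicts $m_{\mathbf{p}}(n)=o(1)$. That one line is the paper's entire proof. So your plan is correct, but the ``main obstacle'' paragraph and the talk of $O_{c,\varepsilon}(1)$ counts, logarithmic growth, and union bounds over windows should all be discarded; once you read the definition, item~(\ref{LCED:largelyconstant}) with $\eta=1/2$ is literally the negation of your counting reformulation of the hypothesis.
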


Of course, there is nothing very special about the median. The same would be true of sequences for which any centile is $o(1)$. At this point, we are ready to describe the extreme cases that elude our new proof technique, and
to state our main result, which shows instability except in these extreme cases. The extreme cases have the property that  the send sequence is almost entirely constant, with occasional exponential (but not super-exponential) waiting times thrown in.

\begin{restatable}{definition}{defLCED}
\label{def:LCED}
A send sequence $\mathbf{p}$ is \emph{LCED} 
(``largely constant with exponential decay'')
if it satisfies the following properties:
\begin{enumerate}[(i)]\setlength{\itemsep}{1pt}\setlength{\parskip}{0pt}\setlength{\parsep}{0pt}

\item \label{LCED:largelyconstant}
{\bf ``Largely constant'':}
For all $\eta > 0$, there exists $c>0$ such that for infinitely many $n$,
$        |\{j \le n \colon p_j > c\}| \geq (1-\eta)n$.
 
\item \label{LCED:exponentialdecay}
{\bf ``with exponential decay'':}
$\mathbf{p}$ has an infinite subsequence $(p_{\ell_1},p_{\ell_2},\dots)$ which 
satisfies $\log (1/p_{\ell_x}) = \Theta(\ell_x)$ as $x\to\infty$.
                
\item \label{LCED:notsuperexponential}
{\bf ``(but without super-exponential decay)'':}
$\log (1/p_j) = O(j)$ as $j\to\infty$.
\end{enumerate}
    \end{restatable}

As an illustrative example of item (i) taking $\eta = 999/1000$, it implies that as you progress along the send sequence $p_1,p_2,\dots$, infinitely often, you will notice that all but $1-\eta = 0.1\%$ of the $p_j$'s you have seen so far are bounded below by some constant $c$. The same holds for values of $\eta$ that are closer to 1, but $c$ will be correspondingly smaller.
Obviously, this is also true for values of~$\eta$ that are closer to~$1$ but the corresponding constant $c$ would be smaller.
Item (ii) means that there is an infinite subsequence of $j$'s where the expected waiting times (after $j$ failures) is exponentially long. Item (iii) just means that expected waiting times are not more than exponentially long.

With this definition, we can state our main theorem, which proves Aldous's  conjecture for all   sequences except 
LCED sequences
and extends all previously-known results. 

\begin{restatable}{theorem}{thmLCED}
\label{thm:LCED}
Let $\mathbf{p}$ be a send sequence which is not LCED.
Then for every $\lambda\in (0,1)$ the backoff process with arrival
rate $\lambda$ and send sequence $\mathbf{p}$ is   unstable.
\end{restatable}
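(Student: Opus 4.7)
I would prove Theorem~\ref{thm:LCED} by splitting into three cases according to which condition of Definition~\ref{def:LCED} fails first. If~(iii) fails, some subsequence has $\log(1/p_{j_k}) = \omega(j_k)$ (super-exponential decay), and instability follows directly from Lemma~\ref{lem:killer}. If~(iii) holds but~(ii) fails, then $\log(1/p_j) = O(j)$ while no subsequence has $\log(1/p_{\ell_x}) = \Theta(\ell_x)$; these combine to force $\log(1/p_j) = o(j)$, hence $1/p_j = o(c^j)$ for every $c > 1$, and instability follows from the Kelly--MacPhee result (Corollary~\ref{cor:backoff-kmp}). The substantive case is when~(ii) and~(iii) both hold but~(i) fails: there exists $\eta > 0$ such that for every $c > 0$, eventually at least an $\eta$-fraction of $p_0,\ldots,p_n$ lies in $(0,c]$, while $1/p_j \le e^{Cj}$ for all $j$ and $1/p_{\ell_x} \ge e^{c\ell_x}$ infinitely often.

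For this third case I would argue by contradiction, assuming the backoff process is stable so that the per-step success probability equals~$\lambda$ in the long run. The plan has two main steps: first, show that $\E[S(t)]$ must be unbounded as $t \to \infty$; second, apply the domination Lemma~\ref{lem:simple-newdombinoms} to turn this into a lower bound on $\pr(S(t)\ge 2)$ that contradicts stability.

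For the first step, failure of~(i) supplies, for every $c > 0$, a density-$\eta$ set of stages $j \le n$ with $p_j \le c$, i.e.\ with expected waiting time at least $1/c$. A flow-conservation argument (balancing the inflow into stage~$j$, which arises from colliding sends among earlier-stage messages, against the outflow $p_j \E|b_j(t)|$) shows that once stage~$j$ is reached with modest probability, it contributes order $\lambda$ to $\E[S(t)] = \sum_j p_j \E|b_j(t)|$. Summing this contribution over a positive-density collection of reached stages forces $\E[S(t)] \to \infty$. The main obstacle lives here: one must argue that genuinely many of these low-$p_j$ stages are actually reached by sustained flow, since nothing in failure of~(i) alone prevents the process from emptying before it can populate deep stages. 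I would use the bound $1/p_j \le e^{Cj}$ from~(iii) (so that no single stage can fully absorb the flow) together with a careful induction on stage index to show that inflow into stage~$j$ remains of order $\lambda$ for a positive-density set of $j$'s.

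For the second step, once $\E[S(t)] \to \infty$, Lemma~\ref{lem:simple-newdombinoms} provides independent Poisson variables $\poisson(\mu_j)$ dominated by $|b_j(t)|$ with $\mu := \sum_j p_j \mu_j \to \infty$. By independence, $S(t) = \sum_j S_j(t)$ stochastically dominates $\poisson(\mu)$, so $\pr(S(t) \le 1) \le (1+\mu)e^{-\mu} \to 0$. This contradicts the stable-limit condition $\pr(S(t)=1) \to \lambda > 0$, completing the proof.
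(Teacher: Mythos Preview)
Your three-case split and the handling of the two ``easy'' cases match the paper exactly: failure of~(iii) goes to Lemma~\ref{lem:killer}, and (iii) holding while (ii) fails forces $\log(1/p_j)=o(j)$ and hence Corollary~\ref{cor:backoff-kmp}. The substantive case, however, has a real gap.

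The paper does \emph{not} argue by contradiction from stability, and in particular it never establishes $\E[S(t)]\to\infty$ as an intermediate step. Instead, when (i) fails and (iii) holds, it observes that the modified sequence $\mathbf{p}'$ (with $p_0'=1$) is $(\lambda p_0,\eta,\nu)$-suitable for an appropriate $\nu$, and then invokes Corollary~\ref{cor:last}/Theorem~\ref{thm:technical}. That theorem is proved by an \emph{inductive jammedness argument}: one waits for the initialisation event $\Einit^T(t_0)$, then uses Lemma~\ref{lem:newdombinoms} at each successive $\tau$ to show that jammedness at step~$\tau$ implies jammedness at step~$\tau+1$ with failure probability $O(1/\tau^2)$ (Lemma~\ref{lem:toget}), and sums these to conclude the process stays jammed forever with positive probability (Lemma~\ref{lem:main}), hence is transient.

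Your Step~2 misreads Lemma~\ref{lem:simple-newdombinoms}. That lemma does not say ``if $\E[S(t)]$ is large then the bin occupancies dominate independent Poissons''; its conclusion is a disjunction: \emph{either} the process has already failed to be $t_0$-jammed for $\tau$, \emph{or} the Poisson domination holds. To extract the Poisson domination you must first know the process is jammed up to time $t_0+\tau$, and establishing that is precisely the inductive work done in Lemmas~\ref{lem:toget} and~\ref{lem:main} using the block construction of Section~\ref{sec:jammedness}. You cannot get it from $\E[S(t)]\to\infty$ alone. Moreover, your Step~1 is itself incomplete: under an assumption of positive recurrence the process has a stationary distribution, and there is no reason the stationary value of $\E[S(t)]$ must be infinite; your flow-conservation sketch would need to show that the per-bin contribution $p_j\E|b_j|$ does not decay summably in~$j$, and the obstacle you flag (that nothing forces deep stages to be reached) is exactly the difficulty the paper's block/jammedness machinery is built to overcome. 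Finally, note that the paper's Case~2 does not use condition~(ii) at all --- only the failure of~(i) together with~(iii) is needed to verify suitability.
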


As we discuss in Section~\ref{sec:intro-future}, LCED sequences can exhibit qualitatively different behaviour from non-LCED sequences, with arbitrarily long ``quiet patches'' during which almost every message that sends is successful. Lemma~\ref{lem:simple-newdombinoms}, our domination of the joint distribution $(|b_1(t)|, |b_2(t)|, \dots)$ by independent Poisson variables, is actually false during these ``quiet patches'', and so proving Aldous's conjecture   for LCED sequences will require new ideas.  On a conceptual level, the ``quiet patches'' exhibited by some LCED sequences are essentially the only remaining obstacle to a full proof of  Aldous's conjecture.

Nevertheless, as we show in this paper, our new domination is sufficient 
to cover all send sequences except the LCED sequences. Thus Lemma~\ref{lem:simple-newdombinoms} makes substantial progress on the long-standing conjecture, the first progress in many years, resulting in strong new instability results such as Theorems~\ref{thm:mono} and~\ref{thm:second}.

The remainder of the introduction is structured as follows. 
Section~\ref{sec:formal-defs} gives the formal definition of
a backoff process. Sections~\ref{sec:explainKM} and~\ref{sec:theproofsketch} sketch the proof of Theorem~\ref{thm:LCED}, with Section~\ref{sec:explainKM} giving an overview of the relevant existing proof techniques and Section~\ref{sec:theproofsketch} explaining our novel ideas. Section~\ref{sec:intro-future} 
discusses the remaining obstacles to proving Conjecture~\ref{conj:backoff}. 

\subsection{Formal definitions}\label{sec:formal-defs}

We say that a stochastic process is \textit{stable} if it is positive recurrent, and \textit{unstable} otherwise (i.e.\ if it is null recurrent or transient). A \emph{backoff process} is a backoff protocol
in the queue-free model. 

Informally, a backoff process is a discrete-time Markov chain associated with an arrival rate $\lambda \in (0,1)$ and a send sequence $\mathbf{p} = p_0,p_1,p_2,\ldots$ of
real numbers in the range $(0,1]$. Following Aldous~\cite{Aldous}, we identify processors and messages, and we think of these as balls moving through a sequence of bins. Each time a message sends, if no other message sends at the same time step, it leaves the system; otherwise, it moves to the next bin. Thus at time $t$, the $j$'th bin contains all messages which have sent $j$ times without getting through (these sends occurred at time steps up to and including time $t$). The system then evolves as follows at a time step $t$. First, new messages are added to bin 0 according to a Poisson distribution with rate $\lambda$. Second, for all $j \ge 0$, each message in bin $j$ sends independently with probability $p_j$. Third, if exactly one message sends then it leaves the system, and otherwise all messages that sent from any bin~$j$
move to the next bin, bin~$j+1$.

\begin{remark}
There is no need to consider arrival rates $\lambda \geq 1$ because 
it is already known that backoff processes with arrival rate $\lambda\geq 1$
are unstable~\cite{GJKP}. We also don't allow $p_j=0$ since that would trivially cause transience (hence, instability).
\end{remark}

\noindent
\textbf{Formal definition of backoff processes.}
A \emph{backoff process} with \emph{arrival rate}~$\lambda \in (0,1)$ and \emph{send sequence}~$\mathbf{p}=p_0,p_1,p_2,\dots \in (0,1]$ is
a stochastic process $X$
defined as follows.
 \emph{Time steps}~$t$ are positive integers.
 \emph{Bins} $j$ are non-negative   integers. 
There is an infinite set of 
 \emph{balls}. 
We now define the set $b_j^X(t)$, which will be the set of balls in bin~$j$ just after (all parts of) the $t$'th step.
  Initially, all bins are empty, so for all  non-negative integers~$j$,
$b_j^X(0)=\emptyset$.
  For any positive integer~$t$, the $t$'th step of~$X$ involves
(i) step initialisation (including birth), 
(ii) sending, and (iii) adjusting the bins.   
Step~$t$ proceeds as follows.
  \begin{itemize}

\item 
Part (i) of step $t$ (step initialisation, including birth):
An integer $n_t$ is chosen independently from a Poisson distribution
with mean~$\lambda$. This is the number of \emph{newborns} at time~$t$.
The set $b^{\prime X}_0(t)$ contains the balls in $b^X_0(t-1)$ together with
$n_t$ new balls which are \textit{born} at time~$t$. 
For each $j\geq 1$ we define $b^{\prime X}_j(t)  = b_j^X(t-1)$.

\item Part (ii) of step $t$ (sending): For all $j\geq 0$, all balls in $b^{\prime X}_j(t)$ 
send independently with probability $p_j$. 
We use $\send^X(t)$ for the set of balls that send at time~$t$.

\item Part (iii) of step $t$ (adjusting the bins): 
\begin{itemize}
\item  If $|\send^X(t)|\leq1$ then 
any ball in $\send^X(t)$ 
\emph{escapes} 
so for all $j\geq 0$ we define
$b_j^X(t) =  
b^{\prime X}_j(t) \setminus \send^X(t)$.

\item Otherwise, no balls 
escape but balls that send
move to the next bin, so we define
$b_0^X(t) = b_0^{\prime X}(t)\setminus \send^X(t)$ and, 
for all $j\geq 1$,
$b_j^X(t) = (b_{j-1}^{\prime X}(t) \cap \send^X(t) ) \cup
(b_j^{\prime X}(t) \setminus \send^X(t))$.
 \end{itemize}

\end{itemize}

Finally, we define $\balls^X(t) = \cup_j b_j^X(t)$.

\subsection{Technical context}\label{sec:intro-context}\label{sec:explainKM}
 
We first formally state the result alluded to in Section~\ref{sec:intro} which proves instability for backoff protocols whose send sequences decay super-exponentially.
\begin{restatable}{lemma}{lemkiller}\label{lem:killer} 
 Let $X$ be a backoff process with arrival rate~$\lambda\in (0,1)$ and send sequence 
 $\mathbf{p}=p_0,p_1,\ldots$. 
 If, for infinitely many~$j$, 
 $p_j \le (\lambda p_0/2)^j$,   then $X$ is  unstable.
 \end{restatable}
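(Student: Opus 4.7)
The plan is to show that, with positive probability, the first ball born has infinite expected lifetime. Since the system is non-empty throughout that lifetime, the expected return time of the chain to the empty state is then infinite, so the Markov chain is not positive recurrent. The main ingredient is an upper bound on the per-send escape probability. Fix any ball $B$ and any time $t$ at which $B$ is \emph{not} a newborn (that is, $B$ was born strictly before $t$). Conditional on $B$ sending at time $t$, $B$ escapes only if no other ball sends, and in particular no newborn does. The number of newborn sends at time $t$ is $\poisson(\lambda p_0)$, independent of all prior randomness and of $B$'s decision, so
\[
\Pr(B \text{ escapes at } t \mid B \text{ sends at } t) \;\le\; q := e^{-\lambda p_0}.
\]
The exceptional case where $B$ sends while being a newborn can occur at most once per ball, at its birth step; a short truncated-Poisson calculation handles this and yields an escape probability at most some explicit $q_1 < 1$ depending only on $\lambda$ and $p_0$.

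Let $B$ be the first ball born and $L_1$ its lifetime. On the event $\{n_1 \ge 1\}$, $B$ is born at time $1$ and survives its first send with probability at least $1-q_1 > 0$; thereafter $B$ is no longer a newborn, so every subsequent send advances $B$ to the next bin (rather than escaping) with probability at least $1-q$. Hence $\Pr(B \text{ reaches bin } j \mid n_1 \ge 1) \ge (1-q_1)(1-q)^{j-1}$, and conditional on reaching bin $j$ the ball waits a $\text{Geom}(p_j)$ number of further steps before its next send. Therefore
\[
\E(L_1 \mid n_1 \ge 1) \;\ge\; (1-q_1)(1-q)^{j-1}/p_j \qquad \text{for every } j \ge 1.
\]
Along the subsequence supplied by the hypothesis with $p_j \le (\lambda p_0/2)^j$, this lower bound is proportional to $\bigl(2(1-q)/(\lambda p_0)\bigr)^{j-1}$. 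The elementary inequality $1-e^{-x} > x/2$ for $x \in (0,1]$ applies (since $\lambda p_0 < 1$), so the base exceeds $1$ and the lower bound tends to infinity. Thus $\E(L_1 \mid n_1 \ge 1) = \infty$.

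Because $B$ is in the system at every step of its lifetime, the first return time $\tau$ of the chain to the empty state satisfies $\tau \ge L_1$ on $\{n_1 \ge 1\}$, an event of probability $1-e^{-\lambda} > 0$. Consequently $\E(\tau) = \infty$, contradicting positive recurrence and establishing that $X$ is unstable. The only mildly delicate point is the first send of $B$ when $B$ itself is a newborn, where one must verify the bound $q_1 < 1$ via a truncated-Poisson calculation; crucially the divergence in the main estimate depends only on the uniform non-newborn bound $q$, not on $q_1$.
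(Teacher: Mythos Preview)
Your argument is correct and follows essentially the same route as the paper: bound a ball's per-send escape probability above by the probability that no newborn sends, namely $q=e^{-\lambda p_0}$, then use $1-q=1-e^{-\lambda p_0}>\lambda p_0/2$ to show the expected time to escape is infinite along the hypothesised subsequence, forcing the expected return time to the empty state to be infinite. The only cosmetic difference is that the paper waits until some ball first reaches bin~1 and tracks that ball, thereby sidestepping your $q_1$ edge case entirely, and then computes the exact expectation $\tfrac{1-r}{r}\sum_{j\ge 1} r^j/p_j$ with $r=1-q$ rather than your single-term lower bound; but the divergence criterion and the key inequality are identical.
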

We defer the proof to Section~\ref{sec:prelim}, but it is simple. Essentially, we dominate the expected time for a newborn ball to leave the process below under the assumption that a sending ball always leaves the process unless another ball is born at the same time. Under the assumptions of Lemma~\ref{lem:killer}, this is infinite. 

We next describe the results of Kelly and MacPhee~\cite{kelly-macphee} and Aldous~\cite{Aldous} in more detail than the previous section. This will allow us in Section~\ref{intro:technical-contributions} to clearly identify the regimes in which instability is not known, and to clearly highlight the novel parts of our arguments.

We first introduce a key notion from Aldous~\cite{Aldous}. Informally, an \emph{externally-jammed process} is a backoff process in which balls never leave; thus if  a single ball sends at a given time step, it moves to the next bin as normal. (See Section~\ref{sec:externally-jammed} for a formal definition.) Unlike backoff processes, an externally-jammed process starts in its stationary distribution; thus for all $j \ge 0$, 
the size of $b_j(0)$ is drawn from a Poisson distribution with mean $\lambda/p_j$. There is a natural coupling between a backoff process $X$ and an externally-jammed process $Y$ such that $|b_j^X(t)| \le |b_j^Y(t)|$ for all $j$ and $t$ (see Observation~\ref{obs:ext-jammed-useful}); thus an externally-jammed process can be used to dominate (from above) the number of balls in a backoff process.

As discussed earlier, Kelly and MacPhee~\cite{kelly-macphee} gives a necessary and sufficient condition for infinitely many messages to get through. In our context, the relevant case of their result can be stated as follows. Given a send sequence $\mathbf{p}$, let $W_0,W_1,\dots$ be independent geometric variables such that $W_j$ has parameter $p_j$ for all $j$. Then for all $\tau \ge 0$, we define
\[
    \mu_\tau(\mathbf{p}) = \sum_{j=0}^\infty \pr\Big(\sum_{k=0}^j W_k \le \tau\Big).
\]
Thus if a ball is born at time $t$ in an externally-jammed process,
$\mu_\tau(\mathbf{p})$ is the expected number of times that ball sends up to time $t+\tau$.

\begin{theorem}[{\cite[Theorem~3.10]{kelly-macphee}}]\label{thm:full-kmp}
    Let $\mathbf{p}$ be a send sequence, and suppose that for all $\lambda \in (0,1)$,
    \[
        \sum_{\tau=0}^\infty \mu_\tau(\mathbf{p}) e^{-\lambda \mu_\tau(\mathbf{p})} < \infty.
    \]
Then for all $\lambda\in(0,1)$, the backoff process $X$ with 
arrival rate~$\lambda$ and send sequence $\mathbf{p}$ is unstable. Moreover, with probability $1$, only finitely many balls leave $X$.
\end{theorem}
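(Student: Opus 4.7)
The plan is to prove the stronger ``moreover'' claim directly, from which instability is immediate: if only finitely many balls ever leave $X$ almost surely, then, since balls continue to arrive at a constant positive Poisson rate, the number of balls in $X$ must tend to infinity almost surely, precluding positive recurrence. To establish finite total escapes I would apply a first-moment argument, bounding $\sum_{t=1}^\infty \Pr(|\send^X(t)| = 1)$ and invoking the Borel--Cantelli lemma.

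First I would couple $X$ with the externally-jammed process $Y$ started from empty (the natural coupling sharing Poisson births and geometric waits, as in Observation~\ref{obs:ext-jammed-useful}), so that $\send^X(t) \subseteq \send^Y(t)$ for every $t$. A Poissonization computation would then show that $|\send^Y(t)|$ is Poisson-distributed with mean $\lambda \mu_{t-1}(\mathbf{p})$: a ball born at time $s \le t$ sends at $t$ in $Y$ with a probability depending only on its own geometric waits $W_0, W_1, \dots$, these events are conditionally independent given births, and summing the per-ball probabilities over birth times $s$ matches the definition of $\mu_{t-1}(\mathbf{p})$.

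The crux of the argument is to establish that
\[
    \Pr(|\send^X(t)| = 1) \;\le\; C\,\lambda \mu_{t-1}(\mathbf{p})\, e^{-\lambda \mu_{t-1}(\mathbf{p})}
\]
for some absolute constant $C > 0$; the hypothesis then directly gives $\sum_t \Pr(|\send^X(t)| = 1) < \infty$. Intuitively, $|\send^X(t)| = 1$ is a single-event Poisson-style occurrence within the Poisson-mean-$\lambda \mu_{t-1}$ set $\send^Y(t)$. To make this precise I would condition on the identity of the tagged sender and exploit the independence between disjoint time-birth cohorts of the Poisson arrival process: the other balls and their geometric waits are independent of the tagged ball, and their sends at time $t$ in $Y$ form an independent Poisson-like process whose empty event has probability $e^{-\lambda \mu_{t-1}(\mathbf{p})}$.

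The main obstacle will be that the containment $\send^X(t) \subseteq \send^Y(t)$ runs the wrong way for an upper bound on $\Pr(|\send^X(t)| = 1)$: fewer sends in $X$ make single-sender events \emph{more} likely, not less. The balls missing from $X$ relative to $Y$ are precisely previous escapees, so any estimate on escapes at time $t$ is self-referential in earlier escapes. I would attempt to circumvent this by a bootstrapping argument: let $F(T) = \sum_{t=1}^T \Pr(|\send^X(t)| = 1)$ and, via a union bound over which previously-escaped balls are ``missing'' from $X$ at time $t$, derive a recursive inequality of the shape $F(T) \le C_1 \sum_{t=1}^T \mu_{t-1}(\mathbf{p}) e^{-\lambda \mu_{t-1}(\mathbf{p})} + C_2$, forcing $F(\infty) < \infty$ from the hypothesis. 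Once this first-moment bound is in hand, Borel--Cantelli yields the moreover statement and hence instability as in the first paragraph.
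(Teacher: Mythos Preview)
The paper does not prove this theorem at all: Theorem~\ref{thm:full-kmp} is quoted from Kelly and MacPhee~\cite{kelly-macphee} and used as a black box (only its corollary, Corollary~\ref{cor:backoff-kmp}, is derived here). So there is no ``paper's own proof'' to compare against; what follows is an assessment of your sketch on its merits.

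Your high-level plan (Borel--Cantelli on the single-send events, via a coupling to an externally-jammed process in which $|\send^Y(t)|$ is Poisson with mean $\lambda\mu_{t-1}(\mathbf{p})$) is the right framework, and you correctly put your finger on the real difficulty: the containment $\send^X(t)\subseteq\send^Y(t)$ goes the wrong way, because each previous escape removes a potential collider. However, the specific target inequality you aim for,
\[
\Pr\big(|\send^X(t)|=1\big)\ \le\ C\,\lambda\mu_{t-1}(\mathbf{p})\,e^{-\lambda\mu_{t-1}(\mathbf{p})}
\]
with an absolute constant $C$, is not true in general and cannot be the endpoint of the argument. If, hypothetically, almost all of the roughly $\lambda\mu_{t-1}$ expected senders in $Y$ at time $t$ had already escaped from $X$, then $|\send^X(t)|=1$ would have probability of order one even though $\mu_{t-1}e^{-\lambda\mu_{t-1}}$ is tiny. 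Of course the hypothesis is meant to rule this scenario out, but that is exactly what has to be \emph{proved}; you cannot assume it in the form of a uniform constant $C$. Relatedly, your proposed ``recursive inequality'' $F(T)\le C_1\sum_{t\le T}\mu_{t-1}e^{-\lambda\mu_{t-1}}+C_2$ is a statement about deterministic sums of probabilities, so there is nothing for a recursion to act on; a union bound over which previously-escaped balls are absent produces factors that grow with the (random) number of escapees, not an additive constant $C_2$.

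There is a second, more structural gap: you only invoke the hypothesis at the arrival rate $\lambda$ itself, but the theorem assumes $\sum_\tau \mu_\tau e^{-\lambda'\mu_\tau}<\infty$ for \emph{every} $\lambda'\in(0,1)$. This slack is exactly what absorbs the blow-up from the escaped balls. The standard route is: for any fixed $k$, bound $\Pr(|\send^Y(t)|\le k)\le C_k\,e^{-(\lambda/2)\mu_{t-1}}$ and use the hypothesis at $\lambda/2$ to conclude $\sum_t \Pr(|\send^Y(t)|\le k)<\infty$, so that almost surely $|\send^Y(t)|>k$ for all large $t$. One then couples the escape count to the $Y$-process via the deterministic domination ``the $k$-th escape time $t$ must satisfy $|\send^Y(t)|\le k$'' and turns the collection of finiteness statements (one for each $k$) into $N(\infty)<\infty$ almost surely. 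Your sketch contains neither the use of a smaller $\lambda'$ nor this per-$k$ layer of the argument, and without them the bootstrapping step cannot close.
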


The following corollary is proved in Section~\ref{sec:additional}.

\begin{restatable}{corollary}{backoffkmp}\label{cor:backoff-kmp}
Let $\mathbf{p}$ 
be a send sequence such that 
$\log(1/p_j) = o(j)$ as $j\to\infty$. 
Then for all $\lambda\in(0,1)$, the backoff process $X$ with
arrival rate~$\lambda$ and send sequence~$\mathbf{p}$
is unstable.
\end{restatable}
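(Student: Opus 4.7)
The plan is to deduce Corollary~\ref{cor:backoff-kmp} directly from Theorem~\ref{thm:full-kmp} by verifying its hypothesis: for every $\lambda \in (0,1)$, it suffices to show that $\sum_{\tau=0}^\infty \mu_\tau(\mathbf{p}) e^{-\lambda \mu_\tau(\mathbf{p})} < \infty$. The underlying intuition is that when $1/p_j$ grows only subexponentially in $j$, the expected sending times $\E[W_0+\cdots+W_j] = \sum_{k=0}^j 1/p_k$ stay small enough that $\mu_\tau(\mathbf{p})$ grows at least like $c\log\tau$ for an arbitrarily large prefactor $c$, which is enough for the series to converge.

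First I would fix $\lambda\in(0,1)$ and choose a small $\epsilon > 0$ (to be tuned at the end). The hypothesis $\log(1/p_j) = o(j)$ gives $1/p_j \le e^{\epsilon j}$ for all $j$ above some threshold $j_0(\epsilon)$. Summing the resulting geometric series yields $\sum_{k=0}^j 1/p_k = O_\epsilon(e^{\epsilon j})$, and in particular $\sum_{k=0}^j 1/p_k \le \tau/2$ for every $j \le \log\tau/(2\epsilon)$ and every $\tau$ large enough.

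Next I would lower-bound $\mu_\tau(\mathbf{p})$. Since $W_k$ is geometric with mean $1/p_k$, applying Markov's inequality to $W_0+\cdots+W_j$ gives $\Pr(W_0+\cdots+W_j \le \tau) \ge 1/2$ in the range above. Summing these contributions over $j$ yields $\mu_\tau(\mathbf{p}) \ge \log\tau/(5\epsilon)$ for $\tau$ large (the exact constant is unimportant; any $\Omega(\log\tau/\epsilon)$ bound will do).

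Finally, the function $x\mapsto xe^{-\lambda x}$ is decreasing on $[1/\lambda,\infty)$, and the lower bound on $\mu_\tau(\mathbf{p})$ eventually exceeds $1/\lambda$, so for $\tau$ large, $\mu_\tau(\mathbf{p})e^{-\lambda \mu_\tau(\mathbf{p})} \le (\log\tau/(5\epsilon))\,\tau^{-\lambda/(5\epsilon)}$. Choosing $\epsilon$ small enough that $\lambda/(5\epsilon) > 1$ (say $\epsilon < \lambda/10$) makes this summable, and Theorem~\ref{thm:full-kmp} then supplies instability. There is no substantive obstacle: the argument is essentially a direct plug-in, and the only design decision is the truncation point for $j$ in the lower bound on $\mu_\tau$, but Markov's inequality is slack enough that the bookkeeping is easy because the decay rate $\epsilon$ is a free parameter that can be driven to zero.
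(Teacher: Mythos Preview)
Your proposal is correct and follows essentially the same route as the paper: both arguments use the subexponential growth of $1/p_j$ to bound $\sum_{k\le j}1/p_k$ by a geometric sum, invoke Markov's inequality to get $\Pr(\sum_{k\le j} W_k \le \tau)\ge 1/2$ whenever $\sum_{k\le j}1/p_k\le \tau/2$, and thereby deduce $\mu_\tau(\mathbf{p})=\omega(\log\tau)$, which suffices for the series in Theorem~\ref{thm:full-kmp} to converge. The paper packages the cutoff as $M(\tau)=\max\{j:\sum_{k\le j}1/p_k\le\tau/2\}$ and proves $M(\tau)=\omega(\log\tau)$ before specialising to a given $\lambda$, whereas you fix $\lambda$ and $\epsilon$ first, but this is a cosmetic difference.
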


The   
result that is proved
in Aldous~\cite{Aldous} is the instability of binary exponential backoff, in which $p_j = 2^{-j}$ for all $j$, for all arrival rates $\lambda > 0$.
Aldous's paper says  ``without checking the details, [he] believe[s] the argument could be modified to show instability for [all backoff protocols]''.
Unfortunately, this turns out not to be accurate. However, the proof does generalise in a natural way to cover a broader class of backoff protocols. Our own result extends this much further 
building on new ideas, but it is  nevertheless based on a similar underlying framework, so we now discuss Aldous's proof in more detail.

A key notion is that of \textit{noise}. Informally, given a state $\mathbf{x}(t)$ of a backoff process at time $t$, the noise 
(Definition~\ref{def:noise})
of $\mathbf{x}(t)$ is given by
\[
    f(\mathbf{x}(t)) = \lambda p_0 + \sum_{j=0}^\infty p_j|b_j^X(t)|.
\] 

Thus, the noise of a backoff process at time $t$ is the expected number of sends at time $t+1$ conditioned on the state at time~$t$.
Unsurprisingly, while the noise is large, multiple balls are very likely to send at each time step  so balls leave the process very slowly (see Lemma~\ref{lem:one_onesided}).

A slightly generalised version of Aldous's proof works as follows. The first step is to choose $j_0$ to be a suitably large integer, and wait until a time $t_0$ at which bins $1,\dots,j_0$ are all ``full'' in the sense that $|b_j(t_0)| \ge c\lambda/p_j$ for 
some constant~$c$ and
all $j \le j_0$. (Such a time $t_0$ exists with probability 1. Observe that in the stationary distribution of an externally-jammed process, bin $j$ contains $\lambda/p_j$ balls in expectation.)
We then define  $\tau_1,\tau_2,\dots$ with $\tau_\ell = C\sum_{j=0}^{j_0+\ell} (1/p_j)$ for some constant~$C$; thus $\tau_\ell$ is $C$ times the expected number of time steps required for a newborn ball to reach bin $b_{j_0+\ell}$ in a jammed channel. The key step of the proof is then to  argue that with suitably low failure probability, for all $\ell$ and all $t$ satisfying $t_0+\tau_\ell \le t \le t_0 + \tau_{\ell+1}$, all bins $j$ with $(j_0+\ell)/10 \le j \le j_0+\ell$ satisfy $|b_j(t)| \ge \zeta\lambda/p_j$ for some constant $\zeta$. In other words, we prove that with suitably low failure probability, there is a slowly-advancing frontier of bins which are always full; in particular the noise increases to infinity, and the process is transient and hence unstable.

In order to accomplish this key step of the proof, the argument is split into two cases. If $t$ is close to $t_0$, there is a simple argument based on the idea that bins $1$ through $j_0$ were full at time $t_0$ and, since $j_0$ is large, they have not yet had time to fully empty. For the second case, suppose that $t$ is significantly larger than $t_0$, and 
the goal is (for example) to show that bin $j$ is very likely to be full at time $t = t_0+\tau_\ell$. By structuring the events carefully, one may assume that the process still has large noise up to time $t-1$, so with high probability not many balls leave the process in time steps $\{t_0+1,\dots,t\}$. At this point, Aldous uses a time-reversal argument to show that, in the externally-jammed process, bin $j$ fills with balls that are born after time~$t_0$
during an interval of $\tau_\ell$ time steps.  Under a natural coupling, these balls follow the same trajectory in both the backoff process and the externally-jammed process unless they leave the backoff process; thus, by a union bound, with high probability most of these balls are present in bin $j$ at time $t$ as required. Aldous then applies a union bound over all bins $j$ with $(j_0+\ell)/10 \le j \le j_0+\ell$; crucially, the probability bounds for each individual bin $j$ are strong enough that he does not need to engage with the 
more complicated joint distribution of the 
contents of the bins.

Since Aldous works only with binary exponential backoff in~\cite{Aldous}, he takes $\tau_\ell \sim 2^{j_0+\ell}$; his conjecture that his proof can be generalised to all backoff protocols is based on the idea that the definition of $\tau_\ell$ could be modified, which gives rise to the more general argument above. Unfortunately, there are very broad classes of backoff protocols to which this generalisation cannot apply. We now define the collection of send sequences which this modified version of Aldous's proof could plausibly handle.

\begin{definition}\label{def:reliable}
    A send sequence $\mathbf{p}$ is \emph{reliable} if it has the following property. Let $\lambda > 0$, and let $X$ be a backoff process with arrival rate $\lambda$ and send sequence $\mathbf{p}$. Then with positive probability, there exists $\zeta > 0$, and times $t_0,t_1,t_2,\dots$ and collections $\calB_0,\calB_1,\calB_2, \dots$ of bins such that:
    \begin{itemize}
        \item for all $i$, for all $t$ satisfying $t_i \le t \le t_{i+1}$, for all $j \in \calB_i$, we have $|b_j(t)| \ge \zeta \lambda/p_i$.
        \item $|\calB_i|$ is an increasing sequence with $|\calB_i| \to \infty$ as $i\to\infty$.
    \end{itemize}
\end{definition}

Indeed, if Aldous's proof works for a send sequence $\mathbf{p}$, then it demonstrates that $\mathbf{p}$ is reliable: Aldous takes $\zeta = 1/2$, $t_i = \tau_i$ for all $i$, and $\calB_i = \{(j_0+i)/10,\dots,j_0+i\}$ for all $i$. In short, a reliable protocol is one in which, after some ``startup time'' $t_0$, at all times one can point to a large collection of bins which will reliably \textit{all} be full enough to provide significant noise. 
As discussed above, it is important that they are all full --- otherwise it is necessary to delve into the complicated inter-dependencies of the bins. 

\subsection{Proof sketch}\label{sec:theproofsketch}
\label{intro:technical-contributions}

In order to describe the proof of Theorem~\ref{thm:LCED} and explain how it works on unreliable send sequences that are not covered by Aldous's proof techniques, we first set out a guiding example by defining the following family of send sequences. 

\begin{example}\label{example:guide}
Given $\rho \in (0,1)$, an increasing sequence $\mathbf{a}$ of non-negative integers with $a_0 = 0$, and a function $g\colon \mathbb{N}\to\mathbb{N}$, the \emph{$(\rho,\mathbf{a},g)$-interleaved send sequence} $\mathbf{p}=p_0,p_1,\ldots$ is given by
\[
    p_j = \begin{cases}
        \rho^j & \mbox{ if }a_{2k} \le j \le a_{2k+1}-1\mbox{ for some }k\ge 0,\\
        g(j) & \mbox{ if }a_{2k+1} \le j \le a_{2k+2}-1\mbox{ for some }k\ge 0.
    \end{cases}
\]
Thus the $(\rho,\mathbf{a},g)$-interleaved send sequence is an exponentially-decaying send sequence with base $\rho$ spliced together with a second send sequence specified by $g$, with the splices occurring at points given by $\mathbf{a}$.  

In this section, we will take $g(j) = 1/\log\log j$ and $a_k = 2^{2^k}$.
We will refer to the  $(\rho,\mathbf{a},g)$-interleaved send sequence
with this choice of~$g$ and~$\mathbf{a}$ as a
\emph{$\rho$-interleaved send sequence}. 
\end{example}

Observe that a $(\rho,\mathbf{a},g)$-interleaved send sequence  
fails to be LCED
whenever $g(j)=o(1)$.
Thus,  a $\rho$-interleaved send sequence   is not LCED.

We claim that as long as $\rho$ is sufficiently small,
a $\rho$-interleaved send sequence is not reliable and neither Theorem~\ref{thm:full-kmp} nor Lemma~\ref{lem:killer} prove instability; we expand on this claim in Section~\ref{sec:intro-future}. 
Our main technical result, which we introduce next,
will be strong enough  to show that backoff protocols with these send
sequences are unstable. In order to introduce the technical result
we first give some definitions.
 
\begin{restatable}{definition}{defCblock}\label{def:Cblock}
Given an $\eta \in (0,1)$, let $\Cblock(\eta) =   \lceil 3/\eta \rceil$.
\end{restatable}

\begin{restatable}{definition}{defpstar}\label{def:pstar}
Let 
$$p_*(\lambda,\eta,\nu) = 
\min\left\{ \frac{\lambda}{200},\, 
\frac{\lambda \eta}
{1800 \Cblock(\eta)^2 \log(1/\nu)} \right\} 
.$$
\end{restatable}

\begin{restatable}{definition}{defsuitable}\label{def:suitable}
Fix $\lambda$, $\eta$ and $\nu$ in $(0,1)$. 
A send sequence $\mathbf{p} = p_0,p_1,\ldots$
is \emph{$(\lambda,\eta,\nu)$-suitable} if 
$p_0=1$ and
there is a positive integer $n_0$
such that  for all $n\geq n_0$,
\begin{itemize}
    \item  $|\{ j \in [n] \mid p_j \leq p_*(\lambda,\eta,\nu)\}| > \eta n$,
    and
    \item  
$\nu^n < p_n$.
\end{itemize}
\end{restatable} 

\begin{theorem}\label{thm:simple-technical}
    Fix $\lambda$, $\eta$ and $\nu$ in $(0,1)$. Let $\mathbf{p}$ be a $(\lambda,\eta,\nu)$-suitable send sequence. Let $X$ be a
    backoff process with arrival rate $\lambda$ 
    and send sequence~$\mathbf{p}$.
    Then $X$ is transient, and hence unstable.
\end{theorem}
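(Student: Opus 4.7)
The plan is to prove transience by showing that, with positive probability, the noise $f(\mathbf{x}(t)) = \lambda p_0 + \sum_j p_j|b_j^X(t)|$ grows without bound. Since the probability that exactly one ball sends in a given step is small when the noise is large, whereas new balls arrive at rate $\lambda > 0$, divergent noise forces $|\balls^X(t)|\to\infty$ on a positive-probability event, which yields transience. The engine of the argument is Lemma~\ref{lem:simple-newdombinoms}: whenever the noise has stayed above a suitable threshold throughout a time window, the joint distribution of the bin sizes stochastically dominates a collection of \emph{independent} Poisson variables whose means are of order $\lambda/p_j$ for each bin $j$ that a ball born during the window could have reached. This is what lets us avoid grappling directly with the complicated joint law of $(|b_1(t)|,|b_2(t)|,\ldots)$, which is exactly the obstacle that has blocked previous work on unreliable send sequences.

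First I would carry out a startup phase. Since $p_0=1$, a Poisson burst of at least $\Cinit$ newborns at some time step is almost surely followed by collisions that push them all into $b_1$, $b_2$, \ldots, so with positive probability we reach some time $t_0$ at which $f(\mathbf{x}(t_0))\geq C$ for a sufficiently large constant $C$ depending on $\lambda,\eta,\nu$. It suffices to prove, conditional on this positive-probability initialisation event, that $f(\mathbf{x}(t))\to\infty$ almost surely. I would then set up an inductive ``good event'' $G_t$ asserting that $f(\mathbf{x}(s))\geq C$ for all $s\in[t_0,t]$ and that a well-chosen frontier bin index $J(t)$, representing the deepest bin a post-$t_0$ ball could plausibly occupy, has been advancing at the expected rate. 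The second clause of $(\lambda,\eta,\nu)$-suitability, $\nu^n<p_n$, keeps the expected travel time of a newborn ball to bin $J$ at most $O(\nu^{-J})$, so $J(t)\to\infty$ at a controlled rate.

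Next, conditional on $G_{t-1}$, I would apply Lemma~\ref{lem:simple-newdombinoms} to couple $|b_j^X(t)|$ from below by independent variables $Z_j\sim\poisson(c\lambda/p_j)$ for $j\leq J(t)$, for an absolute constant $c$. The first clause of $(\lambda,\eta,\nu)$-suitability says that more than $\eta J(t)$ of these bins satisfy $p_j\le p_*(\lambda,\eta,\nu)$; for each such bin the dominating Poisson has mean at least $c\lambda/p_*$, and each such bin contributes expected noise $p_j\cdot c\lambda/p_j = c\lambda$. A Chernoff bound applied to the independent $Z_j$'s (independence is exactly what the domination lemma buys us) shows that the aggregated noise from small-$p_j$ bins is at least $\Omega(\lambda\eta J(t))$ except on an event of probability exponentially small in $J(t)$. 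Summing these failure probabilities over $t$ gives $\sum_t\pr(G_{t-1}\setminus G_t)<\infty$, so by Borel--Cantelli the good event $G_t$ holds for all sufficiently large $t$ almost surely, and $f(\mathbf{x}(t))\to\infty$ on this event. A standard lemma (see Lemma~\ref{lem:one_onesided}) then yields that the escape rate is eventually below $\lambda$, so $|\balls^X(t)|\to\infty$ with positive probability, proving transience.

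The main obstacle I anticipate is the bootstrapping: Lemma~\ref{lem:simple-newdombinoms} requires the noise to have been large throughout the conditioning window in order to produce the Poisson lower bounds, but those lower bounds are precisely what is needed to keep the noise large the next step. Handling this circularity will require a careful choice of the threshold $C$ and frontier rate $J(t)$, matching the constants $\Cblock(\eta)$ and $p_*(\lambda,\eta,\nu)$ so that the guaranteed noise contribution $\Omega(\lambda\eta J(t))/\Cblock(\eta)^2$ from the small-$p_j$ bins comfortably exceeds whatever escape/drift losses are incurred in a single step. A secondary technical point is that the Poisson domination must be invoked over a sliding window so that the ``balls born in the window'' are numerous enough to saturate the relevant bins; this is where the second condition $\nu^n<p_n$ of suitability is used, since a super-exponential send sequence would leave the early bins too sparsely visited for the domination to kick in at all.
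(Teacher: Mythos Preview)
Your proposal is essentially correct and tracks the paper's own argument closely: start from a good initialisation, invoke the Poisson-domination lemma (Lemma~\ref{lem:simple-newdombinoms}/\ref{lem:newdombinoms}) whenever the noise has stayed high so far, use the first clause of suitability to exhibit $\Theta(\eta J)$ bins with $p_j\le p_*$ among the dominating Poissons, apply a Chernoff bound to these now-independent variables, and sum the failure probabilities to show the noise stays above a growing threshold forever with positive probability.  This is exactly the scheme carried out in Lemmas~\ref{lem:tau-init}, \ref{lem:toget} and~\ref{lem:main}, followed by Theorem~\ref{thm:technical} and Corollary~\ref{cor:last}.

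Two points where the paper is more careful than your write-up.  First, Lemma~\ref{lem:simple-newdombinoms} is stated for the \emph{two-stream} process $T=T(Y^A,Y^B)$, not for $X$: the split into $A$- and $B$-balls is precisely what breaks the circularity you flag in your last paragraph (stream $B$ supplies the jamming that keeps stream $A$'s balls stuck, and vice versa), and the passage back to $X$ goes through the couplings of Observation~\ref{obs:ext-jammed-useful} and Lemma~\ref{lem:coupleY}.  Second, the domination is not over all $j\le J(t)$ but over a single block $\bins(\tau)=B_{I(\tau)-1}$ of width $\Theta(\Cblock^{I(\tau)})$; Observation~\ref{obs:I0} is what transfers the density condition from $[n]$ to that single block, and Lemma~\ref{lem:I-bound} (using $\nu^n<p_n$) is what guarantees $|\bins(\tau)|\gtrsim \log\tau$ so that the Chernoff failure probability is $O(\tau^{-2})$.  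Finally, your Borel--Cantelli sentence is slightly off: since the $G_t$ are nested, $\sum_t \Pr(G_{t-1}\setminus G_t)<1$ gives $\Pr(\bigcap_t G_t)>0$ directly (this is all you need for transience), whereas the paper gets the stronger ``$\mathcal R^X$ finite a.s.'' by applying Borel--Cantelli to the successive return times $R_i$ instead.
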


Theorem~\ref{thm:LCED} will follow from  Theorem~\ref{thm:simple-technical} (actually, from a slightly more technical version of it, stated as Corollary~\ref{cor:last}), from  Corollary~\ref{cor:backoff-kmp}, and from Lemma~\ref{lem:killer}. We give the proof of Theorem~\ref{thm:LCED}, along with the proofs of Theorems~\ref{thm:mono} and~\ref{thm:second}, in Section~\ref{sec:mainproofs}. 
Observe that for all $\lambda,\eta > 0$, 
every $\rho$-interleaved send sequence
is a $(\lambda, \eta, 2\rho)$-suitable send sequence, so Theorem~\ref{thm:simple-technical} does indeed apply.
 
Fix $\lambda > 0$. Before sketching the proof of Theorem~\ref{thm:simple-technical}, we set out one more piece of terminology. Recall that there is a natural coupling between a backoff process $X$ and an externally-jammed process $Y$ with arrival rate $\lambda$ and send sequence $\mathbf{p}$. Under this coupling, the evolution of $X$ is a deterministic function of the evolution of $Y$ --- balls follow the same trajectories in each process, except that they leave $X$ if they send at a time step in which no other ball sends. As such, we work exclusively with $Y$, calling a ball \textit{stuck} if it is present in $X$ and \textit{unstuck} if it is not. We write $\stuck_j^Y(t)$ for the set of stuck balls in bin $j$ at time $t$, and likewise $\unstuck_j^Y(t)$ for the set of unstuck balls. (See the formal definition of an externally-jammed process in Section~\ref{sec:externally-jammed}.)

Intuitively, the main obstacle in applying Aldous's proof sketch to
a $\rho$-interleaved send sequence $\mathbf{p}$ is the long sequences of bins with $p_j = 1/(\log \log j)$. In the externally-jammed process $Y$, each individual bin $j$ is likely to empty of balls --- and hence also of stuck balls --- roughly once every $(\log j)^\lambda$ steps. (Indeed, in the stationary distribution, bin $j$ is empty with probability $e^{-\lambda/ g(j)} = (\log j)^{-\lambda}$.) As such, the individual bins do not provide a consistent source of noise as reliability would require. We must instead engage with the joint distribution of bins and argue that a long sequence of bins with large $p_j$, taken together, is likely to continue providing noise for a very long time even if individual bins empty. Unfortunately, these bins are far from independent: for example, if $|\stuck_j^Y(t)|=|\stuck_{j+1}^Y(t)|=0$, then we should also expect $|\stuck_{j+2}^Y(t)|=0$. By working only with individual reliably-full bins, Aldous was able to sidestep this issue with a simple union bound, but this is not an option for a $\rho$-interleaved send sequence~$\mathbf{p}$.

Observe that since $Y$ starts in its stationary distribution, for all $t \ge 0$ the variables $\{|b_j^Y(t)|\colon j \ge 0\}$ are mutually independent Poisson variables with $\E(|b_j^Y(t)|) = \lambda/p_j$. 
Intuitively, we might hope that while the noise of $Y$ is high and very few balls are becoming unstuck, most balls in $Y$ will remain stuck from birth and we will have $|\stuck_j^Y(t)| \approx |b_j^Y(t)|$. If this is true, we can hope to dominate the variables $|\stuck_j^Y(t)|$ below by mutually independent Poisson variables as long as the noise remains large. Our largest technical contribution --- and the hardest part of the proof --- is making this idea rigorous, which we do in Lemma~\ref{lem:simple-newdombinoms}. (Strictly speaking, we use a slightly more technical version stated as Lemma~\ref{lem:newdombinoms}.) This lemma is the heart of our proof; we spend Section~\ref{sec:coupling-defs} setting out the definitions needed to state it formally, then give the actual proof in Section~\ref{sec:rev}. With Lemma~\ref{lem:newdombinoms} in hand, we then prove Theorem~\ref{thm:technical} in Section~\ref{sec:provemain}, using a version of Aldous's framework in which a union bound over large bins is replaced by Chernoff bounds over a collection of independent Poisson variables which dominate a non-independent collection of smaller bins from below.

For the rest of this section, we will focus on the statement and proof of Lemma~\ref{lem:newdombinoms}. A fundamental difficulty is that we can only hope for such a domination to work while the noise of $Y$ is large, but the noise is just a weighted sum of terms $|\stuck_j^Y(t)|$, which are precisely the variables we are concerned with --- we therefore expect a huge amount of dependency. To resolve this, we define the \textit{two-stream process} in Section~\ref{sec:two-stream}. Essentially, we split $Y$ into a pair $T=(Y^A,Y^B)$ of two externally-jammed processes $Y^A$ and $Y^B$, each with arrival rate $\lambda/2$, in the natural way. We then say that a ball becomes unstuck in $Y^A$ if it sends on a time step at which no other ball in $Y^B$ sends, and likewise for $Y^B$. 
Thus only balls from $Y^{\boldsymbol{B}}$ can prevent balls in $Y^{\boldsymbol{A}}$ from becoming unstuck, and only balls from $Y^{\boldsymbol{A}}$ can prevent balls in $Y^{\boldsymbol{B}}$ from becoming unstuck. Each stream of the process acts as a relatively independent source of collisions for the other stream while its noise is high. Naturally, there is a simple coupling back to the originally externally-jammed process under which stuck balls in $Y$ are dominated below by stuck balls in $T$.

We must also explain what we mean by $T$ having ``large noise''. We give an informal overview here, and the formal definitions in Section~\ref{sec:jammedness}. We define a ``startup event'' $\Einit(t_0)$ which occurs for some $t_0$ with probability 1, and which guarantees very large noise at time $t_0$ (governed by a constant $\Cinit$). We divide the set of bins into \textit{blocks} $B_1,B_2,\dots$ of exponentially-growing length, and define a map $\tau\mapsto \bins(\tau)$ from times $t_0+\tau$ to blocks $B_i$. We define a constant $\zeta > 0$, and say that $T$ is \emph{$t_0$-jammed for $\tau$} if both $Y^A$ and $Y^B$ have at least $\zeta |\bins(\tau-1)|$ noise at time $t_0+\tau-1$. (Observe that if every ball were stuck, then in expectation the balls in $\bins(\tau-1)$ would contribute $\lambda|\bins(\tau-1)|$ total noise, so jammedness says that on average the bins in $\bins(\tau-1)$ are ``almost full''.) We can now state a somewhat simplified version of Lemma~\ref{lem:newdombinoms}.

\begin{lemma}\label{lem:simple-newdombinoms}
 Fix $\lambda$, $\eta$ and $\nu$ in $(0,1)$  and a $(\lambda,\eta,\nu)$-suitable send sequence~$\mathbf{p}$ with $p_0=1$.
 Let $Y^A$ and $Y^B$ be independent externally-jammed processes with arrival rate $\lambda/2$ and send sequence $\mathbf{p}$  and consider the two-stream externally-jammed process $\YAB=\YAB(Y^A,Y^B)$.  Let $t_0$ and $\tau$ be sufficiently large integers. Then there is a coupling of
\begin{itemize}
    \item      $\YAB$
 conditioned on $\Einit(t_0)$, 
 \item a sample $\{Z_j^{A} \mid j \in \bins(\tau)\}$ where each $Z_j^{A}$ is
 chosen independently from a Poisson distribution with mean $\lambda/(4 p_j)$, and 
 \item a sample $\{Z_j^{B} \mid j \in \bins(\tau)\}$ where each $Z_j^{B}$ is chosen from a Poisson distribution with mean $\lambda/(4 p_j)$ and these are independent of each other but not of the $\{Z_j^A\}$ values
 \end{itemize}
  in   such a way that  
 at least one of the following happens:
 
   \begin{itemize}
     \item $T$ is not $t_0$-jammed for $\tau$, or
    \item for all $j\in \bins(\tau)$, 
    $|\stuck_j^{\YAB}(A,t_0+\tau)  | \geq Z_j^A$
    and $|\stuck_j^{\YAB}(B,t_0+\tau)  | \geq Z_j^B$.
 \end{itemize}
 \end{lemma}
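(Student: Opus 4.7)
The plan is to construct, inside each stream, a thinning of the balls in bin $j$ consisting only of balls that are provably stuck in $T$ and whose size has a tractable independent-Poisson distribution. Specifically, conditional on the entire history of the \emph{other} stream, I will use the Poisson marking theorem to show that this thinning is a vector of independent Poisson random variables indexed by $j \in \bins(\tau)$, and the $t_0$-jammedness hypothesis will be used to lower bound each of their means by $\lambda/(4p_j)$. A standard Poisson-domination coupling then produces $\{Z_j^A\}$ with $|\stuck_j^{\YAB}(A,t_0+\tau)| \ge Z_j^A$; on the complementary event that jammedness fails, I draw $Z_j^A$ as fresh independent $\poisson(\lambda/(4p_j))$ variables so the marginal requirement still holds. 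The construction of $\{Z_j^B\}$ is symmetric, swapping the roles of $A$ and $B$, and the lemma permits the two families to be correlated with each other.

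\textbf{Poisson thinning and the conditional distribution.} View $Y^A$ as a marked Poisson process: births into bin $0$ occur at rate $\lambda/2$, and each birth carries an independent trajectory consisting of geometric inter-send waits with parameters $p_0, p_1, \ldots$; these trajectories are independent of $Y^B$. After revealing $Y^B$ entirely, the set $\calG$ of \emph{good} time steps --- those at which at least one $Y^B$-ball sends --- is deterministically determined. Fix a lookback window $[t_0-\Delta,\, t_0+\tau]$ with $\Delta$ chosen large enough that the stationary mean $\lambda/(2p_j)$ of bin~$j$ is essentially saturated by balls born inside it. Let $W_j^A$ count the $Y^A$-balls that (i) are born in this window, (ii) lie in bin~$j$ at time $t_0+\tau$, and (iii) have every one of their send-times in $\calG$. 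Condition (iii) certifies such a ball as stuck in $T$, so $W_j^A \le |\stuck_j^{\YAB}(A,t_0+\tau)|$. Because the sets of balls counted in different bins at the terminal time are disjoint and the trajectory marks are independent of $Y^B$, the marking theorem gives that, conditional on $Y^B$, $\{W_j^A : j \in \bins(\tau)\}$ is a collection of independent Poisson random variables with means $\mu_j^A$ measurable with respect to $Y^B$.

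\textbf{Intensity bound.} The central estimate is $\mu_j^A \ge \lambda/(4p_j)$ on the $t_0$-jammedness event. Decomposing $\mu_j^A$ as a sum over possible birth times and pulling (iii) out via a union bound, the contribution of (i)+(ii) alone is $(1-o(1))\lambda/(2p_j)$ once $\Delta$ is large. It remains to bound the probability that some send-time of a ball lands outside $\calG$. Under jammedness, the $Y^B$-noise at time $t_0+\tau'-1$ is at least $\zeta|\bins(\tau'-1)|$, so $\Pr[s \notin \calG \mid Y^B] \le e^{-\zeta|\bins(\tau'-1)|}$ for $\tau' = s - t_0$. Because block sizes grow exponentially in $\tau'$, a union bound over the send-times of a single ball drives the bad-trajectory probability below $1/2$, giving the factor-$1/2$ slack between $\lambda/(2p_j)$ and $\lambda/(4p_j)$.

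\textbf{Main obstacle.} The main difficulty is the chaining of jammedness over time, especially near the start of the window. The hypothesis controls $Y^B$-noise at the terminal time $t_0+\tau-1$, but the intensity argument needs lower bounds at every intermediate send-time $t_0+\tau'$, including small $\tau'$ where $|\bins(\tau'-1)|$ is too small on its own. Handling this should require both (a) showing that ``$t_0$-jammed for $\tau$'' propagates to jammedness at all earlier $\tau' \le \tau$ (either by definition or by induction --- the surrounding discussion of a ``slowly-advancing frontier'' of full bins suggests the former), and (b) using the startup event $\Einit(t_0)$ to supply a high noise floor near $t_0$ governed by $\Cinit$. The quantitative matching of $\Cinit$, $\zeta$, the block-growth rate, the lookback $\Delta$, and $\Cblock(\eta)$ --- so that the ``ball has not yet reached bin $j$'' tail is small compared to the $1/2$ slack and the union bound over send-times closes with room to spare --- is the most delicate part.
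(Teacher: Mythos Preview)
Your high-level plan --- condition on one stream, then use Poisson thinning on the other to produce independent Poissons --- is the right intuition, but the implementation has a genuine gap, and that gap is precisely the difficulty the paper's machinery is built to resolve.

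The problem is your claim that condition (iii) certifies the $A$-ball as stuck in $\YAB$. In the two-stream process, a stuck $A$-ball $\beta$ that sends at time $s$ can become unstuck only when $\stucksend^{\YAB}(B,s)=\emptyset$, i.e.\ when no \emph{stuck} $B$-ball sends. Your set $\calG$ records the times at which \emph{some} $Y^B$-ball sends, stuck or not. If at time $s$ only unstuck $B$-balls send, then $s\in\calG$ yet $\stucksend^{\YAB}(B,s)=\emptyset$, so $\beta$ may become unstuck; thus $W_j^A\le|\stuck_j^{\YAB}(A,t_0+\tau)|$ need not hold. If you repair this by redefining $\calG$ via stuck-$B$-sends, you lose $Y^B$-measurability: whether a $B$-ball is stuck at $s$ depends on the entire $Y^A$-history up to $s$, so after conditioning on $Y^B$ alone the marks you need for the thinning are no longer determined, and the Poisson-marking argument collapses. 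Your intensity-bound paragraph already shows symptoms of this confusion: you write $\Pr[s\notin\calG\mid Y^B]$, but as you have defined $\calG$ this probability is $0$ or $1$; the quantity $e^{-\zeta|\bins(\tau'-1)|}$ you want is a bound on $\Pr[\stucksend^{\YAB}(B,s)=\emptyset]$, and that event is not a function of $Y^B$ alone. A second, smaller issue: your lookback window $[t_0-\Delta,\,t_0+\tau]$ reaches before $t_0$, but conditioning on $\Einit(t_0)$ constrains balls alive at $t_0$, so births in $[t_0-\Delta,\,t_0]$ are no longer unconditional Poisson.

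The paper breaks the circularity by interposing a \emph{random-unsticking process} $R=R(Y^C,t_0)$: a copy of $Y^C$ in which each sending ball becomes unstuck independently with probability $\unstickProb(t)=e^{-\zeta|\bins(t-t_0)|/16}$, with no reference to the other stream whatsoever. A coupling lemma (Lemma~\ref{lem:domcouplenew}) shows that while the $C'$-stream remains jammed, each $C$-ball's actual unsticking probability in $\YAB$ is at most $\unstickProb(t)$, so $\unstuck^{\YAB}(C,t_0+\tau)\subseteq\unstuck^{R}(t_0+\tau)$. Now the unsticking marks in $R$ are genuinely independent of everything, and a time-reversal argument on $R$ (Lemmas~\ref{lem:timerev}--\ref{lem:dombinomsZ}), restricted to balls born \emph{after} $t_0$ (this is the paper's $\Fill_j$ set, which handles your lookback problem), produces the independent Poisson lower bounds. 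Finally, your ``main obstacle''~(a) disappears in the full statement (Lemma~\ref{lem:newdombinoms}): there the escape clause is $\overline{\Ejam^{\YAB}(t_0,\tau)}$, which by Definition~\ref{def:jam} asserts failure of jammedness at \emph{some} $\tau'\le\tau$, so jammedness at all intermediate times is part of the hypothesis rather than something to be proved.
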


To prove Lemma~\ref{lem:simple-newdombinoms}
(or its full version Lemma~\ref{lem:newdombinoms}), we first formalise the idea that $Y^A$ can act as a ``relatively independent'' source of noise to prevent balls in $Y^B$ becoming unstuck (and vice versa). In Section~\ref{sec:defRandom}, we define a \textit{random unsticking process}; this is an externally-jammed process in which balls become unstuck on sending based on independent Bernoulli variables, rather than the behaviour of other balls. In Lemma~\ref{lem:domcouplenew}, we dominate the stuck balls of $Y^A$ below by a random unsticking process $R$ for as long as $Y^B$ is ``locally jammed'', and likewise for the stuck balls of~$Y^B$.

In order to analyse the random unsticking process, we then make use of a time reversal, defining a \emph{reverse random unsticking process} $\tilde{R}$ in Section~\ref{sec:rev-def} which we couple to $R$ using a probability-preserving bijection between ball trajectories set out in Lemma~\ref{lem:timerev}. We then exploit the fact that balls move independently and the fact that the number of balls following any given trajectory is a Poisson random variable to prove Lemma~\ref{lem:simple-newdombinoms}, 
expressing the set of balls in $\stuck_j^{\tilde{R}}$  
as a sum of independent Poisson variables (one per possible trajectory) with total mean at least $\lambda/(4p_j)$. Unfortunately, the coupling between $R$ and $\tilde{R}$ does not run over all time steps --- for example, since we need to condition on $\Einit(t_0)$, it certainly cannot cover times before $t_0$ in $R$. As a result, we cannot analyse all trajectories this way; instead, we define a subset $\Fill_j^R(t)$ of balls in bin $j$ of $R$ at time~$t$ which are born suitably late in $R$, and analyse the corresponding set $\Fill_j^{\revRandom}(t)$ of balls in $\revRandom$ (see Section~\ref{sec:fill-def}). We then prove in Lemma~\ref{lem:revfill} that the variables $|\Fill_j^{\revRandom}(t)|$ for $j \in \bins(\tau-1)$ are dominated below by a tuple of independent Poisson variables; this bound then propagates back through the couplings to a lower bound on $|\Fill_j^R(t)|$ in $R$ and finally to the required bounds on $|\stuck^{Y^A}(t)|$ and $|\stuck^{Y^B}(t)|$ in $T$ given by Lemma~\ref{lem:simple-newdombinoms}. 
  
\subsection{Future work}\label{sec:intro-future}

It is natural to wonder where the remaining difficulties are in a proof of Aldous's conjecture. LCED sequences are quite a restricted case, and at first glance one might suspect they could be proved unstable by incremental improvements to Theorem~\ref{thm:simple-technical}, Corollary~\ref{cor:backoff-kmp}, and Lemma~\ref{lem:killer}. Unfortunately, this is not the case --- LCED send sequences can exhibit qualitatively different behaviour than the send sequences covered by these results.

Recall the definition of a $(\rho,\mathbf{a},g)$-interleaved send sequence from
Example~\ref{example:guide}.  Let $\mathbf{p}$ be such a send sequence, and let $X$ be a backoff process with send sequence $\mathbf{p}$ and arrival rate $\lambda \in (0,1)$. 
Unless $g$ is exponentially small, Lemma~\ref{lem:killer} does not apply to $\mathbf{p}$ for any $\lambda < 2\rho$. 
Also, Theorem~\ref{thm:full-kmp} (and hence Corollary~\ref{cor:backoff-kmp}) does not apply to $\mathbf{p}$ for any $\lambda \in (0,1)$ 
whenever $\rho$ is sufficiently small and $\mathbf{a}$ grows suitably quickly. To see this, suppose that $\rho$ is small and that $\mathbf{a}$ grows quickly.
Recall the definition of $\mu_\tau(\mathbf{p})$ from Theorem~\ref{thm:full-kmp}. It is not hard to show (e.g.{} via union bounds) 
that $\mu_\tau(\mathbf{p}) \le \log_{1/\rho}(\tau) + O(1)$ for all $\tau$ satisfying $a_{2k} \le \log_{1/\rho}(\tau)\le a_{2k+1}/3$ for any $k$; 
from this, it follows that
\[
    \sum_{\tau=0}^\infty \mu_\tau(\mathbf{p}) e^{-\lambda \mu_\tau(\mathbf{p})} \ge \sum_{\tau=0}^\infty \mu_\tau(\mathbf{p}) e^{-\mu_\tau(\mathbf{p})} = \infty \mbox{ for all $\lambda \in (0,1)$.}
\]
Since Theorem~\ref{thm:full-kmp} is both a necessary and a sufficient condition for a backoff process to send only finitely many times (see~\cite{kelly-macphee}), we can conclude that the methods of Kelly and MacPhee do not apply here. 

As discussed in Section~\ref{sec:theproofsketch}, $\mathbf{p}$ is not LCED whenever $g(j) = o(1)$ as $j\to\infty$. It is also easy to see that $\mathbf{p}$ is LCED whenever $g(j) = \Theta(1)$ and $a_{2k+1}-a_{2k} = o(a_{2k+2}-a_{2k+1})$ as $k\to\infty$, so in this case our own Theorem~\ref{thm:LCED} does not apply. To show that a simple generalisation of Theorem~\ref{thm:simple-technical} will not suffice, we introduce the following definition.

\begin{definition}
    A send sequence $\textbf{p}$ \emph{has quiet periods} if it has the following property for some $\lambda \in (0,1)$. Let $X$ be a backoff process with arrival rate $\lambda$ and send sequence $\mathbf{p}$. Then with probability 1, there exist infinitely many time steps on which $X$ has noise less than $1$.
\end{definition}

Recall that the proof of Theorem~\ref{thm:simple-technical} (as with Aldous's result~\cite{Aldous}) relies on proving that the noise of a backoff process increases to infinity over time, so it is unsuitable for
send sequences with quiet periods. 
Moreover, Corollary~\ref{cor:backoff-kmp} is based on Kelly and MacPhee's necessary and sufficient condition for a backoff protocol to have only finitely
many successful sends (with probability~$1$). Backoff protocols with quiet periods have infinitely many successful sends, so cannot be proved unstable
by the methods of Corollary~\ref{cor:backoff-kmp}. Finally, Lemma~\ref{lem:killer} handles many   protocols with quiet periods (e.g.\ $p_j = 2^{-2^j}$), but 
these protocols  are \textit{always} quiet, and 
have the property that they can be proved unstable by simple domination
arguments (assuming that
a ball always succeeds when it sends 
unless another ball is born at the same time). For many send sequences with quiet periods, these assumptions are not true.

The key problem is that if $g(j) = 1/2$ (say), $\rho$ is sufficiently small, and $\mathbf{a}$ grows sufficiently quickly, then $\mathbf{p}$ is very likely to have quiet periods. 
Indeed, in the externally-jammed process, for large values of $k$ all bins $1,\dots,a_{2k}-1$ are likely to simultaneously empty (which happens with probability at least roughly $2^{-a_{2k}}$ in the stationary distribution) long before bin $a_{2k}$ fills with stuck balls and begins contributing significant noise (which will take time at least $(1/\rho)^{a_{2k}}$). Heuristically, we would expect $\mathbf{p}$ to alternate between ``quiet periods'' with very low noise and ``noisy periods'' in which noise first grows rapidly, then stays high for a very long time. Our current methods are incapable of handling this juxtaposition --- in particular, since messages can easily escape from the system in quiet periods, any argument based around our ideas would need to bound the frequency and duration of quiet periods.

However, despite this, we believe we have made significant progress towards a proof of Aldous's conjecture even for LCED sequences. Recall from Section~\ref{sec:theproofsketch} that the key part of our proof of Theorem~\ref{thm:simple-technical} is Lemma~\ref{lem:newdombinoms}, which gives a way of dominating the number of stuck balls in different bins below by an \textit{independent} tuple of Poisson variables as long as overall noise is large. Importantly, the proof of Lemma~\ref{lem:newdombinoms} doesn't actually rely on the fact that $\mathbf{p}$ is suitable. Suitability is important to determining the parameters of the block construction, which is vital in applying Lemma~\ref{lem:newdombinoms} to prove Theorem~\ref{thm:simple-technical}, but the Lemma~\ref{lem:newdombinoms} proof goes through regardless of what these parameters are. As such, we believe a slightly more general version of Lemma~\ref{lem:newdombinoms} will be important even for LCED sequences as a tool for analysing ``noisy periods''. The main remaining challenge is in showing that such ``noisy periods'' are common enough to outweigh the effect of ``quiet periods'' of very low noise.

The remainder of the paper is structured as follows.
In Section~\ref{sec:prelim}, we set out some standard preliminaries and prove Lemma~\ref{lem:killer} and Corollary~\ref{cor:backoff-kmp}. In Section~\ref{sec:coupling-defs} we introduce the main random processes and block construction we will need in order to state Lemma~\ref{lem:newdombinoms}, the key lemma for our proof of Theorem~\ref{thm:simple-technical} (see Section~\ref{sec:theproofsketch}). 
In Section~\ref{sec:rev}, we state and prove Lemma~\ref{lem:newdombinoms}. In Section~\ref{sec:provemain}, we use Lemma~\ref{lem:newdombinoms} to prove Theorem~\ref{thm:technical}. Finally, in Section~\ref{sec:mainproofs}, we use Theorem~\ref{thm:technical} (together with Lemma~\ref{lem:killer} and Corollary~\ref{cor:backoff-kmp}) to prove Theorem~\ref{thm:LCED}, and derive Theorems~\ref{thm:mono} and~\ref{thm:second} as corollaries.

\section{Preliminaries}\label{sec:prelim}\label{sec:KMP-killer}
\label{sec:KM}\label{sec:additional}

For all positive integers $n$, we define $[n]$ to be the set $\{1,2,\dots,n\}$.
All logarithms are to the base~$e$ unless specified otherwise.

A stochastic process is ``stable'' if it is positive recurrent. It is ``unstable'' otherwise. An unstable process can be either null recurrent or transient,
We now define the filtration $\calF^X_0,\calF^X_1,\ldots$
of a backoff process~$X$. 

\begin{definition}\label{def:backoff-filt}
    Let $X$ be a backoff process.
    Since all bins are initially empty, $\calF^X_0$ contains no information.
    For each positive integer~$t$, 
    $\calF^X_t$ contains $\calF^X_{t-1}$ 
    together with ${b'}_0^X(t)$ 
    and, for each $j\geq 0$, $b_j^X(t)$.
    It is clear that $\send^X(t)$ and $\balls^X(t)$ can be deduced from $\calF^X_t$.
\end{definition}

We now state some common Chernoff bounds.

\begin{lemma}\label{lem:chernoff-small-dev} (E.g.\ \cite[Theorems~4.5 and 5.4]{MUbook})
Let $\Psi$ be a real random variable with mean $\mu$ which is either Poisson or a sum of independent Bernoulli variables. Then for all $0 < \delta < 1$, we have  $\pr(\Psi \le (1-\delta)\mu) \le e^{-\delta^2\mu/2}$.
\end{lemma}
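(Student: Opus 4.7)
The plan is to prove both cases by the standard Chernoff method: apply Markov's inequality to the exponential moment $e^{-t\Psi}$ for $t>0$, bound its expectation uniformly for both distributional assumptions, and then optimise in $t$.

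First, I will handle the moment generating function. If $\Psi = \sum_{i=1}^n X_i$ is a sum of independent Bernoullis with $\E X_i = p_i$, then by independence and the bound $1-x \le e^{-x}$,
\[
\E\bigl[e^{-t\Psi}\bigr] = \prod_{i=1}^n \bigl(1 - p_i(1-e^{-t})\bigr) \le \prod_{i=1}^n \exp\bigl(-p_i(1-e^{-t})\bigr) = \exp\bigl(-\mu(1-e^{-t})\bigr).
\]
If instead $\Psi$ is Poisson with mean $\mu$, a direct computation with its probability generating function gives exactly $\E[e^{-t\Psi}] = \exp(\mu(e^{-t}-1)) = \exp(-\mu(1-e^{-t}))$. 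So both cases satisfy the same upper bound on the MGF.

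Next, I will apply Markov's inequality. For any $t>0$,
\[
\Pr\bigl(\Psi \le (1-\delta)\mu\bigr) = \Pr\bigl(e^{-t\Psi} \ge e^{-t(1-\delta)\mu}\bigr) \le \exp\bigl(t(1-\delta)\mu - \mu(1-e^{-t})\bigr).
\]
Choosing $t = -\log(1-\delta) > 0$ (which is legal since $0<\delta<1$) makes $e^{-t}=1-\delta$ and yields the bound
\[
\Pr\bigl(\Psi \le (1-\delta)\mu\bigr) \le \left(\frac{e^{-\delta}}{(1-\delta)^{1-\delta}}\right)^{\!\mu}.
\]

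Finally, the routine calculus inequality $(1-\delta)\log(1-\delta) \ge -\delta + \delta^2/2$ for $\delta \in (0,1)$ (proved by comparing derivatives on $(0,1)$ and noting equality at $\delta=0$) gives $(1-\delta)^{1-\delta} \ge e^{-\delta+\delta^2/2}$, so the right-hand side is at most $e^{-\delta^2\mu/2}$. There is no real obstacle here; the only step worth a sanity check is the unified MGF bound, which is why I treat the Bernoulli and Poisson cases separately before merging them.
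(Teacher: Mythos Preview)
Your proof is correct and complete. The paper does not actually prove this lemma; it is stated as a standard Chernoff bound with a citation to \cite[Theorems~4.5 and~5.4]{MUbook} and no proof is given, so there is nothing to compare against beyond noting that your argument is the standard textbook one.
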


\begin{lemma}\label{lem:chernoff-large-upper}
    Let $\Psi$ be a binomial random variable with mean $\mu$. Then for all $x > 1$, we have
    \[
        \pr(\Psi \ge x\mu) < e^{{-}\mu x(\log x-1)}
    \]
\end{lemma}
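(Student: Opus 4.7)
The plan is to prove this via the standard exponential moment / Chernoff-Cramér method: bound $\pr(\Psi \ge x\mu)$ by Markov's inequality applied to $e^{s\Psi}$ for an optimally chosen $s > 0$, then perform elementary algebra to bring the resulting bound into the form stated in the lemma. This is a textbook Chernoff bound, so essentially no conceptual obstacles arise; the only real work is verifying that the stated form $e^{-\mu x(\log x - 1)}$ (rather than the slightly tighter $e^{-\mu(x\log x - x + 1)}$) is indeed implied, and that strict inequality is justified.

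First I would write $\Psi = \sum_{i=1}^n \chi_i$, where $\chi_1, \ldots, \chi_n$ are independent Bernoulli variables with $\sum_i \E(\chi_i) = \mu$, and fix $s > 0$. By Markov's inequality applied to the non-negative random variable $e^{s\Psi}$,
\[
    \pr(\Psi \ge x\mu) = \pr\bigl(e^{s\Psi} \ge e^{sx\mu}\bigr) \le e^{-sx\mu}\,\E\bigl(e^{s\Psi}\bigr).
\]
Since $1 + y \le e^y$ for all real $y$, we have $\E(e^{s\chi_i}) = 1 + \E(\chi_i)(e^s - 1) \le \exp\!\bigl(\E(\chi_i)(e^s - 1)\bigr)$, so by independence
\[
    \E\bigl(e^{s\Psi}\bigr) = \prod_{i=1}^n \E\bigl(e^{s\chi_i}\bigr) \le \exp\!\bigl(\mu(e^s - 1)\bigr).
\]
Combining these and substituting the optimal choice $s = \log x$ (which is strictly positive because $x > 1$) yields
\[
    \pr(\Psi \ge x\mu) \le \exp\!\bigl(\mu(e^s - 1) - sx\mu\bigr) = \exp\!\bigl(\mu(x-1) - \mu x \log x\bigr) = e^{-\mu(x\log x - x + 1)}.
\]

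Finally, since $\mu > 0$ (the case $\mu = 0$ is vacuous as $\Psi \equiv 0$ and $\pr(\Psi \ge x\mu) = \pr(\Psi \ge 0)$ can be handled separately by observing $x\mu = 0$ makes the claim trivial, and in any event we may assume $\mu > 0$ in applications), we have $-\mu(x\log x - x + 1) < -\mu(x\log x - x) = -\mu x(\log x - 1)$, from which the stated strict inequality follows immediately. The only step requiring any care is noting that the ``$+1$'' in the exponent of the standard Chernoff form is what provides the strict inequality after dropping it, so no separate argument is needed to upgrade $\le$ to $<$. I do not anticipate any serious obstacle; the lemma is a cosmetic repackaging of a well-known bound.
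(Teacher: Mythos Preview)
Your proof is correct and follows essentially the same route as the paper: obtain the standard multiplicative Chernoff bound $\pr(\Psi \ge x\mu) \le e^{\mu(x-1)-\mu x\log x}$ and then drop the $+\mu$ in the exponent to reach the stated form with strict inequality. The only difference is that the paper cites this intermediate bound directly from \cite[Theorem~4.4(1)]{MUbook} (where it is already stated with strict inequality) rather than deriving it via the exponential-moment method as you do.
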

\begin{proof}
    By the Chernoff bound of~\cite[Theorem~4.4(1)]{MUbook}, for all $\delta>0$, we have
    \[
        \pr\big(\Psi \ge (1+\delta)\mu\big) < \Big(\frac{e^\delta}{(1+\delta)^{1+\delta}}\Big)^\mu.
    \]
    Taking $x = 1+\delta$ and putting everything under the exponent implies that for all $x > 1$,
    \[
        \pr(\Psi \ge x\mu) < e^{\mu(x-1) - \mu x\log x} < e^{-\mu x(\log x-1)},
    \]
    as required.
\end{proof}

We will also use some preliminary lemmas. We first prove the following lemma known to to the authors of~\cite{GJKP} and stated in Section~\ref{sec:intro-context}.

\lemkiller*

\begin{proof}
Suppose 
that for infinitely many~$j$ we have
$p_j \le (\lambda p_0/2)^j$. 
We will show that $X$ is not positive recurrent by showing that the expected time for $X$ to return to its initial state (the state containing no balls) is infinite.

Let $\calT_0$ be the first time such that $b_1^X(\calT_0) \ne \emptyset$, and note that 
$\calT_0 < \infty$ with probability 1. Consider any positive integer~$t_0$ and any value $F_{t_0}^X$  of $\calF_{t_0}^X$ such that $\calT_0=t_0$. We will now consider the process~$X$ conditioned on $\calF_{t_0}^X = F_{t_0}^X$. Let 
\[
    \calT = \min(\{\infty\} \cup \{t > t_0 \mid \balls^X(t) = \emptyset\}).
\]
Then by the definition of positive recurrence, it suffices to show that $\E[\calT \mid \calF_{t_0}^X = F_{t_0}^X] = \infty$.

Let $\beta$ be an arbitrary ball in $b_1^X(t_0)$, and let 
$$\calT' = \min(\{\infty\} \cup \{t>t_0 \mid \mbox{$\beta \in \send^X(t)$ 
and no newborns send at~$t$} \}).$$
Observe that $\beta$ must escape in order for $X$ to return to its initial state, and $\beta$ cannot escape before time $\calT'$, so $\calT' \le \calT$; it therefore suffices to show that $\E[\calT' \mid \calF_{t_0}^X = F_{t_0}^X] = \infty$.

At any given step $t> t_0$,
the number of newborns that send is a Poisson random variable
with parameter~$\lambda p_0$, 
so the probability of at least one newborn sending on any given step is $r=1-e^{-\lambda p_0}$. We therefore have
$$\E[\calT'\mid \calF_{t_0}^X = F_{t_0}^X] = \sum_{j\geq 1} r^{j-1} (\tfrac{1}{p_1} + \cdots + 
\tfrac{1}{p_j} ) (1-r) =  \frac{1-r}{r} \sum_{j\geq 1} \frac{r^j}{p_j}.$$
By the assumption in the lemma statement,
infinitely many values of $j$ contribute a summand
which is at least $(2r/(\lambda p_0))^j$ which is at least~$1$
since 
$r = 1-e^{-\lambda p_0} \geq \lambda p_0/2$, thus the sum diverges as required.
\end{proof}

Before setting up the structure for our own proof, we also
   prove Corollary~\ref{cor:backoff-kmp}, a Corollary
of Kelly and MacPhee's Theorem~\ref{thm:full-kmp}.

\backoffkmp*
\begin{proof}
    Let $\mathbf{p}$ be a send sequence as in the corollary statement. As in Theorem~\ref{thm:full-kmp}, let $W_0,W_1,\dots$ be independent geometric variables such that $W_j$ has parameter $p_j$ for all $j$. For all $\tau \ge 2/p_0$, let 
    \begin{equation}\label{eq:mutau-def}
        \mu_\tau(\mathbf{p}) = \sum_{j=0}^\infty \Pr\Big(\sum_{k=0}^j W_k \le \tau\Big),
    \end{equation}
    and let
    \[
        M(\tau) = \max\Big\{j \ge 0 \colon \sum_{k=0}^j (1/p_k) \le \tau/2\Big\}.
    \]
    
    In order to apply Theorem~\ref{thm:full-kmp}, we first prove that for all $\tau \ge 2/p_0$, $\mu_\tau(\mathbf{p}) \ge M(\tau)/2$. By the definition of $M(\tau)$, for all $j \le M(\tau)$ we have $\E(\sum_{k=0}^j W_k) \le \tau/2$. It follows by Markov's inequality that for all $j \le M(\tau)$,
    \[
        \pr\Big(\sum_{k=0}^j W_k \le \tau\Big) = 1 - \pr\Big(\sum_{k=0}^j W_k > \tau\Big) \ge 1/2.
    \]
    It now follows from~\eqref{eq:mutau-def} that
    \begin{equation}\label{eq:mutau-bound-1}
        \mu_\tau(\mathbf{p}) \ge \sum_{j=0}^{M(\tau)} \Pr\Big(\sum_{k=0}^j W_k \le \tau\Big) \ge M(\tau)/2.
    \end{equation}
    
    We now prove that $M(\tau) = \omega(\log\tau)$ as $\tau\to\infty$. Let $C>1$ be arbitrary; we will show that when $\tau$ is sufficiently large, we have
    \[
        \sum_{k=0}^{\lceil C\log \tau\rceil}(1/p_k) \le \tau/2,
    \]
    and hence that $M(\tau) \ge C\log\tau$ as required. Recall that $\log(1/p_j)=o(j)$ as $j\to\infty$, and let $j_C$ be such that $1/p_j \le e^{j/C}/(Ce^4)$ for all $j \ge j_C$. Suppose $\tau \ge 2/p_0$ is large enough that
    \[
        \sum_{k=0}^{j_C-1} (1/p_k) \le \tau/4.
    \]
    We then have
    \begin{align*}
        \sum_{k=0}^{\lceil C\log \tau\rceil}(1/p_k) 
        \le \frac{\tau}{4} + \sum_{k=j_C}^{\lceil C\log \tau\rceil}(1/p_k) 
        \le \frac{\tau}{4} +  \frac{1}{Ce^4}\sum_{k=0}^{\lceil C\log\tau\rceil} e^{k/C}
        < \frac{\tau}{4} + \frac{1}{Ce^4}\cdot\frac{e^{(1+\lceil C\log\tau\rceil)/C}}{e^{1/C}-1}.
    \end{align*}
    Since $e^x \ge 1+x$ for all $x \ge 0$, 
     $C(e^{1/C}-1) \geq 1$ and since $C>1$
      it follows that
    \[
        \sum_{k=0}^{\lceil C\log \tau\rceil}(1/p_k) \le \frac{\tau}{4} + \frac{1}{e^4}\cdot e^{2+\log\tau} < \frac{\tau}{2}.
    \]
    Hence $M(\tau) \ge C\log\tau$, so $M(\tau) = \omega(\log\tau)$ as $\tau\to\infty$, as claimed; by~\eqref{eq:mutau-bound-1}, it follows that $\mu_\tau(\mathbf{p}) = \omega(\log \tau)$ as $\tau\to\infty$.

Finally, we observe that, since $\mu_\tau(\mathbf{p}) = \omega(\log \tau)$, 
for all $\lambda \in (0,1)$
there is a $\tau_\lambda$ such that, for $\tau \geq \tau_\lambda$, 
$\mu_\tau(\mathbf{p}) \geq (2/\lambda) \log \tau$. Thus
\[
\sum_{\tau=0}^\infty \mu_\tau(\mathbf{p})e^{-\lambda\mu_\tau(\mathbf{p})} \le \Theta(1) + \sum_{\tau=\tau_\lambda}^\infty \frac{(2/\lambda) \log \tau} {\tau^2} < \infty.
\]
The result therefore follows immediately from Theorem~\ref{thm:full-kmp}.
\end{proof}

 \section{Preparation for our Technical Result}\label{sec:coupling-defs}

We will prove 
Theorem~\ref{thm:simple-technical}  
by studying stochastic processes that are related to, but
not exactly the same as, backoff processes.
To do this, it will be useful to assume that the
send sequence $\mathbf{p}$ has $p_0=1$.
The next observation 
sets up the machinery that  enables  this.

\begin{observation}\label{obs:p0one}
Fix $\lambda \in (0,1)$
Let $\mathbf{p}$ be a send sequence.
Let $\mathbf{p}'$ be the send sequence that is identical to $\mathbf{p}$
except that $\mathbf{p}'_0=1$.
Let $X$ be a backoff process with arrival rate~$\lambda$ and send sequence~$\mathbf{p}$.
Let $X'$ be a backoff process with arrival rate~$\lambda p_0$ and send
sequence~$\mathbf{p}'$.
Then there is a coupling of 
$X$ and $X'$ such that, for 
every positive integer $t$ and every positive integer $j$,
$b_j^X(t) \supseteq b_j^{X'}(t)$.
\end{observation}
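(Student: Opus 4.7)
The plan is to construct an explicit coupling based on Poisson thinning of the newborn process of $X$. At each time step $t$, I would generate the newborns of $X$ as a Poisson$(\lambda)$ random variable $n_t$, and independently mark each newborn with probability $p_0$. By the thinning property, the marked newborns form a Poisson process with rate $\lambda p_0$, which I identify with the newborns of $X'$. The marking coin also serves as the first-step sending decision in $X$: a marked newborn sends in $X$ at its birth time (consistent with its sending probability $p_0$), and in $X'$ it likewise sends at birth because $p_0' = 1$ forces it to. An unmarked newborn does not send at birth in $X$ and lingers in bin $0$; at subsequent steps it uses fresh independent Bernoulli$(p_0)$ variables, giving it the correct Geometric$(p_0)$ time to its first send. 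For balls in bins $j\ge 1$, I would couple the processes so that whenever a ball $\beta$ appears in bin $j$ of $X'$ (and hence, by induction, in bin $j$ of $X$), the same Bernoulli$(p_j)$ variable determines whether $\beta$ sends in both processes; balls of $X$ with no counterpart in $X'$ use independent Bernoulli variables.

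The containment is then proved by induction on $t$ with the slightly strengthened inductive hypothesis that each ball of $X'$ lies in the same bin of $X$ at time $t$. The base case $t=0$ is immediate. For the inductive step I first observe $\send^{X'}(t)\subseteq\send^{X}(t)$: every marked newborn sends in both processes, and by the bin-$j$ sending coupling every sender from a bin $j\ge 1$ of $X'$ is also a sender from bin $j$ of $X$. A short case analysis on $|\send^{X'}(t)|$ then closes the step. If $|\send^{X'}(t)|\ge 2$, then $|\send^{X}(t)|\ge 2$ and each $X'$-sender moves forward one bin in both processes, so containment is preserved. If $|\send^{X'}(t)|=1$, the unique $X'$-sender $\beta$ escapes in $X'$, and in $X$ it either also escapes (when it is the only $X$-sender) or moves to the next bin, both consistent with containment. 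If $|\send^{X'}(t)|=0$, no ball of $X'$ sends, so by the coupling the same balls remain in their bins of $X$; any extra $X$-only balls that happen to send do not belong to $X'$ and cannot violate containment.

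The main obstacle is bookkeeping rather than any conceptual difficulty: I need to verify carefully that the constructed coupling reproduces the correct marginal distribution for each of $X$ and $X'$ while simultaneously tying together the common randomness, and that ``extra'' balls in $X$ (unmarked newborns and their descendants) never disrupt the coupling of the sending decisions for balls shared with $X'$. Conceptually, $X'$ is just $X$ with the ``stutter steps'' removed --- the steps during which a newborn sits in bin $0$ before its first send are compressed into a single instantaneous birth-and-send. Since $p_0' = 1$ forces $b_0^{X'}(t)=\emptyset$ after the bin-adjustment phase of every step, containment can only be at issue for bins $j\ge 1$, which is exactly the scope of the observation.
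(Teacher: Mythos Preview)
Your proposal is correct and follows exactly the approach the paper sketches in its one-sentence justification (``identify the newborns of $X'$ with the newborns of $X$ which immediately send''): the Poisson thinning you describe is precisely that identification, and the rest is the natural bookkeeping the paper omits. Your case analysis on $|\send^{X'}(t)|$ and the strengthened inductive hypothesis that each ball of $X'$ lies in the same bin of $X$ are the right way to fill in the details.
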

  
Indeed, in defining the coupling, we can identify the newborns of $X'$ with the newborns of $X$ which immediately send.

 \subsection{Definition of externally-jammed process}\label{sec:externally-jammed}

In order to study backoff processes
it will be helpful to define another process
that is associated with  an arrival rate $\lambda \in (0,1)$
and a send sequence $\mathbf{p} = p_0,p_1,p_2,\ldots$
of real numbers in the range $(0,1]$ 
with $p_0=1$. 

Note that in a backoff process~$X$ with $p_0=1$
every bin~$b_0^X(t)$ is empty, and this is why
bins are positive integers~$j$ in the
following definition.

$Y$ is an \emph{externally-jammed process} 
with \emph{arrival rate}~$\lambda\in (0,1)$
and \emph{send sequence} $\mathbf{p}$
if it behaves as follows.  \emph{Time steps}~$t$ are positive integers.
  \emph{Bins} $j$ are positive integers. We reserve $j$ as an index for bins, and use $\ell$ when we need a second index.
  Initialisation: For every  positive integer~$j$,
an integer $x_{j}$ is chosen independently from a
Poisson distribution with mean $\lambda/p_j$. The set
$b_j^Y(0)$ 
is the set containing $x_j$ balls which are ``born in process~$Y$ at time~$0$ in bin~$j$''.
Formally, the names of these balls are 
$ (Y,0,j,1),\dots,(Y,0,j,x_j)$.
   For any positive integer~$t$, the $t$'th step of~$Y$ involves
(i) step initialisation (including birth), 
(ii) sending, and (iii) adjusting the bins.  
The set $b_j^Y(t)$ is the  set of balls
in bin~$j$ just after (all parts of) the $t$'th step.
Step~$t$ proceeds as follows.
 
 \begin{itemize}

\item 
Part (i) of step $t$ (step initialisation, including birth):
An integer $n_t$ is chosen independently from a Poisson distribution
with mean~$\lambda$.
The set $b^{\prime Y}_0(t)$ 
of \emph{newborns} at time $t$
is the set containing $n_t$ balls which are ``born in $Y$ at time~$t$''.
Formally, the names of these balls are   
$(Y,t,1),\dots,(Y,t,n_t)$.
Then for all $j\geq 1$, $b^{\prime Y}_j(t)  = b_j^Y(t-1)$ .

\item Part (ii) of step $t$ (sending): For all $j\geq 0$, all balls in $b^{\prime Y}_j(t)$ 
send independently with probability $p_j$. 
We use $\send^Y(t)$ for the set of balls that send at time~$t$.

\item Part (iii) of step $t$ (adjusting the bins): 
For all $j\geq 1$,
$$b_j^Y(t) = (b^{\prime Y}_{j-1}(t) \cap \send^Y(t) ) \cup
(b^{\prime Y}_j(t) \setminus \send^Y(t)).$$

\end{itemize}

Note that 
 for any distinct $j$ and $j'$, $b_j^Y(t)$ is disjoint from
 $b_{j'}^Y(t)$. Also, since $p_0=1$, 
$\cup_{j\geq 1} b_j^Y(t) = \cup_{j\geq 1} b^{\prime Y}_j(t)$.
We now define the filtration $\calF^Y_0,\calF^Y_1,\ldots$
for the externally-jammed process~$Y$.
$\calF^Y_0$ contains, for each $j\geq 1$, $b_j^Y(0)$.
Then for each positive integer~$t$, 
$\calF^Y_t$ contains $\calF^Y_{t-1}$ 
together with $b^{\prime Y}_0(t)$ 
and, for each $j\geq 1$, $b_j^Y(t)$.
It is clear that $\send^Y(t)$ can be deduced from $\calF^Y_t$.
We   define  
 $\balls^Y(t) = \cup_j b_j^Y(t)$.
This can also be deduced from $\calF^Y_t$. 
We will use the following notation: Each ball $\beta = (Y,0,j,x)$
 has $\birth^Y(\beta)=0$.
 Each ball $\beta = (Y,t,x)$ has $\birth^Y(\beta) = t$.

It is going to be convenient in our analysis to think about
the balls of an externally-jammed process~$Y$ as having two
states, ``\emph{stuck}'' and ``\emph{unstuck}''. The set
$\stuck^Y(t)$ will be the set of all
balls in $\balls^Y(t)$  
that are stuck 
in the process~$Y$ at time~$t$
and $\unstuck^Y(t) = \balls^Y(t) \setminus \stuck^Y(t)$. 
(Given this identity, we will define either $\unstuck^Y(t)$
or $\stuck^Y(t)$ for each~$t$, the other can be deduced.)
We will assign the states ``stuck'' and ``unstuck'' in such a way that 
$\stuck^Y(t)$  
and $\unstuck^Y(t)$
can be deduced from $\calF^Y_t$, so the
assignment of states to balls is just a convenience.

First, 
$\stuck^Y(0) = \emptyset$.
Then for any $t\geq 1$ we have the following definition.
\begin{itemize}
 \item ${\newstuck}^Y(t) = \stuck^Y(t-1) \cup b^{\prime Y}_0(t)$.
 \item $\stucksend^Y(t) = {\newstuck}^Y(t) \cap \send^Y(t)$.
 \item If $|\stucksend^Y(t)| = 1$ then 
$\unstuck^Y(t) = \unstuck^Y(t-1) \cup  \stucksend^Y(t)$.
Otherwise, 
$\unstuck^Y(t) = {\unstuck}^Y(t-1)$.
\end{itemize}

For $t\geq 1$ we define $\stuck_j^Y(t) = \stuck^Y(t) \cap b_j^Y(t)$.
 Note that the stuck balls in an externally-jammed
 process correspond to a backoff process. This is captured in the following observation.

\begin{observation}\label{obs:ext-jammed-useful}
For any backoff process $X$ with arrival rate $\lambda \in(0,1)$ and send sequence $\mathbf{p} = p_0,p_1,\dots$ with $p_0=1$, there is a coupling of~$X$ with an externally-jammed process $Y$ with arrival rate $\lambda$ and send sequence $\mathbf{p}$ such that for all $j \ge 1$ and $t \ge 0$, $|b^X_j(t)| = |\stuck_j^Y(t)|$.
\end{observation}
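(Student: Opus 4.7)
The plan is to construct the required coupling explicitly by maintaining, across all times $t$, a ball-wise bijection between the balls of $X$ and the stuck balls of $Y$; the desired cardinality equality then falls out immediately. I would couple the two processes as follows. At each step $t$, draw one common value $n_t$ from a Poisson distribution with mean $\lambda$ and identify the $n_t$ newborns of $X$ with the $n_t$ newborns of $Y$. Whenever a ball of $Y$ is stuck at the start of step $t$, couple its send/don't-send coin flip with the corresponding ball of $X$ under the current bijection; for every ball of $Y$ that is unstuck (including every ball present at time $0$, since $\stuck^Y(0) = \emptyset$), draw its send/don't-send decision independently with the prescribed probability $p_j$. The marginal laws of $X$ and $Y$ are then correct by construction.

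The invariant to maintain by induction on $t$ is that there exists a bijection $\phi_t\colon \balls^X(t)\to \stuck^Y(t)$ with $\phi_t(b_j^X(t)) = \stuck_j^Y(t)$ for every $j\geq 1$. The base case $t=0$ is immediate because both sides are empty. The identity driving the inductive step is
\[
    |\send^X(t)| \;=\; |\stucksend^Y(t)|,
\]
which holds because (i) both processes have the same $n_t$ newborns and, since $p_0 = 1$, every newborn sends in both processes; and (ii) the coupling forces the sends from balls in $b_j^X(t-1)$ and the sends from balls in $\stuck_j^Y(t-1)$ to match ball-for-ball via $\phi_{t-1}$. In particular, the $X$-escape condition $|\send^X(t)| \leq 1$ and the $Y$-unsticking condition $|\stucksend^Y(t)| \leq 1$ coincide, and when $|\send^X(t)| = |\stucksend^Y(t)| = 1$ the escaping $X$-ball is $\phi_{t-1}$-matched with the ball that becomes unstuck in $Y$.

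Given this, defining $\phi_t$ is bookkeeping. In the collision case $|\stucksend^Y(t)|\geq 2$, a ball that sends in $X$ moves from bin $j-1$ to bin $j$, and by the coupling its $\phi_{t-1}$-image sends in $Y$ and makes the same move while remaining stuck (the unsticking rule does not fire); non-senders stay put in both processes, so we can extend $\phi_{t-1}$ to $\phi_t$ without changing which balls are paired. In the escape case, the unique escaping $X$-ball and the unique unsticking $Y$-ball both leave the bijection, and everything else is handled exactly as in the collision case. The one point requiring care is that the unstuck balls of $Y$ cannot contaminate the bijection: by the definition of $\stucksend^Y(t)$ they contribute nothing to the unsticking test, and their sends and bin changes in $Y$ keep them unstuck, so they are never paired with $X$-balls. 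I do not anticipate a serious obstacle — the argument is purely combinatorial, and the only real content is the observation that fixing $p_0 = 1$ synchronises newborn behaviour in the two processes, which is precisely what allows the stuck-ball dynamics in $Y$ to mimic the full dynamics of $X$.
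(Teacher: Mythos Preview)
Your argument is correct and is exactly the construction the paper has in mind; the paper states this as an observation without proof, and your induction on~$t$ via a ball-wise bijection $\phi_t$ together with the identity $|\send^X(t)| = |\stucksend^Y(t)|$ is the natural way to make it rigorous. One tiny wording quibble: the $Y$-unsticking rule in the paper fires when $|\stucksend^Y(t)| = 1$, not $\le 1$, but since the case $|\stucksend^Y(t)| = 0$ is vacuous this does not affect anything.
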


 \subsection{A two-stream externally-jammed process}\label{sec:two-stream}
   
Observation~\ref{obs:ext-jammed-useful} shows
that in order to study backoff processes we can
instead study externally-jammed processes and focus on the stuck balls.

Typically, it is difficult to study externally-jammed processes due to correlation between balls. To alleviate this, we define
another related process, called a two-stream externally-jammed process. The idea is that the 
balls are (arbitrarily) divided into two streams.
Keeping track of separate streams enables domination by
processes with more independence.
   
Here is the definition.
We build a  \emph{two-stream externally-jammed process
$\YAB=\YABfull$ } with \emph{arrival rate}~$\lambda$
and \emph{send sequence} $\mathbf{p}$ 
by combining two independent externally-jammed processes $Y^A$ and $Y^B$, each  with arrival rate $\lambda/2$ and send sequence $\mathbf{p}$.
The balls of $\YAB$ are the balls of $Y^A$ together with the balls of $Y^B$. We refer to the former as \emph{$A$-balls} of $\YAB$ and to the latter
as \emph{$B$-balls} of $\YAB$.
For all $t\geq 0$ and $j\geq 1$, $b_j^{\YAB}(t) = b_j^{Y^A}(t) \cup b_j^{Y^B}(t)$.
For all $t\geq 1$ and $j\geq 0$, $b^{\prime \YAB}_j(t) = b^{\prime Y^A}_j(t) \cup b^{\prime Y^B}_j(t)$.
For all $t\geq 1$, $\send^{\YAB}(t) = \send^{Y^A}(t) \cup \send^{Y^B}(t)$.

We define
$\balls^{\YAB}(t) = \cup_j b_j^{\YAB}(t)$.
For $C\in \{A,B\}$,
$\balls^{\YAB}(C,t)$ is the set of all $C$-balls in
 $\balls^{\YAB}(t)$. Also, each
  $C$-ball $\beta$ of $\YAB$ has $\birth^{\YAB}(\beta) = \birth^{Y^C}(\beta)$.

Lemma~\ref{lem:coupleY} below
contains the following easy observation:
Let $Y^A$ and $Y^B$ be independent externally-jammed processes with arrival rate~$\lambda/2$ and consider the two-stream externally-jammed process $\YAB = \YABfull$. 
Let $Y$ be an externally-jammed process with arrival rate $\lambda$.
There is a coupling of $Y$ with $T$
which
provides, for every non-negative integer~$t$,
  a bijection~$\pi_t$ from
$\balls^{Y}(t)$ to $\balls^T(t)$
such that   $\beta \in b_j^{Y}(t)$ iff $\pi_t(\beta) \in b_j^T(t)$.
Given this, it may seem unclear what the point is of defining
the two-stream externally jammed process, since this is essentially ``the same''
as an externally jammed process.
The only difference is the names of the balls!
The point of it is that we will assign the
states stuck and unstuck to balls of~$\YAB$ in a different, and more useful, way.

For $C\in \{A,B\}$, we will now define the set
$\stuck^{\YAB}(C,t)$, which will be the set of all
 balls in $\balls^{\YAB}(C,t)$  
that are \emph{stuck} 
in the process~$\YAB$ at time~$t$.
We also define $\stuck^{\YAB}(t) = \stuck^{\YAB}(A,t) \cup \stuck^{\YAB}(B,t)$.
Similarly to externally jammed processes, it will always be the case that
$\unstuck^{\YAB}(C,t) = \balls^{\YAB}(C,t) \setminus 
  \stuck^{\YAB}(C,t)$,
so it is only necessary to define
  $\stuck^{\YAB}(C,t)$ or $\unstuck^{\YAB}(C,t)$
  and the other is defined implicitly.

For any $C\in \{A,B\}$ we will define 
$\stuck^{\YAB}(C,0) =     \emptyset$.
Then for any $t\geq 1$   we have the following definitions:

\begin{itemize}
 
\item  For $C\in \{A,B\}$, ${\newstuck}^{\YAB}(C,t) = \stuck^{\YAB}(C,t-1) \cup b^{\prime Y^C}_0(t) $.

\item For $C\in \{A,B\}$, $\stucksend^{\YAB}(C,t) = {\newstuck}^{\YAB}(C,t) \cap \send^{\YAB}(t)$.

\item Case 1:
If, for distinct $C$ and $C'$ in $\{A,B\}$,
$\stucksend^{\YAB}(C,t) = \emptyset$ 
and 
$\stucksend^{\YAB}(C',t) \neq \emptyset$,
then 
\begin{itemize}
\item $\beta$ is chosen uniformly at random from
$\stucksend^{\YAB}(C',t) $
and 
\item $\unstuck^{\YAB}(C',t) = {\unstuck}^{\YAB}(C',t-1)\cup \{\beta\}$
and 
\item $\unstuck^{\YAB}(C,t) = {\unstuck}^{\YAB}(C,t-1)$.
\end{itemize}

\item Case 2:  Otherwise, for each $C\in \{A,B\}$
$\unstuck^{\YAB}(C,t) = {\unstuck}^{\YAB}(C,t-1)$.

\end{itemize}

Although the sets $b_j^{\YAB}(t)$, $b^{\prime \YAB}_j(t)$,
$\send^{\YAB}(t)$, and $\balls^{\YAB}(C,t)$  
can be deduced from  $\calF^{Y^A}_t$ and $\calF^{Y^B}_t$,
the sets $\stuck^{\YAB}(C,t)$  
cannot. This is because  $\calF^{Y^A}_t$ and $\calF^{Y^B}_t$
do not capture any information about the choice
of the random ball~$\beta$ 
(from $\stucksend^{\YAB}(A,t)$ or $\stucksend^{\YAB}(B,t)$)
that may become unstuck at time~$t$.
Thus, we take   $\calF^{\YAB}_t$ to be
$\calF^{Y^A}_t$, $\calF^{Y^B}_t$
and the sets $\stuck^{\YAB}(A,t')$ and $\stuck^{\YAB}(B,t')$ 
for $t'\leq t$
so that all information about the evolution
of the process~$\YAB$, including the assignment of states
to the balls, is captured.

 For $C\in \{A,B\}$, $t\geq 1$ and $j \ge 1$, define
$\stuck_j^{\YAB}(C,t) = \stuck^{\YAB}(C,t)\cap b_j^{\YAB}(t)$.

We can now couple an externally-jammed process~$Y$ to
a two-stream externally-jammed process $\YAB$ in the following lemma.

\begin{lemma}\label{lem:coupleY}
Let $Y^A$ and $Y^B$ be independent externally-jammed processes with arrival rate~$\lambda/2$ and consider the two-stream externally-jammed process $\YAB = \YABfull$. 
Let $Y$ be an externally-jammed process with arrival rate $\lambda$.
There is a coupling of $Y$ with $T$
which
provides, for every non-negative integer~$t$,
  a bijection~$\pi_t$ from
$\balls^{Y}(t)$ to $\balls^T(t)$
such that the following two properties hold.
\begin{itemize}
\item $\beta \in b_j^{Y}(t)$ iff $\pi_t(\beta) \in b_j^T(t)$, and
\item $\beta \in \unstuck^{Y}(t)$ implies $\pi_t(\beta) \in \unstuck^T(t)$.
\end{itemize}
\end{lemma}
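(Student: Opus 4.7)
The plan is induction on time steps, building $\pi_t$ jointly with the coupling. At $t=0$ the coupling works because $|b_j^Y(0)| \sim \Poisson(\lambda/p_j)$, while on the $T$ side the count $|b_j^{Y^A}(0)| + |b_j^{Y^B}(0)|$ is a sum of two independent $\Poisson(\lambda/(2p_j))$ contributions, hence also $\Poisson(\lambda/p_j)$; so I can couple to make the counts agree and take $\pi_0$ to be any bin-preserving bijection. Both stuck sets are empty, so the unstuck inclusion is vacuous. For the inductive step, I couple the newborns using the distributional identity $\Poisson(\lambda) = \Poisson(\lambda/2) + \Poisson(\lambda/2)$, identifying $Y$'s newborns with the union of the $A$- and $B$-newborns of $T$ and extending $\pi_t$ arbitrarily across this identification. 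I then couple the send decisions so that $\beta \in \send^Y(t)$ iff $\pi_t(\beta) \in \send^T(t)$; this is valid because matched balls lie in the same bin $j$ just before sending and both send with probability $p_j$. Since senders move up one bin in both processes and non-senders remain, the bin-preservation property $\pi_t(b_j^Y(t)) = b_j^T(t)$ propagates automatically.

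The only nontrivial step, and the main obstacle, is the unstuck inclusion. My workhorse claim will be that under the coupling, writing $\stucksend^T(t) := \stucksend^T(A,t) \cup \stucksend^T(B,t)$, one has $\pi_t^{-1}(\stucksend^T(t)) \subseteq \stucksend^Y(t)$. Because sends are coupled identically, this reduces to checking $\pi_t^{-1}(\newstuck^T(A,t) \cup \newstuck^T(B,t)) \subseteq \newstuck^Y(t)$, which follows from the identification of newborns together with the contrapositive form of the inductive hypothesis, namely $\stuck^T(t-1) \subseteq \pi_{t-1}(\stuck^Y(t-1))$. So once the newborn and send couplings are in place, this claim is essentially bookkeeping.

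Given the key claim, the induction closes by a short case analysis on $|\stucksend^Y(t)|$. If $|\stucksend^Y(t)| \neq 1$, no ball unsticks in $Y$, so $\unstuck^Y(t) = \unstuck^Y(t-1)$ and the inclusion follows from the IH together with the monotonicity $\unstuck^T(t) \supseteq \unstuck^T(t-1)$. If instead $\stucksend^Y(t) = \{\beta\}$, the key claim forces $|\stucksend^T(t)| \leq 1$, and I need to show $\pi_t(\beta) \in \unstuck^T(t)$. In the subcase $\stucksend^T(t) = \{\pi_t(\beta)\}$, exactly one of $\stucksend^T(A,t), \stucksend^T(B,t)$ is nonempty so Case~1 of the two-stream unsticking rule fires, with $\pi_t(\beta)$ trivially the uniformly chosen ball that becomes unstuck. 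In the subcase $\stucksend^T(t) = \emptyset$, $\pi_t(\beta)$ sends in $T$ but is not in $\newstuck^T(t)$; since every newborn in $T$ lies in $\newstuck^T(t)$ and every stuck-at-$t{-}1$ ball does too, $\pi_t(\beta)$ must already have been in $\unstuck^T(t-1) \subseteq \unstuck^T(t)$. Either way $\pi_t(\beta) \in \unstuck^T(t)$, closing the induction. Conceptually, the crux is that the single-stream unsticking condition $|\stucksend^Y|=1$ is more restrictive than the two-stream condition in exactly the direction we need, and the $\stucksend$ inclusion is the mechanism that transmits this asymmetry through the coupling.
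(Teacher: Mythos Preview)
Your proposal is correct and follows essentially the same approach as the paper: couple newborns via the Poisson splitting/merging identity, couple sends identically through the bin-preserving bijection, and then use the key containment $\pi_t^{-1}(\stucksend^T(t)) \subseteq \stucksend^Y(t)$ to force $|\stucksend^T(t)| \le 1$ whenever $|\stucksend^Y(t)| = 1$, handling the two subcases $\stucksend^T(t)=\{\pi_t(\beta)\}$ and $\stucksend^T(t)=\emptyset$ just as the paper does. The only cosmetic differences are that the paper writes down an explicit naming for the bijection rather than ``any bin-preserving bijection'' and organises the unstuck argument by following a single ball $\beta\in\unstuck^Y(t)$ rather than by a case split on $|\stucksend^Y(t)|$; both lead to the same verification.
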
 

\begin{proof}
We first define the bijections~$\pi_t$. In fact, it is easier to define the inverses~$\pi_t^{-1}$.
 
In the initialisation of $\YAB$, 
for each $C\in \{A,B\}$, let $x_j^C$ denote the number of
$C$-balls born in $\YAB$ at time~$0$ in bin~$j$.
Then $x_j$, the number of balls born in~$Y$ at time $0$ in bin $j$,
is given by $x_j = x_j^A + x_j^B$.
The bijection $\pi^{-1}_0$ 
from $\balls^T(0)$ to $\balls^Y(0)$
then maps the ball $(Y^A,0,j,x)$ to $(Y,0,j,x)$
and the ball $(Y^B,0,j,x)$ to $(Y,0,j,x_j^A+x)$
(for each $x$).

For any $t\geq 1$, the bijection $\pi^{-1}_t$ 
from $\balls^T(t)$ to $\balls^Y(t)$
is the same as $\pi_{t-1}$
except that it maps the balls in 
${b'_0}^{Y^A}(t) \cup {b'_0}^{Y^B}(t)$ 
to ${b'_0}^Y(t)$
using a similar re-naming of balls.

Note that both properties hold at time $t=0$
since all balls are unstuck at time~$0$.

Suppose, for a positive integer~$t$,
that the first $t-1$ steps of $Y$ and $T$ have
been coupled so that the two properties hold. 
We will show how to couple the $t$'th step of the process
so that the two properties again hold.
 
The first property allows us to couple the send decisions
so that
$\beta \in \send^Y(t)$ iff $\pi_t(\beta) \in \send^T(t)$.
This allows us to conclude that the first property holds at time~$t$.

To finish, we wish to show that any ball $\beta \in \unstuck^Y(t)$ has $\pi_t(\beta) \in \unstuck^T(t)$. If $\beta \in \unstuck^Y(t-1)$ then this is immediate by the second property at time $t-1$; we may therefore focus on the case where $\beta$ only becomes unstuck at time $t$, i.e., where $\beta \in \stucksend^Y(t) \cap \unstuck^Y(t)$. By the definition of the $Y$ process,
$\beta\in \stucksend^{Y}(t) \cap \unstuck^{Y}(t)$
implies that $\stucksend^{Y}(t) = \{\beta\}$.

By the first property at time~$t$ and the second property at time~$t-1$,
if any ball $\beta^*$ satisfies
$\beta^* \in \stucksend^T(t) $ then
$\pi_t^{-1}(\beta^*) \in \stucksend^Y(t)$
so $\pi_t^{-1}(\stucksend^T(t) ) \subseteq \stucksend^Y(t)$.
Hence $|\stucksend^T(t)| \le 1$. 
Now we are finished because $\pi_t(\beta)$ is either in 
$\unstuck^T(t-1)$ (in which case it is also in $\unstuck^T(t)$)
or it is in $\stucksend^T(t)$ (in which case 
$\stucksend^T(t)=  \{\pi_t(\beta)\}$
so $\pi_t(\beta) \in \unstuck^T(t)$).

\end{proof}

\subsection{Ball vectors and noise}

So far, we have defined backoff processes, 
externally-jammed processes, and
two-stream externally-jammed processes.

Here we collect some notation about these processes that we haven't needed so far, but which will be useful later.

For an externally-jammed process~$Y$
we   define
\begin{align*}
\ballvect^Y(t) &= (b_1^Y(t), b_2^Y(t), \dots),\\
\stuckvect^Y(t) &= (\stuck_1^Y(t), \stuck^Y_2(t), \dots).
\end{align*}
Similarly, for a two-stream externally-jammed process~$\YAB=\YABfull$
we   define 
\begin{align*}
    \ballvect^{\YAB}(t) &= (b_1^{\YAB}(t), b_2^{\YAB}(t), \dots),\\
    \stuckvect^{\YAB}(C,t) &= 
   (\stuck_1^{\YAB}(C,t), 
   \stuck_2^{\YAB}(C,t), \dots) \mbox{ for each }C \in \{A,B\}.
\end{align*}
We define $\stuckvect^{\YAB}(t)$ as the position-wise sum of 
$\stuckvect^{\YAB}(A,t)$ and $\stuckvect^{\YAB}(B,t)$.

\begin{definition}\label{def:noise}
Suppose that $\boldx=(x_1,x_2,\ldots)$ is a sequence of non-negative integers.
The \emph{noise} of $\boldx$
is defined by $f(\boldx) = \lambda + \sum_{i\geq 1} x_i p_i$.
\end{definition}
Note that for an externally-jammed process~$Y$ with arrival rate~$\lambda$,
$f(\stuckvect^Y(t-1))$ is 
the expected size of 
$  \stucksend^Y(t) $ conditioned on the filtration $\calF^Y_{t-1}$. Similarly, for a two-stream externally-jammed process~$\YAB$ with arrival rate~$\lambda$,
$f(\stuckvect^{\YAB}(t-1))$ is 
the expected size of 
$  \stucksend^{\YAB}(A,t) \cup 
\stucksend^{\YAB}(B,t)$ conditioned on the filtration $\calF^{\YAB}_{t-1}$.

\subsection{Statement of our technical result} \label{sec:technicalresult}
 
Recall the definition of $(\lambda,\eta,\nu)$-suitable
(Definition~\ref{def:suitable}).
 The following Theorem and   Corollary form the centre 
 of our argument. 
 Corollary~\ref{cor:last}  immediately implies Theorem~\ref{thm:simple-technical}, the main technical theorem
 from the introduction.
 Theorem~\ref{thm:technical}
  and Corollary~\ref{cor:last} are proved in Section~\ref{sec:mainproofs}.

 \begin{restatable}{theorem}{thmtechnical}
\label{thm:technical}
Fix  $\lambda$, $\eta$ and $\nu$ in $(0,1)$.
Let $\mathbf{p}$ be a $(\lambda,\eta,\nu)$-suitable send sequence.
Let $T$ be a two-stream externally-jammed process with arrival rate $\lambda$ and send sequence $\mathbf{p}$.
Let $\mathcal{R}^T = \{t \mid \stuck^T(t) = \emptyset \}$.
With probability~$1$, $\mathcal{R}^T$ is finite.
\end{restatable}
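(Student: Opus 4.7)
The plan is to follow the Aldous-style framework from Section~\ref{sec:explainKM}, but with Lemma~\ref{lem:newdombinoms} (the technical version of Lemma~\ref{lem:simple-newdombinoms}) replacing Aldous's per-bin union bound. Concretely, I aim to show that for any $t_0$ such that the startup event $\Einit(t_0)$ holds, the two-stream process $T$ is $t_0$-jammed for every $\tau \geq 1$ with probability at least some constant $p_\star > 0$ depending only on $\lambda,\eta,\nu$. Since $\Einit(t_0)$ is designed so that it is almost surely attained at some finite $t_0$, and since jammedness at time $\tau$ forces $\stuck^T(t_0 + \tau - 1) \neq \emptyset$ once $\tau$ is large enough that $\zeta|\bins(\tau-1)| > \lambda$, this will give $\Pr(\mathcal{R}^T \text{ is finite}) > 0$; a renewal-style argument then boosts this to probability~$1$.

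The main engine is an induction on $\tau$, carried out conditional on $\Einit(t_0)$. Let $J_\tau$ denote the event that $T$ is $t_0$-jammed for $\tau$. By construction of $\Einit(t_0)$ (and its dependence on $\Cinit$), $J_\tau$ holds for all small $\tau$ up to some $\tau_0$. For the inductive step, assume $J_\tau$ holds and apply Lemma~\ref{lem:simple-newdombinoms} at time $\tau$: the first bullet of its dichotomy is ruled out, so we obtain independent Poisson variables $\{Z_j^A,Z_j^B\}_{j \in \bins(\tau)}$ with $\E[Z_j^C] = \lambda/(4 p_j)$ that lower-bound the respective stuck-set sizes $|\stuck_j^T(C,t_0+\tau)|$. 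Hence for each $C \in \{A,B\}$,
\[
f(\stuckvect^T(C, t_0+\tau)) \;\geq\; \sum_{j \in \bins(\tau)} p_j\, Z_j^C,
\]
which is a sum of independent nonnegative variables with total mean $\lambda|\bins(\tau)|/4$. Applying the Chernoff lower-tail bound of Lemma~\ref{lem:chernoff-small-dev} (to the Poisson $\sum_j Z_j^C$, combined with a weighting argument controlling the contribution of each bin), this sum is at least $\zeta|\bins(\tau)|$ except with probability $e^{-\Omega(|\bins(\tau)|)}$. That is exactly the event $J_{\tau+1}$.

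Because the block construction in Section~\ref{sec:jammedness} has $|\bins(\tau)|$ growing at least geometrically, the failure probabilities $e^{-\Omega(|\bins(\tau)|)}$ are summable, so a union bound yields $\Pr\!\bigl(\bigcap_{\tau \geq 1} J_\tau \mid \Einit(t_0)\bigr) \geq p_\star > 0$. On this event $\stuck^T(t) \neq \emptyset$ for all sufficiently large $t$, so $\mathcal{R}^T$ is finite. To promote ``finite with probability $\geq p_\star$'' to ``finite almost surely'', I would exploit the fact that $T$ is initialised in the stationary product-of-Poissons distribution and that $\Einit$ is defined in terms of bins being suitably full, an event with fixed positive stationary probability. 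Using the strong Markov property at a sequence of times on which the ball vector returns to a suitable configuration, the event ``$\mathcal{R}^T$ is infinite'' is sandwiched between $(1-p_\star)^k$ and $0$ for every $k$, and hence has probability $0$.

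The main obstacle I anticipate is the Chernoff step, because within a single block $\bins(\tau)$ the weights $p_j$ can vary by an exponential factor (from roughly $1$ down to $p_*(\lambda,\eta,\nu)$, or even as low as $\nu^{|\bins(\tau)|}$). The $(\lambda,\eta,\nu)$-suitability hypothesis is exactly what keeps this under control: the assumption that at least an $\eta$-fraction of indices in any long prefix satisfy $p_j \leq p_*(\lambda,\eta,\nu)$ ensures that many of the dominating Poisson means $\lambda/(4p_j)$ are large, so that concentration around the mean is strong, while the assumption $p_j > \nu^j$ prevents any single bin from contributing a wildly atypical term. Calibrating the block sizes, the constant $\zeta$, and the constant $\Cinit$ in $\Einit(t_0)$ so that the Chernoff bound survives the heterogeneity of the $p_j$ and the induction propagates is the technical content I would carry out in Section~\ref{sec:provemain}.
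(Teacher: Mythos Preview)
Your proposal is essentially the paper's own argument: Lemma~\ref{lem:tau-init} handles small~$\tau$, Lemma~\ref{lem:toget} carries out exactly the inductive step you describe via Lemma~\ref{lem:newdombinoms} and a Chernoff/indicator argument on the bins $j\in\bins(\tau)$ with $p_j\le p_*$, these are assembled into Lemma~\ref{lem:main} giving $\Pr(\bigcap_\tau J_\tau\mid\Einit)\ge 4/5$, and the renewal/Borel--Cantelli argument at the return times $R_i\in\mathcal{R}^T$ finishes. One correction: $|\bins(\tau)|$ grows only \emph{logarithmically} in~$\tau$ (Lemma~\ref{lem:I-bound}), not geometrically, so the per-step failure probability is $\tau^{-\Omega(1)}$ rather than geometrically small---summability then genuinely requires the calibration of $p_*(\lambda,\eta,\nu)$ in Definition~\ref{def:pstar} to push the exponent above~$1$, which is precisely what the computation in Lemma~\ref{lem:toget} does.
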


   \begin{restatable}{corollary}{corlast}\label{cor:last}
Fix $\lambda$, $\eta$ and $\nu$ in $(0,1)$.
Let $\mathbf{p}$ be a $(\lambda,\eta,\nu)$-suitable send sequence.
Let $X$ be a
backoff process with arrival rate $\lambda$ 
and send sequence~$\mathbf{p}$.
Let $\mathcal{R}^{X} = \{t \mid \balls^{X}(t)=\emptyset\}$.
Then, with probability $1$, $\mathcal{R}^{X}$ is finite.
Hence, $X$ is transient so it is unstable.
 \end{restatable}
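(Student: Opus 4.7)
The plan is to transport the conclusion of Theorem~\ref{thm:technical} back through two couplings to the backoff process $X$. Since $\mathbf{p}$ is $(\lambda,\eta,\nu)$-suitable, Definition~\ref{def:suitable} forces $p_0=1$, which is exactly what is needed to apply Observation~\ref{obs:ext-jammed-useful} and to use the formal setup of externally-jammed processes.

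First I would invoke Observation~\ref{obs:ext-jammed-useful} to obtain an externally-jammed process $Y$ with arrival rate $\lambda$ and send sequence $\mathbf{p}$ coupled to $X$ so that $|b_j^X(t)|=|\stuck_j^Y(t)|$ for every $j\ge 1$ and $t\ge 0$. Because $p_0=1$ every newborn of $X$ sends immediately, so $b_0^X(t)=\emptyset$ for $t\ge 1$ and hence $\balls^X(t)=\emptyset$ iff $\stuck^Y(t)=\emptyset$. I would then apply Lemma~\ref{lem:coupleY} to couple $Y$ with a two-stream externally-jammed process $T=T(Y^A,Y^B)$ of arrival rate $\lambda$, obtaining bijections $\pi_t\colon \balls^Y(t)\to\balls^T(t)$ satisfying $\beta\in\unstuck^Y(t)\Rightarrow \pi_t(\beta)\in\unstuck^T(t)$. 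Contrapositively, $\pi_t^{-1}(\stuck^T(t))\subseteq \stuck^Y(t)$, so $\stuck^Y(t)=\emptyset$ implies $\stuck^T(t)=\emptyset$. Composing the two couplings we obtain, for every $t\ge 1$, $\balls^X(t)=\emptyset\Rightarrow \stuck^T(t)=\emptyset$; since $0\in\mathcal{R}^T$ holds trivially by the definition of $\stuck^T$, we conclude $\mathcal{R}^X\subseteq\mathcal{R}^T$. Theorem~\ref{thm:technical} then immediately yields that $\mathcal{R}^X$ is finite almost surely.

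To conclude transience, observe that $X$ is a time-homogeneous Markov chain whose initial state is the empty configuration. The finiteness of $\mathcal{R}^X$ says that the empty state is visited only finitely often almost surely. By the strong Markov property the probability of ever returning to the empty state must therefore be strictly less than $1$ (otherwise successive returns would iterate almost surely to give infinitely many visits). Hence the empty state is transient, $X$ itself is transient, and therefore unstable.

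The main thing to take care of when writing out the details is that the two couplings (Observation~\ref{obs:ext-jammed-useful} and Lemma~\ref{lem:coupleY}) can be realised on a single probability space, and that the $p_0=1$ hypothesis is genuinely being used to translate emptiness of $\balls^X$ to emptiness of $\stuck^Y$. There is no serious obstacle in the corollary itself; all of the real difficulty is encapsulated by Theorem~\ref{thm:technical}, which is assumed here.
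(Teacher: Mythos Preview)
Your proposal is correct and follows essentially the same route as the paper: apply Theorem~\ref{thm:technical} to the two-stream process $T$, then use Lemma~\ref{lem:coupleY} and Observation~\ref{obs:ext-jammed-useful} to transport the finiteness of the empty-state return set back through $Y$ to $X$. Your version is in fact slightly more explicit than the paper's, since you spell out why $p_0=1$ (guaranteed by suitability) makes $\balls^X(t)=\emptyset$ equivalent to $\stuck^Y(t)=\emptyset$, and you give the standard strong-Markov argument for transience that the paper leaves implicit.
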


\subsection{\texorpdfstring{Definitions depending on a $(\lambda,\eta,\nu)$-suitable send sequence $\mathbf{p}$ }{Definitions depending on a suitable send sequence p}}\label{sec:jammedness}\label{sec:implicitconsts}
  
 We will need lots of definitions to prove  Theorem~\ref{thm:technical}.
 For everything leading up to and including
 the proof of Theorem~\ref{thm:technical}, fix 
 an arrival rate $\lambda\in(0,1)$ 
 a real number $\eta \in (0,1)$,
 a real number $\nu\in (0,1)$,
 and
 a $(\lambda,\eta,\nu)$-suitable send sequence~$\mathbf{p}$.
 Since $\lambda$, $\eta$ and $\nu$ are
  fixed, we will use $\Cblock$ as shorthand for $\Cblock(\eta)$ and we will use $p_*$ as
  a shorthand for $p_*(\lambda,\eta,\nu)$.
  All of the following notation will depend on 
 $\mathbf{p}$, $\lambda$, $\eta$, and~$\nu$.

 \begin{definition}\label{def:j0} 
 $j_0$ is a constant such that for every $j\geq j_0$ we have $p_j >  \nu^j$.
 \end{definition}
 \begin{definition}\label{def:jmin}
  $\jmin = \min\{j \mid p_j < 1\}$.  
  \end{definition}
We know that $j_0$ and $\jmin$ exist since $\mathbf{p}$ is $(\lambda,\eta,\nu)$-suitable.

Before proceeding with the rest of our formal definitions, we give a brief sense of what their purpose will be. Our proof will proceed by partitioning the bins of $T$ into contiguous ``blocks'' $B_1,B_2,\dots$ of exponentially growing length, with block $B_i$ containing roughly $\Cblock^i$ bins. We will wait until, by chance, at some time $t_0$, some large initial segment $B_1,\dots,B_{I_0}$ of bins contains many stuck $A$-balls and $B$-balls; we will denote the number of balls required by $\Cinit$. Since we will choose $\Cinit$ to be large, these stuck balls are likely to take a long time to move out of $B_1,\dots,B_{I_0}$, say until time $t_0 + \tauinit$. Until they do so, the $A$-balls will interfere with any balls escaping from $Y^B$, and crucially they will do so independently from the evolution of $Y^B$; likewise, the $B$-balls will interfere with any balls escaping from $Y^A$. 

We will use this interference to argue that $B_{I_0+1}$ is likely to fill with newborn stuck $A$-balls and $B$-balls by some time $t_0+\tau_1 < \tau+\tauinit$, that $B_{I_0+2}$ is likely to fill with stuck $A$-balls and $B$-balls by some time $t_0+\tau_2 < \tau+\tauinit$, and so on. After time $t+\tauinit$, we will show that the process becomes self-sustaining, and that $B_i$ being full of stuck $A$-balls and $B$-balls is enough to fill $B_{i+1}$ by itself. We will formalise this idea at the end of the section with the definition of ``jammedness''.

We now formalise this block construction and define our constants.

\begin{definition}  \label{def:block}
  For each  positive integer~$i$, the \emph{block}
    $B_i$ is a contiguous interval of positive integers (called bins). This block  
has bins $\ell(i),\ldots,u(i)$ which are defined as follows.
First, $u(0)=0$ and
$\ell(1)=u(1)=1$. Then for $i>1$, $\ell(i)=u(i-1)+1$ and
$u(i) = \Cblock u(i-1)$.
 \end{definition}

 \begin{definition}\label{def:weight}
  The \emph{weight} of a bin $j$ is $w_j = 1/p_j$.
The \emph{weight} of a block $B_i$ is $W_i = \sum_{j \in B_i} 1/p_j$.
     \end{definition}

Observe that the weight of a block $B_i$ is the expected number of time steps required  for a ball to pass through it, from $b_{\ell(i)}$ to $b_{u(i)+1}$.

 \begin{definition}\label{def:zeta}
  $\zeta= \eta \lambda/24$.
  \end{definition}
The constant $\zeta$ will control how full we require the bins of a given block to be in order to consider $T$ ``jammed''.

\begin{definition}\label{def:I0}
Let $I_0$ be the smallest integer such that
\begin{itemize}
\item $I_0 \geq \jmin$.
\item  For all  $n\geq \ell(I_0)$:
\begin{itemize}
    \item  $|\{ j \in [n] \mid p_j \leq p_*(\lambda,\eta,\nu)\}| > \eta n$,
    and
    \item  
$\nu^n < p_n$.
\end{itemize}
\item   $\zeta |B_{I_0}| \geq 4$.
\item  
For $c = \zeta \frac{\Cblock-1}{16\Cblock^{{2}}}$,
we have $I_0 \geq \log(4/c)/c$ 
and $\exp(I_0 c)\geq 4I_0$.
\end{itemize}
\end{definition}  

Definition~\ref{def:suitable} ensures that the
second condition can be satisfied.
The final condition is satisfied if $I_0$ is sufficiently large. It ensures that, 
for all $i \ge I_0$, 
$  4 i \leq 
   \exp({ \zeta 
   i \frac{\Cblock-1}{16\Cblock^{{2}}}})
   $. 
It is not actually important that $I_0$ is the smallest
integer satisfying the properties in Definition~\ref{def:I0},
it is merely important that all of these properties
(which are monotonically increasing in~$I_0$) are satisfied.

\begin{definition}\label{def:taui}
 Define $\tau_0=0$.  
For every positive integer~$i$, define
  $\tau_i = 
        \Cslack\sum_{k=1}^{I_0+i}(I_0+i-k+1)\lceil W_k \rceil$.
\end{definition}

 \begin{definition}\label{def:Ioftau}
 Define $I(0) = I_0+1$.
 For every positive integer~$\tau$,
 $I(\tau)\geq I_0+1$ 
 is the integer such that 
 $$\tau_{I(\tau)-I_0-1} \le \tau < \tau_{I(\tau)-I_0}.$$
\end{definition} 
 
\begin{definition}\label{def:bins}
For every non-negative integer~$\tau$,
$\bins(\tau) = B_{I(\tau)-1}$. 
\end{definition}

Intuitively, referring back to the above sketch, $I(\tau)$ is chosen so that at time $t_0+\tau$, we expect $B_{I(\tau)-1} = \bins(\tau)$ to be full of stuck balls and $B_{I(\tau)}$ to be in the process of filling. (Recall that $W_i$ is the expected time required for a ball to travel through block $B_i$, and observe that $\tau_i - \tau_{i-1} = \Cblock\sum_{k=1}^{I_0+i} \lceil W_k\rceil$.)
  
 \begin{definition}
  \label{def:tauinit}
 Let $\tauinit$ be the smallest integer such that
  
\begin{itemize}
\item  $\tauinit \geq \max\{
10^7/\lambda^2, 20,  (2\Cblock/(1-\nu))^4
\}$.   

\item $I(\tauinit) \geq \max\{I_0+3,
2 I_0 (2 Q + 1)
\}$ where  $Q = \max_{k=1}^{I_0} \lceil W_k \rceil$. 
\end{itemize}
\end{definition}

Once again, it is not important that $\tauinit$ is the smallest integer satisfying the properties in 
Definition~\ref{def:tauinit}, it is just important that these properties are satisfied, and they can all be satisfied by making~$\tauinit$ sufficiently large, 
since $I(\tau)$ is $\omega(1)$ as a function of~$\tau$.

\begin{definition}
\label{def:Cinit} 
Let  
  \begin{equation*} 
        \Cinit = \bigg\lceil \max\bigg( 
        \frac{\zeta |B_{I_0}| }{ p_{\jmin}},
        \frac{12\log(100\tauinit)}{(1-p_{\jmin})^{\tauinit}}, \frac{2\zeta|\bins(\tauinit)|}{p_{\jmin}(1-p_{\jmin})^{\tauinit}} 
         \bigg)\bigg\rceil.
    \end{equation*}
\end{definition}    
    
    Recall that $p_{\jmin} < 1$ by the definition of $\jmin$, so 
the definition of~$\Cinit$ in Definition~\ref{def:Cinit}  is well-defined.
 
The following definitions refer
to a  
two-stream externally-jammed
process $ \YAB=\YABfull$   with arrival rate~$\lambda$ and send sequence~$\mathbf{p}$.

\begin{definition}\label{def:Einit}
 For all integers~$t \ge 1$,
$\Einit^{\YAB}(t)$ 
is the event that  for all $C\in \{A,B\}$,
$|\stuck_{\jmin}^{\YAB}(C,t)  | \geq \Cinit$.
  \end{definition} 

  \begin{definition}\label{def:jammed}
For $C\in \{A,B\}$ and  positive integers~$t$ and~$\tau$,
the  
two-stream 
externally-jammed process $ \YAB$ is 
\emph{$(C,t)$-jammed for~$\tau$}  if
$f(\stuckvect^{ \YAB}(C,t+\tau-1))\geq \zeta |\bins(\tau-1)|$. 
It is \emph{$t$-jammed for~$\tau$}   if it is $(A,t)$-jammed and $(B,t)$-jammed at~$\tau$.
\end{definition}
 
 \begin{definition}\label{def:jam}
For $C\in \{A,B\}$ and positive integers~$t$ and~$\tau$, $\Ejam^{ \YAB}(C,t,\tau)$ is the event that, for all $\tau'\in \{1,\ldots,\tau\}$,
$\YAB$ is $(C,t)$-jammed for $\tau'$.
$\Ejam^{\YAB}(t,\tau)$ is $\Ejam^{\YAB}(A,t,\tau) \cap \Ejam^{\YAB}(B,t,\tau)$.
\end{definition}
 
The heart of our proof will be Lemmas~\ref{lem:main}, which says that if $\Einit^T(t_0)$ occurs, then $T$ is likely to be $t_0$-jammed for all $\tau \ge 1$ (and hence never return to a state with no stuck balls). We will split this proof into two regimes. The first regime, $\tau \le \tauinit$, will be relatively simple and is covered by Lemma~\ref{lem:tau-init}; we are essentially just arguing that some of the stuck balls guaranteed by $\Einit^T(t_0)$ are likely not to have sent yet. The second regime, $\tau \ge \tauinit$, will be the heart of the proof and is covered by Lemma~\ref{lem:toget}.

\subsection{The \texorpdfstring{$\tau \le \tauinit$}{tau <= tau\_init} regime: Proving Lemma~\ref{lem:tau-init}}
 
\begin{lemma} \label{lem:tau-init}
Fix $\lambda$, $\eta$ and $\nu$ in $(0,1)$  and a $(\lambda,\eta,\nu)$-suitable send sequence~$\mathbf{p}$.
Let $\YAB$ be a two-stream externally-jammed process with arrival rate $\lambda$
and send sequence~$\mathbf{p}$.  Consider any $C\in \{A,B\}$.
Let 
$t_0$ be any positive integer and let $F^{\YAB}_{t_0}$ be a value of $\calF^{\YAB}_{t_0}$ such that $\Einit^{\YAB}(t_0)$ occurs. Then
the following two statements hold.
\begin{equation}\label{eq:deterministic_tauinit}
\mbox{
If $\calF^{\YAB}_{t_0} = F^{\YAB}_{t_0}$ occurs then
$\Ejam^{\YAB}(C,t_0,1)$ occurs, and}
\end{equation}
and

    \begin{equation}\label{eq:C-tau-init}
        \pr\big(\Ejam^{\YAB}(t_0,\tauinit) \mid \calF^{\YAB}_{t_0} = F^{\YAB}_{t_0}\big) \ge 49/50.
    \end{equation}
\end{lemma}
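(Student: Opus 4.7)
The plan is to handle the two statements separately, with \eqref{eq:deterministic_tauinit} being an immediate consequence of $\Einit^{\YAB}(t_0)$ and Definition~\ref{def:Cinit}, and \eqref{eq:C-tau-init} following from a single Chernoff bound applied to the balls guaranteed by $\Einit^{\YAB}(t_0)$.

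For \eqref{eq:deterministic_tauinit}, since $p_0=1$ bin $\jmin$ is the lowest bin in which balls can linger, and the conditioning event $\Einit^{\YAB}(t_0)$ gives $|\stuck_{\jmin}^{\YAB}(C,t_0)| \geq \Cinit$, so $f(\stuckvect^{\YAB}(C,t_0)) \geq p_{\jmin}\Cinit$; the first argument of the maximum defining $\Cinit$ is exactly $\zeta|B_{I_0}|/p_{\jmin}$, so this bound is at least $\zeta|B_{I_0}|=\zeta|\bins(0)|$, which is $(C,t_0)$-jammedness at $\tau=1$. For \eqref{eq:C-tau-init}, the driving idea is that any ball in $\stuck_{\jmin}^{\YAB}(C,t_0)$ which fails to send during every step in $\{t_0+1,\dots,t_0+\sigma\}$ must still lie in $\stuck_{\jmin}^{\YAB}(C,t_0+\sigma)$, since it has not moved from bin $\jmin$ and, never having sent, cannot have become unstuck. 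Fixing $C \in \{A,B\}$ and $\sigma \in \{1,\dots,\tauinit-1\}$, and letting $N^C_\sigma$ count these survivors, $N^C_\sigma$ is a sum of $|\stuck_{\jmin}^{\YAB}(C,t_0)| \geq \Cinit$ independent Bernoulli variables each with survival probability $(1-p_{\jmin})^\sigma$, whose mean $\mu$ therefore satisfies $\mu \geq \Cinit(1-p_{\jmin})^{\tauinit}$. Lemma~\ref{lem:chernoff-small-dev} with $\delta=1/2$ gives $\pr(N^C_\sigma \leq \mu/2) \leq e^{-\mu/8}$, and the second argument of the maximum in Definition~\ref{def:Cinit} makes $\mu/8 \geq \tfrac32\log(100\tauinit)$, so this is at most $(100\tauinit)^{-3/2}$; a union bound over the at most $2\tauinit$ pairs $(C,\sigma)$ yields total failure probability below $1/50$.

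On the complement, $|\stuck_{\jmin}^{\YAB}(C,t_0+\sigma)| \geq \tfrac12\Cinit(1-p_{\jmin})^{\tauinit}$, so
\[
    f(\stuckvect^{\YAB}(C,t_0+\sigma)) \geq \tfrac12 p_{\jmin}\Cinit(1-p_{\jmin})^{\tauinit} \geq \zeta|\bins(\tauinit)| \geq \zeta|\bins(\sigma)|,
\]
where the second inequality uses the third argument of the maximum in Definition~\ref{def:Cinit} and the third uses the monotonicity of $I(\cdot)$ (hence of $|\bins(\cdot)|$) in its argument. Together with \eqref{eq:deterministic_tauinit} applied to both streams to cover the deterministic $\sigma=0$ case, this gives $\Ejam^{\YAB}(t_0,\tauinit)$. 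No real obstacle is expected here: the three arguments of the maximum in Definition~\ref{def:Cinit} are precisely tuned to deliver the three inequalities used above, and $\tauinit$ is large enough (Definition~\ref{def:tauinit}) that the union-bound overhead is negligible.
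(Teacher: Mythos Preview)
Your proposal is correct and follows essentially the same approach as the paper's proof: both parts exploit the large supply of stuck balls in bin $\jmin$ guaranteed by $\Einit^{\YAB}(t_0)$, with \eqref{eq:deterministic_tauinit} following deterministically from the first term in the maximum defining $\Cinit$, and \eqref{eq:C-tau-init} by counting survivors in bin $\jmin$ via a Chernoff bound (Lemma~\ref{lem:chernoff-small-dev} with $\delta=1/2$) and a union bound over all $(C,\sigma)$, using the second and third terms of the maximum. The only cosmetic differences are that the paper indexes by $\tau\in\{1,\dots,\tauinit\}$ rather than your $\sigma=\tau-1$, and it weakens $(100\tauinit)^{-3/2}$ to $1/(100\tauinit)$ before the union bound; the arguments are otherwise identical.
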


\begin{proof}

We first establish Equation~\eqref{eq:deterministic_tauinit}.  
We wish to show that
$\Ejam^{\YAB}(C,t_0,1)$ occurs which is the event that $\YAB$ is 
$(C,t_0)$-jammed for time~$1$, i.e.\ that $f(\stuckvect^{\YAB}(C,t_0)) \geq \zeta |\bins(0)| = \zeta |B_{I(0)-1}| = \zeta | B_{I_0}|$.
To establish this,  first note by the definition of $f$ (Definition~\ref{def:noise}) that
$$f(\stuckvect^{\YAB}(C,t_0)) \geq 
|\stuck_{\jmin}^{\YAB}(C,t_0)| p_{\jmin}.$$
This quantity can be deduced
from $F_{t_0}^{\YAB}$ and, since $F_{t_0}^{\YAB}$ guarantees that $\Einit^{\YAB}(t_0)$ happens,
is at least $\Cinit p_{\jmin}$.
To finish, we just need to show that $\Cinit p_{\jmin} \geq\zeta |B_{I_0}|$,
which follows from the choice of $\Cinit$ in~Definition~\ref{def:Cinit}.

We next establish Equation~\eqref{eq:C-tau-init}. Consider any integer~$\tau$
in the range
$1 \le \tau \le \tauinit$ and condition on $\calF^{\YAB}_{t_0} = F^{\YAB}_{t_0}$. Since $\Einit^{\YAB}(t_0)$ occurs under this conditioning, for each $C \in \{A,B\}$, we have $|\stuck^{\YAB}_{\jmin}(C,t_0)| \ge \Cinit$; we will argue that $\stuck^{\YAB}_{\jmin}(C,t_0)$ does not have time to empty before time $t_0+\tau$, so $\YAB$ is likely to be 
$(C,t_0)$-jammed for~$\tau$.  As before, observe that
    \begin{align}\nonumber
        f(\stuckvect^{\YAB}(C,t_0+\tau-1)) 
        &\ge p_{\jmin}|\stuck^{\YAB}_{\jmin}(C,t_0+\tau-1)|\\ \label{eq:C-tau-init-1}
        &\ge p_{\jmin} |\stuck^{\YAB}_{\jmin}(C,t_0) \cap \stuck^{\YAB}_{\jmin}(C,t_0+\tau-1)|.
    \end{align}
    For convenience, write $N_\tau = |\stuck^{\YAB}_{\jmin}(C,t_0) \cap \stuck^{\YAB}_{\jmin}(C,t_0+\tau-1)|$ for the size of the set in the right-hand side of~\eqref{eq:C-tau-init-1}.
    
    A ball is in $N_\tau$ if it is in $\stuck^{\YAB}_{\jmin}(C,t_0)$ and it does not send until time $t_0+\tau$ or later. Since messages send independently, conditioned on $\calF^{\YAB}_{t_0} = F^{\YAB}_{t_0}$, $N_\tau$ is binomially distributed. Let $\mu = \E(|N_\tau| \mid \calF^{\YAB}_{t_0}=F^{\YAB}_{t_0})$, and observe that $\mu = (1-p_{\jmin})^{\tau-1} |\stuck^{\YAB}_{\jmin}(C,t_0)| \ge (1-p_{\jmin})^{\tauinit}\Cinit$. 
     By the definition of~$\Cinit$ (definition~\ref{def:Cinit}), we have $\mu \ge 12\log(100\tauinit)$ and $\mu \ge 2\zeta |\bins(\tau-1)|/p_{\jmin}$. Using this fact, we apply the Chernoff bound of Lemma~\ref{lem:chernoff-small-dev} to $|N_\tau|$ with $\delta=1/2$ to obtain
    \begin{align*}
        \pr\big(|N_\tau| \ge \zeta |\bins(\tau-1)|/p_{\jmin} \mid \calF^{\YAB}_{t_0} = F^{\YAB}_{t_0}\big) &\ge \pr\big(|N_\tau| \ge \mu/2 \mid \calF^{\YAB}_{t_0} = F^{\YAB}_{t_0}\big)\\
        &\ge 1 - e^{-\mu/8} \ge 1 - 1/(100\tauinit).
    \end{align*}
    Hence by~\eqref{eq:C-tau-init-1} and the definition of $(C,t_0)$-jammedness, we have that for all $1 \le \tau \le \tauinit$,
    \[
        \pr\big(\mbox{$\YAB$ is $(C,t_0)$-jammed for time }\tau \mid \calF^{\YAB}_{t_0} = F^{\YAB}_{t_0}\big) \ge 1 - 1/(100\tauinit).
    \]
   Thus~\eqref{eq:C-tau-init}, and therefore also the result, follows by a union bound over all $C \in \{A,B\}$ and $\tau \in \{1,\dots,\tauinit\}$.
\end{proof}

\subsection{Properties of the block construction  }\label{sec:block}

\begin{observation}
\label{obs:block}
For $i\geq 1$ we have
 $u(i) = \Cblock^{i-1}$.
Also $|B_1|=1$. For
$i\geq 2$ we have $|B_i| 
=  \Cblock^{i-2}(\Cblock-1)
= u(i) (\Cblock-1)/\Cblock$
.

\end{observation}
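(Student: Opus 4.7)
The plan is to prove this observation by straightforward induction on $i$, followed by direct substitution. This is essentially a routine unwinding of Definition~\ref{def:block}, so there will be no substantive obstacle; the only things to be careful about are the base case $i=1$ (where the recurrence $u(i) = \Cblock u(i-1)$ is not used) and the definition of block size as $|B_i| = u(i) - \ell(i) + 1$.

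First I would establish $u(i) = \Cblock^{i-1}$ for all $i \geq 1$ by induction on $i$. The base case $i=1$ holds because $u(1) = 1 = \Cblock^0$ by definition. For the inductive step with $i \geq 2$, the definition gives $u(i) = \Cblock \cdot u(i-1)$, and applying the inductive hypothesis $u(i-1) = \Cblock^{i-2}$ yields $u(i) = \Cblock^{i-1}$ as required.

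Next I would compute $|B_1|$ directly. Since $\ell(1) = u(1) = 1$, the block $B_1$ consists of the single bin~$1$, so $|B_1| = 1$. For $i \geq 2$, I would use that $\ell(i) = u(i-1) + 1$, so
\[
|B_i| = u(i) - \ell(i) + 1 = u(i) - u(i-1) = \Cblock^{i-1} - \Cblock^{i-2} = \Cblock^{i-2}(\Cblock - 1).
\]
Finally, to verify the last equality in the statement, I would rewrite this as $\Cblock^{i-2}(\Cblock - 1) = \Cblock^{i-1}(\Cblock-1)/\Cblock = u(i)(\Cblock-1)/\Cblock$, which completes the proof.
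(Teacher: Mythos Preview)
Your proof is correct and follows the same approach as the paper's own proof, which simply writes $|B_i| = u(i) - u(i-1) = \Cblock^{i-1} - \Cblock^{i-2}$ for $i \ge 2$. You have merely spelled out the induction for $u(i) = \Cblock^{i-1}$ and the base case $|B_1|=1$ that the paper leaves implicit.
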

\begin{proof}
For $i\geq 2$, we have 
$|B_i| 
= u(i) - u(i-1) = \Cblock^{i-1}-\Cblock^{i-2}$.

\end{proof}

\begin{lemma}\label{lem:I-bound}
{For all $\tau \geq \tauinit$},     
$|\bins(\tau)| > \log (\tau) / (2\Cblock^2\log( 1/\nu))$.
\end{lemma}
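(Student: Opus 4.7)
My plan is to upper-bound $\tau$ using the defining property $\tau < \tau_{I(\tau)-I_0}$ from Definition~\ref{def:Ioftau}, then translate that bound back into a lower bound on the block size $|\bins(\tau)| = |B_{I(\tau)-1}|$. Write $m = I(\tau) - I_0$; the condition $\tau \geq \tauinit$ forces $I(\tau) \geq I_0 + 3$ by Definition~\ref{def:tauinit} (in fact $I(\tau) \geq 2I_0(2Q+1)$ where $Q = \max_{k \leq I_0}\lceil W_k\rceil$), so $m \geq 3$, and Observation~\ref{obs:block} gives $|B_{I_0+m-1}| = (\Cblock-1)\Cblock^{I_0+m-3}$. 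The target inequality therefore reduces to
\[
\log \tau_m \;<\; 2(\Cblock-1)\Cblock^{I_0+m-1}\log(1/\nu).
\]

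To bound $\tau_m$, I use the suitability property in Definition~\ref{def:I0} that $p_j > \nu^j$ for every bin $j \geq \ell(I_0)$. Since $\ell(k) \geq \ell(I_0)$ for $k \geq I_0$, summing a geometric series yields $W_k < (1/\nu)^{u(k)}/(1-\nu)$, and Observation~\ref{obs:block} gives $u(k) = \Cblock^{k-1}$. Because $\Cblock \geq 3$, consecutive values of $(1/\nu)^{\Cblock^{k-1}}$ differ by a factor of at least $(1/\nu)^{(\Cblock-1)\Cblock^{k-1}} \geq 2$, so the sum $\sum_{k=I_0}^{I_0+m}(1/\nu)^{\Cblock^{k-1}}$ is dominated by its last term up to a factor of~$2$. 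Combining this with the crude bound $\sum_{k < I_0} W_k \leq I_0 Q$ and the trivial estimate $\tau_m \leq \Cblock(I_0+m)\sum_{k=1}^{I_0+m}(W_k + 1)$ coming from Definition~\ref{def:taui}, and using that the exponential term dominates the $I_0 Q$ and $I_0 + m$ contributions whenever $I_0+m$ is large enough (ensured by the condition $I(\tauinit) \geq 2I_0(2Q+1)$), I obtain an inequality of the form
\[
\tau_m \;\leq\; \frac{4\Cblock(I_0+m)^2}{1-\nu}\cdot (1/\nu)^{\Cblock^{I_0+m-1}}.
\]

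Taking logarithms, this says $\log \tau_m \leq \Cblock^{I_0+m-1}\log(1/\nu) + \log\bigl(4\Cblock(I_0+m)^2/(1-\nu)\bigr)$, so the inequality I need reduces to $\log\bigl(4\Cblock(I_0+m)^2/(1-\nu)\bigr) < (2\Cblock - 3)\Cblock^{I_0+m-1}\log(1/\nu)$. Since $\Cblock \geq 3$ gives $2\Cblock - 3 \geq 3$, the final ingredient is the condition $\tauinit \geq (2\Cblock/(1-\nu))^4$ from Definition~\ref{def:tauinit}, included precisely so that the polylogarithmic left-hand side is swallowed by the doubly-exponential right-hand side. The main technical obstacle is the bookkeeping at this last step: verifying that the \emph{explicit} lower bounds on $\tauinit$ and $I(\tauinit)$ from Definition~\ref{def:tauinit} really do suffice for \emph{every} $\tau \geq \tauinit$, not merely asymptotically. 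This amounts to a concrete numerical comparison of $\log(I_0+m)$ and $\log\bigl(1/(1-\nu)\bigr)$ against $\Cblock^{I_0+m-1}\log(1/\nu)$, and the slack built into Definition~\ref{def:tauinit} should close the gap cleanly.
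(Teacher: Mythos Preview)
Your plan follows the same high-level strategy as the paper --- bound $\tau < \tau_{I(\tau)-I_0}$, control $\tau_m$ using $p_j > \nu^j$, then invert --- but the paper executes the middle step differently and more cleanly. Rather than your crude estimate $\tau_m \le \Cblock(I_0+m)\sum_k\lceil W_k\rceil$, the paper rewrites the weighted sum as a double sum and then relaxes it to a sum over \emph{bin indices}:
\[
\sum_{k=I_0+1}^{I(\tau)}(I(\tau)-k+1)\lceil W_k\rceil
=\sum_{k=I_0+1}^{I(\tau)}\Big\lceil\textstyle\sum_{j=\ell(I_0+1)}^{u(k)}\tfrac1{p_j}\Big\rceil
\le\sum_{a=\ell(I_0+1)}^{u(I(\tau))}\Big\lceil\textstyle\sum_{j=\ell(I_0+1)}^{a}\tfrac1{p_j}\Big\rceil,
\]
so that both the inner and outer sums are honest geometric series in $(1/\nu)^a$. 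This yields $\tau < \tfrac{4\Cblock}{(1-\nu)^2}(1/\nu)^{u(I(\tau))}$ with \emph{no} stray factor of $(I_0+m)$. The payoff is that the error after taking logarithms is the constant $2\log_{1/\nu}\!\bigl(2\sqrt{\Cblock}/(1-\nu)\bigr)$, and the condition $\tau\ge\tauinit\ge(2\Cblock/(1-\nu))^4$ disposes of it in one line: it makes $\log_{1/\nu}\tau$ at least four times that constant, so $u(I(\tau)) > \tfrac12\log_{1/\nu}\tau$ directly.

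Your route is workable but two details need repair. First, your ``ratio at least $2$'' claim for consecutive terms $(1/\nu)^{\Cblock^{k-1}}$ amounts to $(1/\nu)^{(\Cblock-1)\Cblock^{k-1}}\ge 2$, which can fail when $\nu$ is close to~$1$; none of the lower bounds on $I_0$ in Definition~\ref{def:I0} involve~$\nu$, so you cannot assume $\Cblock^{I_0-1}\log(1/\nu)$ is large. The paper sidesteps this entirely by summing $(1/\nu)^a$ over \emph{all} integers $a\le u(I(\tau))$, giving the clean $1/(1-\nu)$ factor. Second, your final appeal to $\tauinit\ge(2\Cblock/(1-\nu))^4$ is misplaced: that condition is tailored to compare a \emph{constant} error against $\log_{1/\nu}\tau$, whereas your leftover term $\log\!\bigl(4\Cblock(I_0+m)^2/(1-\nu)\bigr)$ grows with $m$ and must instead be compared to $\Cblock^{I_0+m-1}\log(1/\nu)$. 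That comparison is still easy, but it is a different argument from the one you cite, and you would also need to control $\log(1/(1-\nu))$ against $\Cblock^{I_0+m-1}\log(1/\nu)$ without any available relation between $\nu$ and $I_0$.
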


\begin{proof}
By definition, we have $|\bins(\tau)| = |B_{I(\tau)-1}|$. By Observation~\ref{obs:block} and since $\Cblock \ge 2$, 
it follows that $|\bins(\tau)| \ge \Cblock^{I(\tau)-3}$. Again by Observation~\ref{obs:block}, we have $u(I(\tau)) = \Cblock^{I(\tau)-1}$, and so
\begin{equation}\label{eq:I-bound-0}
        |\bins(\tau)| \ge u(I(\tau))/\Cblock^2.
\end{equation}
Our proof strategy will be to bound $I(\tau)$ below in order to apply~\eqref{eq:I-bound-0}. To do this, we will first bound $\tau$ above in terms of $u(I(\tau))$, and then convert this into a lower bound on $u(I(\tau))$ in terms of $\tau$. By the definitions of~$I(\tau)$ and~$\tau_i$, we have
\[
\tau < \tau_{I(\tau)-I_0} = \Cblock\sum_{k=1}^{I_0} (I(\tau)-k+1)\lceil W_k \rceil + \Cblock\sum_{k=I_0+1}^{I(\tau)} (I(\tau)-k+1)\lceil W_k \rceil.
\]

We will argue that the second sum is larger than the first. To see this, 
let $Q = \max_{k=1}^{I_0} \lceil W_k \rceil$.
Then the first sum is at most 
$\Cblock I_0 I(\tau)    Q$. 
To lower-bound the second sum, just consider $k$ up as high as $k = (I(\tau)/2)+1$
then  each term (for each value of $k$) is at least  
$\Cblock I(\tau)/2$
and the number of terms is at least $(I(\tau)/2 )  - I_0$.
The definition of $\tauinit$ insures that
$I(\tauinit) \geq 2 I_0 (2 Q + 1)$. Thus,
for every $\tau \geq \tauinit$,
the second term  is at least as large as the first.
 
It   follows that
\begin{align*}
\tau &< 2\Cblock\sum_{k=I_0+1}^{I(\tau)} (I(\tau)-k+1) \lceil W_k \rceil = 2\Cblock\sum_{k=I_0+1}^{I(\tau)}\sum_{m=I_0+1}^k \lceil W_m \rceil = 2\Cblock\sum_{k=I_0+1}^{I(\tau)}\bigg\lceil \sum_{j=\ell(I_0+1)}^{u(k)} (1/p_j) \bigg\rceil.
    \end{align*}
We bound the right-hand side above by observing that \[
\big\{u(k) \colon I_0+1 \le k \le I(\tau)\big\} \subseteq \big\{\ell(I_0+1), \ell(I_0+1)+1, \dots, u(I(\tau))\big\},
    \]
and so
\begin{equation}\label{eq:I-bound-tau-bound}
\tau < 2\Cblock\sum_{a = \ell(I_0+1)}^{u(I(\tau))} \bigg\lceil \sum_{j=\ell(I_0+1)}^a (1/p_j)\bigg\rceil \le 4\Cblock\sum_{a = \ell(I_0+1)}^{u(I(\tau))} \sum_{j=\ell(I_0+1)}^a (1/p_j).
    \end{equation}
By the definition of $I_0$, 
for all $i \ge I_0+1$ and all $j \in B_i$ we have $p_j > \nu^j$. It follows that for all $a \ge \ell(I_0+1)$, 
\[
        \sum_{j=\ell(I_0+1)}^a \frac{1}{p_j} < \sum_{j=\ell(I_0+1)}^a \left(\frac{1}{\nu}\right)^j < \left( \frac{1}{\nu}\right)^a\sum_{j=0}^\infty  
         \nu^j = \frac{1}{1- \nu}
        \left(\frac{1}{ \nu}\right)^a.
    \]
Hence by~\eqref{eq:I-bound-tau-bound},
\begin{equation}\label{eq:I-bound-u-bound}
\tau < 
\frac{4\Cblock}{1- \nu}
\sum_{a = \ell(I_0+1)}^{u(I(\tau))} \left(\frac{1}{\nu}\right)^a < 
4\Cblock\frac{ \left( \frac{1}{\nu} \right)^{u(I(\tau))} }
{1- \nu}\sum_{a=0}^\infty 
 \nu^a = 
\frac{4\Cblock}{\left(1- \nu\right)^2}
\left( \frac{1}{\nu} \right)^{u(I(\tau))}.
    \end{equation}
    
Solving for $u(I(\tau))$ now yields
\[
u(I(\tau)) > \log_{1/\nu} (\tau) + 2\log_{1/\nu}\Big(\frac{1- \nu}{2\sqrt{\Cblock}}\Big).
\]

Since $\tau \ge \tauinit$
and $\tauinit \geq (2\Cblock/(1-\nu))^4$,
it follows that $u(I(\tau)) > \tfrac{1}{2}\log_{1/\nu}\tau$. By~\eqref{eq:I-bound-0}, it follows that
\[
|\bins(\tau)| \ge \frac{1}{\Cblock^2}u(I(\tau)) > \frac{1}{2\Cblock^2\log(1/ \nu)}\log \tau,
\]
as required.
\end{proof}
 
\subsection{Noise}

\begin{lemma}\label{lem:one_onesided}
Let $\YABfull$ be a two-stream externally-jammed process with arrival rate $\lambda$
and send sequence~$\mathbf{p}$  with $p_0=1$.
Let $t \ge 0$ and let 
$C\in \{A,B\}$.
Let $\boldx$ be 
any possible value of    $\stuckvect^{\YAB}(C,t)$.
Then  
$$
\pr( 
\stucksend^{\YAB}(C,t+1)=\emptyset \mid \stuckvect^{\YAB}(C,t)=\boldx) \leq \exp(- f(\boldx)/3).
$$
\end{lemma}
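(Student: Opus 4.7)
My plan is a short direct calculation. The first step is to unpack $\stucksend^{\YAB}(C, t+1) = \newstuck^{\YAB}(C, t+1) \cap \send^{\YAB}(t+1) = (\stuck^{\YAB}(C, t) \cup b^{\prime Y^C}_0(t+1)) \cap \send^{\YAB}(t+1)$. Since $p_0 = 1$, every newborn in $Y^C$ at step $t+1$ sends automatically, while the balls of $\stuck^{\YAB}(C, t)$ all lie in bins $j \ge 1$. Hence the event $\stucksend^{\YAB}(C, t+1) = \emptyset$ is exactly the conjunction of (i) no newborn arrives in $Y^C$ at step $t+1$, and (ii) no ball of $\stuck^{\YAB}(C, t)$ sends at step $t+1$.

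Next I would condition on $\stuckvect^{\YAB}(C, t) = \boldx = (x_1, x_2, \ldots)$. Events (i) and (ii) then depend on disjoint, independent sources of randomness: the $\poisson(\lambda/2)$ newborn count of $Y^C$ at step $t+1$, and the independent Bernoulli send decisions of the $x_j$ stuck $C$-balls in each bin $j \ge 1$. Therefore
$$\pr\bigl(\stucksend^{\YAB}(C, t+1) = \emptyset \mid \stuckvect^{\YAB}(C, t) = \boldx\bigr) = e^{-\lambda/2} \prod_{j \ge 1}(1-p_j)^{x_j} \le \exp\Bigl(-\tfrac{\lambda}{2} - \sum_{j \ge 1} x_j p_j\Bigr),$$
where the last step uses $1 - p \le e^{-p}$.

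To conclude, I would verify the elementary inequality $\tfrac{\lambda}{2} + \sum_{j \ge 1} x_j p_j \ge \tfrac{1}{3}\bigl(\lambda + \sum_{j \ge 1} x_j p_j\bigr) = f(\boldx)/3$, which rearranges to $\tfrac{\lambda}{6} + \tfrac{2}{3}\sum_{j \ge 1} x_j p_j \ge 0$ and is trivially true. There is essentially no obstacle; the lemma reduces to a routine calculation driven by independence of individual send decisions, the Poisson distribution of newborns in a single stream, and the hypothesis $p_0 = 1$. I note in passing that the constant $1/3$ is slightly loose: a factor of $1/2$ in the exponent would suffice with the same argument, but $1/3$ is presumably kept for convenience elsewhere.
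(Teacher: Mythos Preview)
Your proof is correct and follows essentially the same approach as the paper: both decompose the event into ``no newborn in $Y^C$'' (probability $e^{-\lambda/2}$) and ``no existing stuck $C$-ball sends'' (probability $\prod_j (1-p_j)^{x_j} \le e^{-\sum_j x_j p_j}$), then use $f(\boldx)\ge \lambda$ to absorb the $\lambda/2$ discrepancy into the constant in the exponent. Your remark that the constant $1/3$ could be improved to $1/2$ is also correct.
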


\begin{proof}
For brevity, write $q$ for the probability in the statement of the lemma, and let $\Psi$ be the number of balls in $\stuck^T(C,t) \cap \send^T(t+1)$. Then since any ball in $\stucksend^T(C,t+1)$ is either born at time $t+1$ or in $\Psi$, we have
\begin{equation}\label{eq:one_onesided-0}
    q = e^{-\lambda/2}\pr(\Psi = 0 \mid \stuckvect^T(C,t) = \boldx).
\end{equation}

Conditioned on $\stuckvect^T(C,t) = \boldx$, $\Psi$ is a sum of independent Bernoulli variables with mean $\mu = f(\boldx) - \lambda$.
If one of these Bernoulli variables has parameter~$p$ then the probability that it is~$0$
is $1-p \leq \exp(-p)$. Thus the probability that they are all~$0$
is at most $\exp(-\mu)$. We therefore have
\[
    \pr(\Psi = 0 \mid \stuckvect^T(C,t) = \boldx) \le e^{-\mu}.
\]
It follows by~\eqref{eq:one_onesided-0} that 
$q\leq e^{-f(\boldx) + \lambda/2}   < e^{-f(\boldx)/3}$ as required (since $f(\boldx) \geq \lambda$ by definition). 
\end{proof}

\section{Towards a proof of Theorem~\ref{thm:technical}: Jamming causes 
 a product distribution of 
stuck balls in a two-stream externally-jammed process}\label{sec:rev}

 The purpose of this section is to prove the following lemma, which will be our main tool to analyse the two-stream process in the proof of Lemma~\ref{lem:toget} (and thereby Theorem~\ref{thm:technical}).
 
 \begin{restatable}{lemma}{lemnewdombinoms}\label{lem:newdombinoms}\label{lem:interface}
 
 Fix $\lambda$, $\eta$ and $\nu$ in $(0,1)$  and a $(\lambda,\eta,\nu)$-suitable send sequence~$\mathbf{p}$ with $p_0=1$.
 Let $Y^A$ and $Y^B$ be independent externally-jammed processes with arrival rate $\lambda/2$ and send sequence $\mathbf{p}$  and consider the two-stream externally-jammed process $\YAB=\YAB(Y^A,Y^B)$. Consider any integer~$t_0 \geq  \jmin$. 
Let $F^{\YAB}_{t_0}$ be a value of $\calF^{\YAB}_{t_0}$ 
such that $\Einit^{\YAB}(t_0)$  happens. Consider any integer $\tau \geq \tauinit$.
Then  there is a coupling of
\begin{itemize}
    \item      $\YAB$
 conditioned on $\calF^{\YAB}_{t_0} = F^{\YAB}_{t_0}$, 
 \item a sample $\{Z_j^{A} \mid j \in \bins(\tau)\}$ where each $Z_j^{A}$ is
 chosen independently from a Poisson distribution with mean $\lambda/(4 p_j)$, and 
 \item a sample $\{Z_j^{B} \mid j \in \bins(\tau)\}$ where each $Z_j^{B}$ is chosen from a Poisson distribution with mean $\lambda/(4 p_j)$ and these are independent of each other but not of the $\{Z_j^A\}$ values
 \end{itemize}
  in   such a way that  
 at least one of the following happens:
 
   \begin{itemize}
     \item $\Ejam^{\YAB}(t_0,\tau)$ does not occur, or
    \item for all $j\in \bins(\tau)$, 
    $|\stuck_j^{\YAB}(A,t_0+\tau)  | \geq Z_j^A$
    and $|\stuck_j^{\YAB}(B,t_0+\tau)  | \geq Z_j^B$.
 \end{itemize}
 \end{restatable}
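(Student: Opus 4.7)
The plan is to build the coupling in three layers. \emph{First}, for each stream $C \in \{A,B\}$, introduce an auxiliary \emph{random unsticking process} $R^C$ driven by arrival rate $\lambda/2$ and send sequence $\mathbf{p}$, but in which a sending ball becomes unstuck independently with some fixed probability $\unstickProb$ rather than according to the sending pattern of the opposite stream. Choose $\unstickProb$ to be an upper bound on the probability that a send in $Y^C$ escapes while $Y^{C'}$ (with $C' \neq C$) is $(C',t_0)$-jammed; Lemma~\ref{lem:one_onesided} converts the jammedness hypothesis $f(\stuckvect^T(C',\cdot)) \geq \zeta|\bins(\tau-1)|$ into exactly such a bound. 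Couple $Y^A$ with $R^A$ and $Y^B$ with $R^B$ so that, throughout $\Ejam^T(t_0,\tau)$, stuck balls in $R^C$ are a subset of stuck balls in $Y^C$. Crucially, $R^A$ is then driven by randomness independent of $Y^B$, and symmetrically $R^B$ is independent of $Y^A$, even though setting up both couplings involves the joint history of the two streams.

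\emph{Second}, analyse each $R^C$ via a time reversal. Define a reverse random unsticking process $\widetilde{R}^C$ and appeal to Lemma~\ref{lem:timerev}, which supplies a probability-preserving bijection between trajectories of balls in $R^C$ and trajectories in $\widetilde{R}^C$. Since arrivals in $\widetilde{R}^C$ are independent Poisson streams and balls move independently, by Poisson splitting the number of balls following any given trajectory is an independent Poisson variable; summing over trajectories that terminate in bin $j$ and survive the unsticking expresses the contents of bin $j$ as a sum of independent Poissons with total mean at least $\lambda/(4p_j)$, the factor $1/4$ arising from the arrival rate $\lambda/2$ and a constant loss from the independent unsticking together with a mild truncation.

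\emph{Third}, cope with the fact that the reverse process can only see the evolution of $R^C$ across an interval of length at most $\tau$, and in particular cannot cover the initial segment already encoded in $F^T_{t_0}$. Restrict to a set $\Fill_j^{R^C}(t_0+\tau)$ of balls in bin $j$ at time $t_0+\tau$ that are born sufficiently late in $R^C$ to fit entirely inside the reverse-time window, and let $\Fill_j^{\widetilde{R}^C}(t_0+\tau)$ be the corresponding set in $\widetilde{R}^C$. Lemma~\ref{lem:revfill} will establish that $\{|\Fill_j^{\widetilde{R}^C}(t_0+\tau)| \mid j \in \bins(\tau)\}$ is dominated below by independent Poissons of mean $\lambda/(4p_j)$; the trajectory bijection transports this bound to $\Fill_j^{R^C}$, and Step~1 then transports it to $\stuck_j^T(C,t_0+\tau)$. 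Taking $Z_j^C$ to be these Poisson variables completes the coupling: within each stream they are independent, but between streams they are not, because the unsticking probabilities driving $R^A$ and $R^B$ were defined in terms of $Y^B$ and $Y^A$ respectively.

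The main obstacle will be Lemma~\ref{lem:revfill}. Beyond making the trajectory formalism rigorous enough that the forward/reverse bijection is well-defined, one must calibrate the ``born sufficiently late'' cutoff so that the reverse-time analysis applies while the restricted variables $|\Fill_j^{R^C}|$ still attain mean $\lambda/(4p_j)$ for every $j \in \bins(\tau)$. This requires quantitative control on how quickly a ball can traverse $\bins(\tau)$, supplied by the block construction together with Lemma~\ref{lem:I-bound}, and it requires that the jammedness hypothesis remains strong enough throughout the reverse-time window to keep the effective unsticking probability small across the entire range of bins in $\bins(\tau)$.
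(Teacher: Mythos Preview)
Your three-layer plan---couple each stream to a random-unsticking process via Lemma~\ref{lem:one_onesided}, time-reverse via Lemma~\ref{lem:timerev}, and restrict to $\Fill$ sets so that Lemma~\ref{lem:revfill} applies---is exactly the paper's approach, assembled there as Lemma~\ref{lem:domcouplenew}, Corollary~\ref{cor:domcouplefill}, Lemma~\ref{lem:dombinomsZ}, and Lemma~\ref{lem:halfdombinoms}, and then invoked once per stream. One small correction: the unsticking probability $\unstickProb(t)$ in the paper is time-varying (namely $e^{-\zeta|\bins(t-t_0)|/16}$ for $t>t_0$), not a single fixed value, though your closing remarks about keeping it small across the whole reverse-time window show you already anticipate this.
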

 
In order to prove Lemma~\ref{lem:interface}
we need to study several stochastic processes, starting with a random-unsticking process in the next section.

\subsection{The definition of a random-unsticking process}
\label{sec:defRandom}
\label{sec:particularRandom}

Let $Y$ be an externally-jammed process with arrival rate~$\lambda$ and send sequence $\mathbf{p}$.
Let $t_0$ be a fixed positive integer. 
We will define a \emph{random-unsticking process}~$\Random=\Random(Y,t_0)$ by redefining the ``stuck'' and ``unstuck'' states of $Y$, so that balls in $\Random$ become unstuck independently at random with probability depending on $t_0$. Later, in Lemma~\ref{lem:domcouplenew}, we will show that $R$ and $T$ can be coupled in such a way that as long as $T$ stays jammed, if a ball is stuck in $R$ then it is also stuck in $T$; this will allow us to analyse $R$ rather than $T$ in proving Lemma~\ref{lem:newdombinoms}.

Apart from this redefinition, which explained below, everything about $\Random$ is the same as~$Y$: 
for all $t \geq 0$, $\balls^R(t) = \balls^T(t)$;
for all $t\geq 0$ and $j\geq 1$, $b_j^{\Random}(t) = b_j^{Y}(t)$;
for all $t\geq 1$ and $j\geq 0$, $b^{\prime\Random}_j(t) = b^{\prime Y}_j(t)$;
for all $t\geq 1$ $\send^{\Random}(t) = \send^{Y}(t)$;
and for $t \geq 0$ and all $\beta \in \balls^T(t)$, $\birth^R(\beta) = \birth^Y(\beta)$.

We now formalise the new ``stuck'' and ``unstuck'' states in $\Random$.
The set
$\stuck^{\Random}(t)$ will be the set of all
balls in $\balls^{\Random}(t)$  
that are stuck 
in the process~$\Random$ at time~$t$
and $\stuck_j^{\Random}(t) = \stuck^{\Random}(t) \cap b_j^T(t)$.
As the reader will expect,
$\unstuck^{\Random}(t) = \balls^{\Random}(t) \setminus \stuck^{\Random}(t)$
so for each~$t$ we only need to define $\stuck^{\Random}(t)$ or $\unstuck^{\Random}(t)$
and the other is defined implicitly.
 These variables are defined as follows.
First, let
$\stuck^{\Random}(0) = \emptyset$ and 
$$\unstickProb(t) =  
\begin{cases}
1, &\mbox{if $t\leq t_0$}\\
e^{-\zeta 
|\bins(t-t_0)| 
/16}, &\mbox{otherwise.}
\end{cases}
 $$
Then for any $t\geq 1$ we have the following definitions:
\begin{itemize}
\item ${\newstuck}^{\Random}(t) = \stuck^{\Random}(t-1) \cup b^{\prime\Random}_0(t)$.
\item $\stucksend^{\Random}(t) = {\newstuck}^{\Random}(t) \cap \send^{\Random}(t)$.
\item For each $\beta \in \balls^{\Random}(t)$,
  $\unstickIndicator(\beta,t)$
is  an independent Bernoulli random variable which is~$1$ with 
probability $\unstickProb(t)$ and $0$~otherwise.
\item  Then
$\unstuck^{\Random}(t) = {\unstuck}^{\Random}(t-1) \cup 
\{\beta \in \stucksend^{\Random}(t) \mid \unstickIndicator(\beta,t)=1
\}$.
\end{itemize}

 We take   $\calF^{\Random}_t$ to be
$\calF^{Y}_t$ 
 along with the values of the random variables $\unstickIndicator(\beta,t')$
 for 
 $t' \leq t$  and $\beta\in \balls^{\Random}(t')$ 
so that all information about the evolution
of the process~$\Random$, including the  
value of $\stuck^{\Random}(t)$, is captured.

\subsection{Defining a reverse  random-unsticking  process}\label{sec:rev-def}

Given an externally-jammed process $Y$ with arrival rate~$\lambda$,
a positive integer~$t_0$, a random-unsticking process
$R=R(Y,t_0)$, and positive integers~$j$ and $\tauend$, we
wish to  find a lower bound on 
$|\stuck_j^\Random(t_0+\tauend)|$ in order to prove Lemma~\ref{lem:newdombinoms}. To do this, we'll  define the 
notion of a 
reverse  random-unsticking process, which is essentially a time-reversal of a random-unsticking process; we will make this relation formal later, in Lemma~\ref{lem:timerev}.
The reverse random-unsticking process
$\revRandom= \widetilde{\Random}(Y,t_0,\tauend)$ 
  will be easier to 
study than $\Random$~itself, allowing us to
deduce the lower bounds on
$|\stuck_j^{\Random}(t_0+\tauend)|$ that we need. 
Note that the basic idea of using a time-reversal in the context of backoff processes
is due to Aldous~\cite{Aldous}.
Here we require independent lower bounds for different~$j$,
so it is a lot more complicated.

The process~$\revRandom$ will run for
steps $1,\ldots,\tauend$.
Each step consists of   (II) sending, 
(III) adjusting the bins, and
(IV) random unsticking. 
(We start counting from (II) so that the indices line up with the parts
of the steps of an externally-jammed process. Here there is no 
step initialisation, since there are no births. On the other hand, there is random unsticking.)
Let $\Jmax = u(I(\tauend)-1) + \tauend$.
The bins in the process~$\revRandom$ 
will be the integers $j\in [\Jmax]$.
For $j\in [\Jmax]$,  
the set $b_j^{\revRandom}(\tau)$ is the population of bin~$j$
just after (all parts of) the $\tau$'th step. 
For convenience, we define $b_{\Jmax+1}^{\revRandom}(\tau)=\emptyset$.
Also, $\balls^{\revRandom}(\tau) = \cup_{j=1}^{\Jmax} b_j^{\revRandom}(\tau)$.

The process $\revRandom$  is initialised as follows.
For each $j\in [\Jmax]$, the number $\tilde{x}_j$ 
of balls that start in $b_j^{\revRandom}(0)$ is
chosen independently from a Poisson distribution with mean~$\lambda/p_j$.
These balls are given unique names $(\revRandom,j,1),\ldots,(\revRandom,j,\tilde{x}_j)$. 
Step~$\tau$ is defined as follows.

\begin{itemize}

\item Part (II) of step $\tau$: For   
$j \in [\Jmax]$, 
all balls in $b^{\revRandom}_j(\tau-1)$ 
send independently with probability $p_j$. 
We use $\send^{\revRandom}(\tau)$ for the set of balls that send at time~$\tau$.

\item Part (III) of step $\tau$: 
For all $j\in [\Jmax]$, 
$$b_j^{\revRandom}(\tau) = (b_{j+1}^{\revRandom}(\tau-1)\cap \send^{\revRandom}(\tau))
\cup
(b_j^{\revRandom}(\tau-1) \setminus \send^{\revRandom}(\tau)).
$$

\item Part (IV) of step $\tau$:   For each $\beta \in  \balls^{\revRandom}(\tau-1)$,
$\unstickIndicator(\beta,\tau)$ is an independent Bernoulli random variable,
which is~$1$ with probability $\unstickProb(t_0+\tauend - \tau+1)$
and~$0$ otherwise.

\end{itemize}
 
Note that balls in $b_1^{\revRandom}(\tau-1)$ that send at step~$\tau$
are not part of $\balls^{\revRandom}(\tau)$ --- instead, they leave the system at step~$\tau$. 
Note also that we do not define ``stuck'' and ``unstuck'' states for $\revRandom$ --- it will actually be more convenient to work with the $\unstickIndicator$ variables directly. (These will later be coupled to the corresponding variables in a random-unsticking process $\revRandom$; as the definition suggests, unsticking events at times $1,2,\dots,\tauend$ in $\revRandom$ will correspond to unsticking events at times $t_0+\tauend,t_0+\tauend-1,\dots,t_0+1$ in $\Random$.)

\subsection{ Defining trajectories in a random-unsticking process}\label{sec:traj-def}

Let $Y$ be an externally-jammed process with arrival rate~$\lambda$ and let $t_0$ be 
a positive integer. Consider a random-unsticking process $\Random = \Random(Y,t_0)$.

We will only be concerned with balls~$\beta$
with $\birth^{\Random}(\beta) \geq t_0+1$.  For $t\geq t_0+1$
consider a ball with  name $\beta=(Y,t,x)$ 
for
some~$x$. The \emph{ball trajectory}~$\trajectory$
of the ball~$\beta$ 
(up to time step $t_0+ \tauend$ for a positive integer $\tauend$)
contains the following information:
\begin{itemize}
\item $\tbirth(\trajectory) = \birth^{\Random}(\beta)$.

\item The bin $J(\trajectory)$ such that $\beta \in b^{\Random}_{J(\trajectory)}(t_0+\tauend)$.

\item Positive integers $N_1(\trajectory),\ldots,N_{J(\trajectory)}(\trajectory)$ adding up to 
$t_0+\tauend - \tbirth(\trajectory) +1$. For each $j\in [J(\trajectory)]$,
$N_j(\trajectory)=  |\{
t' \mid 
\tbirth(\trajectory) \leq t' \leq t_0+\tauend,\ 
\beta \in b_j^{\Random}(t')
\}|$. 

\item For each 
integer~$t'$ in the range $\tbirth(\trajectory) \le t' \le t_0+\tauend$, 
the indicator variable $\unstickIndicator(\trajectory,t')=\unstickIndicator(\beta,t')$.
\end{itemize}

Note that any choice of positive integers $N_1(\trajectory),\ldots,N_{J(\trajectory)}(\trajectory)$
with the right sum leads to a valid (ball) trajectory, and so does any 
sequence of indicator random variables 
$\unstickIndicator(\trajectory,t')$. 
Thus, 
for all fixed~$\lambda$, $t_0$ and $\tauend$, all of the information in trajectory~$\trajectory$ is
captured by $J(\trajectory)$,
$\tbirth(\trajectory)$,  
$\{N_j(\trajectory) \mid 1\leq j \leq J(\trajectory)\} $ and 
$\{ \unstickIndicator(\trajectory,t') \mid \tbirth(\trajectory) \leq \tau \leq t_0+\tauend\}$.
Let $\TrajectorySet^{\Random}(t_0,\tauend,t,J)$ 
be the set of trajectories up to $t_0+\tauend$ with $\tbirth(\trajectory)=t$ and $J(\trajectory)=j$.

Given a trajectory~$\trajectory$, let
$S(\trajectory) = \{ 
\tbirth(\trajectory) + \sum_{k=1}^c N_k(\trajectory) \mid c\in \{0,\ldots,J(\trajectory)-1\}
\}$. We will use the fact that $S(\trajectory)$ is 
the set of time steps up to time $t_0+\tauend$ at which 
balls 
following trajectory~$\trajectory$ send.

We can calculate the distribution of the number of balls following
a given trajectory $\trajectory$. 
 This is a Poisson random variable whose mean, $\mu^{\Random}(\trajectory)$, is 
 given by $\mu^{\Random}(\trajectory) = F_1 F_2 F_3$ where $F_1$, $F_2$ and $F_3$ are defined as follows:
 \begin{align*}
F_1 &= \lambda (1-p_{J(\trajectory)})^{N_{J(\trajectory)}(\trajectory)-1};\qquad\qquad
F_2 = \prod_{j=1}^{J(\trajectory)-1} p_j(1-p_j)^{N_j(\trajectory)-1};\\
F_3 &= \prod_{t'=\tbirth(\trajectory)}^{t_0+\tauend} \unstickProb(t')^{\unstickIndicator(\trajectory,t')} (1-\unstickProb(t'))^{1-\unstickIndicator(\trajectory,t')}.
\end{align*}

 \subsection{Defining trajectories in a 
 reverse random-unsticking process \texorpdfstring{$\revRandom$}{R}}

Let $Y$ be  an externally-jammed process with arrival rate~$\lambda$,
let $t_0$ be a positive integer and let $\tauend$ be a positive integer.
Consider a reverse random-unsticking process 
$\revRandom= \widetilde{\Random}(Y,t_0,\tauend)$

We will only be concerned with balls in $\revRandom$ that leave the process
by sending from bin~$1$ at 
some step $\tau \in \{1,\ldots,\tauend\}$. 
We next define the trajectory~$\revT$ 
of such a ball with  name $\beta=(\revRandom,j',x)$ in~$\revRandom$.
The trajectory~$\revT$ contains the following information:
\begin{itemize} 
\item The bin $J(\revT)=j'$, the bin into which $\beta$ is  placed in the initialisation,
so $\beta \in b_{J(\revT)}^{\revRandom}(0)$.

\item The time $\tauleave(\revT)$ at which $\beta$ leaves the process by sending from bin 1,
so $\beta \in b_{1}^{\revRandom}(\tauleave(\revT)-1)$
but $\beta \notin b_1^{\revRandom}(\tauleave(\revT))$.

\item Positive integers $N_1(\revT),\ldots,N_{J(\revT)}(\revT)$,
adding up to $\tauleave(\revT)$.
For each $j \in [J(\revT)]$, 
$N_j(\revT) = |\{
\tau' \mid 
0 \leq \tau' \leq \tauleave(\revT)-1,\ \beta \in b_j^{\revRandom}(\tau')
\}|
$.

\item For each time step $\tau'$ in the range
$1\leq \tau' \le \tauleave(\revT)$,
the indicator variable $\unstickIndicator(\revT,{\tau'}) = \unstickIndicator(\beta,\tau')$.
\end{itemize}

Note that any choice of positive integers $N_1(\revT),\ldots,N_{J(\revT)}(\revT)$
with the right sum leads to a valid trajectory, and so does any 
sequence of indicator random variables $\unstickIndicator(\revT,{\tau})$.
Thus for all fixed~$\lambda$, $t_0$ and $\tauend$, 
all of the information in trajectory~$\trajectory$ is
captured by $J(\revT)$,
$\tauleave(\revT)$,  
$\{N_j(\revT) \mid 1\leq j \leq J(\revT)\} $ and 
$\{ \unstickIndicator(\revT,\tau) \mid 1 \leq \tau \leq \tauleave(\revT)\}$.
Let $\TrajectorySet^{\revRandom}(t_0,\tauend,\tau,J)$ 
be the set of trajectories with $\tauleave(\revT)=\tau$ and $J(\revT)=J$.

Given a trajectory~$\revT$, let
$\revS(\revT) = \{ 
  \sum_{k=\revc}^{J(\revT)} N_k(\revT) \mid \revc\in \{1,\ldots,J(\revT)\}
\}$. We will use the fact that $\revS(\revT)$ is
the set of time steps at which 
balls following trajectory~$\revT$ send.

In the   process $\revRandom$, the distribution for the number of balls following trajectory~$\revT$
is a Poisson random variable whose mean, $\mu^{\revRandom}({\revT})$, is 
given by $\mu^{\revRandom}({\revT})= \tilde{F}_1 \tilde{F}_2 \tilde{F}_3$ 
where $\tilde{F}_1$, $\tilde{F}_2$ and $\tilde{F}_3$ are
defined
as follows.

\begin{align*} 
\tilde{F}_1 &=
\frac{\lambda}{p_{J(\revT)}}; \qquad\qquad
\tilde{F}_2 = \prod_{j=1}^{J(\revT)} p_{j}(1-p_{j})^{N_{j}({\revT})-1}; \\
\tilde{F}_3 &=
\prod_{\tau=1}^{\tauleave(\revT)} \unstickProb(\tauend-\tau+t_0+1)^{\unstickIndicator(\revT,{\tau})} \big(1-\unstickProb(\tauend-\tau+t_0+1)\big)^{1-\unstickIndicator(\revT,{\tau})}.
\end{align*}

\subsection{\bf A probability-preserving bijection from trajectories of a
random-unsticking process\texorpdfstring{$~\Random$}{} to those of the corresponding reverse
random-unsticking process\texorpdfstring{$~\revRandom$}{}}\label{sec:coupling}

Let $Y$ and $Y'$ be externally-jammed processes with arrival rate~$\lambda$ and send sequence $\mathbf{p}$,
and let $t_0$ 
and $\tauend$ be 
positive integers.
Consider a random-unsticking process $\Random = \Random(Y,t_0)$
and  a reverse random-unsticking process 
$\revRandom= \widetilde{\Random}(Y',t_0,\tauend)$  We now set out the natural bijection from trajectories of $\Random$ to trajectories of $\revRandom$ that will later form the basis of our coupling.

\begin{definition}\label{def:time-rev-bi}
    The \emph{time-reversal bijection} $\pi$ from $\TrajectorySet^{\Random}(t_0,\tauend,t,J)$ to $\TrajectorySet^{\revRandom}(t_0,\tauend, \tauend - t+t_0+1,J)$ is defined as follows for all $\trajectory \in  \TrajectorySet^{\Random}(t_0,\tauend,t,J)$:
    \begin{itemize}
        \item $\tauleave(\pi(\trajectory)) = \tauend - \tbirth(\trajectory)+t_0+1$;
        \item $J(\pi(\trajectory))=J$;
        \item for all $k \in [J]$, $N_k(\pi(\trajectory)) = N_k(\trajectory)$;
\item for all 
$t'$ in the range
$\tbirth(B) \leq t' \leq t_0 + \tauend$,
$\unstickIndicator(\pi(\trajectory),\tauend - t' + t_0+1) = \unstickIndicator(\trajectory,t')$.
    \end{itemize}
\end{definition}

We now pause for a moment to observe that $\pi$ is a valid map into $\TrajectorySet^{\revRandom}(t_0,\tauend, \tauend - t+t_0+1,J)$. Indeed, we have $\tbirth(\trajectory) \ge t_0+1$, so $\tauleave(\pi(B)) \le \tauend$ as required. In order for $J(\pi(\trajectory))=J$ to be a bin of the process $\revRandom$
we need $J $   to be in $[\Jmax]$;
this is fine since the definition of $J(\trajectory)=J$ ensures that $J(\trajectory) \leq \tauend$
and the definition of
$\Jmax$ ensures that $1\leq \tauend \leq \Jmax$ so $J(\trajectory) \leq \Jmax$. We also have from the definitions that
\[
    \sum_{j=1}^J N_j(\trajectory) = \tauend - \tbirth(\trajectory)+t_0+1 = \tauleave(\trajectory) = \sum_{j=1}^J N_j(\pi(\trajectory)),
\]
as required. Finally, the values of $\tau$ for which we define $\unstickIndicator(\pi(\trajectory),\tau)$ range from $\tauend-\tbirth(\trajectory)+t_0+1 = \tauleave(\pi(\trajectory))$ down to $1$, as required. Thus the image of $\pi$ is indeed contained in $\TrajectorySet^{\revRandom}(t_0,\tauend, \tauend - t+t_0+1,J)$, and it is clear that $\pi$ is a bijection as claimed.

The following lemma formalises our intuition that $\revRandom$ is a time reversal of $\Random$, and will later form the basis for our coupling between them.

\begin{lemma}\label{lem:timerev}
Let $Y$ and $Y'$ be externally-jammed processes with arrival rate~$\lambda$ and send sequence $\mathbf{p}$ and let $t_0$ 
and $\tauend$ be 
positive integers. Consider a random-unsticking process $\Random = \Random(Y,t_0)$
and  a reverse random-unsticking process 
$\revRandom= \widetilde{\Random}(Y',t_0,\tauend)$.
Let $t$ 
be an integer satisfying $t\geq t_0+1$.
Let $J$ be a positive integer.
Let $\pi$ be the time-reversal bijection from  
$\TrajectorySet^{\Random}(t_0,\tauend,t,J)$ to $\TrajectorySet^{\revRandom}(t_0,\tauend,
\tauend - t+t_0+1,J)$.
Then for any $\trajectory \in   \TrajectorySet^{\Random}(t_0,\tauend,t,J)$, we have
$\mu^{\Random}(\trajectory) = \mu^{\revRandom}(\pi(\trajectory))$ and $\revS(\pi(\trajectory)) = \{\tauend - t + t_0+1 \mid t \in S(\trajectory)\}$.

\end{lemma}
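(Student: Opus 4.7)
The proof is essentially a direct verification that the two definitions of $\mu$ agree factor-by-factor under the substitution prescribed by $\pi$, together with a re-indexing identity for the send-time sets. First I would record two consequences of the definition of $\pi$ that will be used throughout: $J(\pi(\trajectory))=J(\trajectory)$ and $N_k(\pi(\trajectory))=N_k(\trajectory)$ for all $k\in[J]$, and the total-length identity
\[
\sum_{k=1}^{J} N_k(\trajectory) \;=\; t_0+\tauend - \tbirth(\trajectory)+1 \;=\; \tauleave(\pi(\trajectory)),
\]
which holds because both the $\Random$-trajectory and its $\revRandom$-image spend a positive number of steps in each of bins $1,\ldots,J$ and the two total counts agree by construction.

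Next, I would compute $\mu^{\revRandom}(\pi(\trajectory))$ from its definition and rearrange. Writing $J=J(\trajectory)$, we have
\[
\tilde F_1\tilde F_2 \;=\; \frac{\lambda}{p_J}\prod_{j=1}^{J} p_j(1-p_j)^{N_j(\trajectory)-1}
\;=\; \lambda(1-p_J)^{N_J(\trajectory)-1}\prod_{j=1}^{J-1} p_j(1-p_j)^{N_j(\trajectory)-1} \;=\; F_1F_2,
\]
so the cancellation of the $1/p_J$ prefactor with the $p_J$ in the product is exactly what turns the reverse-trajectory factor into the forward one. For the unsticking factor I would substitute $\tau=\tauend - t' + t_0+1$: as $t'$ runs over $\{\tbirth(\trajectory),\ldots,t_0+\tauend\}$, $\tau$ runs over $\{1,\ldots,\tauleave(\pi(\trajectory))\}$, and by Definition~\ref{def:time-rev-bi} the indicators and the probabilities $\unstickProb$ match up exactly, so $\tilde F_3 = F_3$. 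Combining the two identities gives $\mu^{\Random}(\trajectory)=\mu^{\revRandom}(\pi(\trajectory))$.

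For the send-time correspondence, I would use the total-length identity again. For each $c\in\{0,\ldots,J-1\}$, let $\revc = c+1\in\{1,\ldots,J\}$; then
\[
\tauend-\Big(\tbirth(\trajectory)+\sum_{k=1}^{c} N_k(\trajectory)\Big)+t_0+1
\;=\; \sum_{k=1}^{J}N_k(\trajectory) - \sum_{k=1}^{c}N_k(\trajectory)
\;=\; \sum_{k=\revc}^{J}N_k(\pi(\trajectory)),
\]
and the map $c\mapsto\revc = c+1$ is a bijection $\{0,\ldots,J-1\}\to\{1,\ldots,J\}$. Comparing this with the definitions of $S(\trajectory)$ and $\revS(\pi(\trajectory))$ yields $\revS(\pi(\trajectory))=\{\tauend-t+t_0+1 \mid t\in S(\trajectory)\}$, completing the proof. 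There is no real obstacle here; the only thing to be careful about is keeping track of which index range (the birth-shifted one for $\Random$ versus the leave-time-ending one for $\revRandom$) corresponds to which summation, and making sure the $p_J/p_J$ cancellation is handled correctly at the top bin.
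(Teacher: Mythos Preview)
Your proof is correct and follows essentially the same approach as the paper's: a direct factor-by-factor verification that $F_1F_2=\tilde F_1\tilde F_2$ via the $p_J$ cancellation, the change of variables $\tau=\tauend-t'+t_0+1$ to match $F_3$ with $\tilde F_3$, and the re-indexing $c\mapsto \revc=c+1$ together with the total-length identity for the send-time correspondence. If anything, your write-up is more explicit than the paper's, which simply asserts the $\mu$-equality is ``straightforward'' and only spells out the unsticking-probability substitution and one direction of the send-time identity.
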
 
\begin{proof}
It is straightforward to use the definitions of~$\pi$, $\mu^{\Random}$ and $\mu^{\revRandom}$ to see
 that 
$\mu^{\Random}(\trajectory) = \mu^{\revRandom}(\pi(\trajectory))$. The only subtlety 
is checking that the   probabilities of  the unsticking indicator variables correspond. For this,
note from the definition
of~$\pi$
that for $\tau \in \{1,\ldots,\tauleave(\pi(\trajectory))\}$,
the probability that $\unstickIndicator(\pi(\trajectory),\tau)=1$ is
$\unstickProb(t')$ where $\tau = \tauend - t' + t_0+1$.
So this probability is
$\unstickProb(\tauend - \tau + t_0+1)$, as required.

To see that  $\revS(\pi(\trajectory)) = \{\tauend - t + t_0+1 \mid t \in S(\trajectory)\}$,
note that  if we take 
$ 
t = \tbirth(\trajectory) + \sum_{k=1}^c N_k(\trajectory) $
in $S(\trajectory)$
(for some $c\in \{0,\ldots,J(\trajectory)-1\}$)
 then 
 the quantity 
 $\tauend - t + t_0+1$ is equal to 
 $\tauend - \tbirth(\trajectory) - \sum_{k=1}^c N_k(\trajectory) + t_0+1$
 which, using the  fact that the sum of all of the $N_k(\trajectory)$'s
 is $\tauend - \tbirth(\trajectory) + t_0+1$, is
  $$\tauend - \tbirth(\trajectory) - (\tauend - \tbirth(\trajectory) + t_0+1) + \sum_{k=c+1}^{J(\trajectory)} N_k(t) + t_0+1.$$
 Clearly, this is
  $ \sum_{k=\revc}^{J(\revT)} N_k(\revT)$
 for $\revc = c+1$, which is in $\revS(\pi(\trajectory))$.

\end{proof}
   
\subsection{The set of balls that ``fill'' bin~\texorpdfstring{$j$}{j}} \label{sec:fill-def}
 
 Let $Y$ be an externally-jammed process with arrival rate~$\lambda$ and let $t_0$ be 
a positive integer. Consider a random-unsticking process $\Random = \Random(Y,t_0)$. 
 The following definitions will be convenient. 
 
  For   any positive integer~$j$ in block~$B_i$ and any positive integer~$\tau$, 
  we define $\Fill_j^{\Random}(\tau)$
  to be the subset of balls   $\beta \in \stuck_j^{\Random}(t_0+\tau)$
such that
\[
    \birth^{\Random}(\beta) \geq \max\Big\{t_0+1,t_0+\tau- \Cslack {\sum_{k=1}^i \lceil W_k \rceil }\Big\}.
\]

 We make a similar definition for the process $\YAB=\YABfull$.
For any $C\in \{A,B\}$,  any positive integer~$j$ in $B_i$, and any positive integer~$\tau$,  $\Fill_j^{ \YAB}(C,\tau)$
is the subset of balls   $\beta \in \stuck_j^{\YAB}(C,t_0+\tau)$
such that
\[
    \birth^{\YAB}(\beta) \geq \max\Big\{t_0+1,t_0+\tau- \Cslack {\sum_{k=1}^i \lceil W_k \rceil}\Big\}.
\]

It will be convenient to think about $\Fill_j^{\Random}(\tau)$ in terms of trajectories,
so we make the following equivalent definition
for  $\Fill_j^{\Random}(\tau)$: 
For   any positive integer~$j$ in block~$B_i$ and any positive integer~$\tauend$, 
  we define $\Fill_j^{\Random}(\tauend)$
  to be 
 the set of balls that take trajectories~$\trajectory$
satisfying the following:
  
\begin{enumerate}[(i)]
\item $J(\trajectory)=j$.
\item $\tbirth(\trajectory) \ge t_0+\tauend- \Cslack  {\sum_{k=1}^i \lceil W_k \rceil}$.
\item $\tbirth(\trajectory) \geq t_0+1$.
\item for all $t'\in S(\trajectory)$, $\unstickIndicator(\trajectory,t')=0$.
\end{enumerate}

Finally, we define  a corresponding notion $\Fill_j^{\revRandom}(\tauend)$ for a reverse random-unsticking process.
We first define 
 $\Fill_j^{\revRandom}(\tauend)$  in terms of trajectories.
Let $Y$ be an externally-jammed process with arrival rate~$\lambda$ and let $t_0$ 
and $\tauend$ be 
positive integers. Consider a reverse random-unsticking  process $\revRandom = \revRandom(Y,t_0,\tauend)$.
Let $j$ be a positive integer in block~$B_i$.
 Then $\Fill_j^{\revRandom}(\tauend)$ is defined to be the set of balls that take trajectories~$\revT$
satisfying the following.
  
\begin{enumerate}[(i)]
\item $J(\revT)=j$.
\item $\tauleave(\revT) \leq  1 + {\Cslack} {\sum_{k=1}^i \lceil W_k \rceil}$.
\item $\tauleave(\revT) \leq \tauend$.
\item for all $\tau'\in \revS(\revT)$, $\unstickIndicator(\revT,\tau')=0$.
\end{enumerate}

 Equivalently, 
 $\Fill_j^{\revRandom}(\tauend)$ is  the set of balls~$\beta=(\revRandom,j,x')$
satisfying the following.
  
\begin{enumerate}[(I)]
\item $\beta \in b_{j}^{\revRandom}(0)$.
\item  
For some $\tau' \leq 1 + {\Cslack} {\sum_{k=1}^i \lceil W_k \rceil}$,
$\beta \in b_1^{\revRandom}(\tau'-1)$ and $\beta$ sends at time $\tau'$. 
\item This $\tau'$ also satisfies
$\tau' \leq \tauend$. 
\item $\beta$ does not send on any step 
$\tau \in \{1,2,\ldots, 1 + {\Cslack} {\sum_{k=1}^i \lceil W_k \rceil}\}$
with $\unstickIndicator(\beta,\tau)=1$.
\end{enumerate}

\subsection{Bounding \texorpdfstring{$\Fill_j^{\revRandom}(\tauend)$}{Fill in a reverse random-unsticking process}}
 
 \begin{lemma}\label{lem:revfill} 
  Let $Y$ be an externally-jammed process with arrival rate~$\lambda$ and send sequence $\mathbf{p}$.
Consider any integer~$t_0 \geq \jmin$. 
Let~$\tauend\geq \tauinit$ be an   integer.
Then there is a coupling of
\begin{enumerate}[(i)]
\item a reverse random-unsticking process $\revRandom(Y,t_0,\tauend)$   with
\item a sample $\{Z_j^- \mid j \in \bins(\tauend)\}$ where each $Z_j^-$ is chosen independently from a Poisson distribution with mean $\lambda/(4 p_j)$ 
\end{enumerate}
in such a way that, for every integer $j\in B_{I(\tauend)-1}$,  
  $|\Fill_j^{\revRandom}(\tauend)| \ge Z_j^-$.
\end{lemma}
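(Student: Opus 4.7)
The plan is to exploit the fact that balls in the reverse random-unsticking process~$\revRandom$ never interact with each other: each ball moves via its own independent send decisions, and each carries its own independent $\unstickIndicator$ variables. Consequently, balls starting in distinct initial bins evolve completely independently, so the counts $|\Fill_j^{\revRandom}(\tauend)|$ are mutually independent across~$j$. Moreover, by Poisson thinning applied to the $\mathrm{Poisson}(\lambda/p_j)$ initial population of $b_j^{\revRandom}(0)$, each count is itself Poisson-distributed:
\[
    |\Fill_j^{\revRandom}(\tauend)| \sim \mathrm{Poisson}(M_j), \qquad M_j := \sum_{\revT \in \calB_j} \mu^{\revRandom}(\revT) = \frac{\lambda}{p_j} \cdot \pr\big(\beta \in \Fill_j^{\revRandom}(\tauend)\big),
\]
where $\calB_j$ is the set of trajectories satisfying conditions (i)--(iv) of the definition of $\Fill_j^{\revRandom}(\tauend)$, $\beta$ denotes a single ball starting in bin~$j$, and the probability is over $\beta$'s random send pattern and $\unstickIndicator$ values. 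To produce the coupling required by the lemma, it suffices to show $M_j \geq \lambda/(4p_j)$ for every $j \in \bins(\tauend)$: for each such~$j$, I will independently mark each ball in $\Fill_j^{\revRandom}(\tauend)$ with probability $\lambda/(4 p_j M_j) \in [0,1]$ and take $Z_j^-$ to be the count of marked balls; standard Poisson splitting gives $Z_j^- \sim \mathrm{Poisson}(\lambda/(4p_j))$, while the $Z_j^-$ are independent across~$j$ and clearly satisfy $Z_j^- \leq |\Fill_j^{\revRandom}(\tauend)|$.

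The core work is therefore to establish $\pr(\beta \in \Fill_j^{\revRandom}(\tauend)) \geq 1/4$. Fix $i = I(\tauend)-1$ so that $j \in B_i$, and set $\bar{\tau}_i = 1 + \Cblock\sum_{k=1}^i \lceil W_k\rceil$. A short calculation using Definition~\ref{def:taui} gives $\tau_{I(\tauend)-I_0-1} - \bar{\tau}_i = \tau_{I(\tauend)-I_0-2}-1$; combined with the definitional bound $\tauend \geq \tau_{I(\tauend)-I_0-1}$, this yields $\tauend - \bar{\tau}_i + 1 \geq \tau_{I(\tauend)-I_0-2}$. In particular $\bar{\tau}_i \leq \tauend$ (making condition~(iii) of the Fill-set redundant), and by monotonicity of $|\bins(\cdot)|$ we have $|\bins(\tauend - \bar{\tau}_i + 1)| \geq |B_{I(\tauend)-2}|$. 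Set $A = \{\tauleave(\beta) \leq \bar{\tau}_i\}$ and $B = \{\unstickIndicator(\beta,\tau) = 0 \text{ at every send time } \tau \text{ of } \beta\}$, so $\{\beta \in \Fill_j^{\revRandom}(\tauend)\} = A \cap B$. Since $\E[\tauleave(\beta)] = \sum_{j'=1}^j 1/p_{j'} \leq \sum_{k=1}^i \lceil W_k \rceil = (\bar{\tau}_i - 1)/\Cblock$, Markov's inequality gives $\pr(A^c) \leq 1/\Cblock \leq 1/3$. Conditioning on~$A$ and union-bounding over $\beta$'s $j$ send times (all of which lie in $[1,\bar{\tau}_i]$, so each has unstick probability at most $e^{-\zeta|B_{I(\tauend)-2}|/16}$), we obtain $\pr(B^c \mid A) \leq j \cdot e^{-\zeta|B_{I(\tauend)-2}|/16}$.

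The main obstacle is verifying that the right-hand side above is at most $1/4$, which is exactly what the constants in Definition~\ref{def:I0} are calibrated for. With $c = \zeta(\Cblock-1)/(16\Cblock^2)$ and $x = \Cblock^{I(\tauend)-2}$, Observation~\ref{obs:block} gives $j \leq u(i) = x$ and $\zeta|B_{I(\tauend)-2}|/16 = cx$, so the bound becomes $x e^{-cx}$. The map $x \mapsto x e^{-cx}$ is decreasing for $x \geq 1/c$; the condition $I_0 \geq \log(4/c)/c$ from Definition~\ref{def:I0} ensures $I_0 \geq 1/c$, while the condition $\exp(I_0 c) \geq 4 I_0$ ensures $I_0 e^{-c I_0} \leq 1/4$. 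Since Definition~\ref{def:tauinit} forces $I(\tauend) \geq I(\tauinit) \geq I_0 + 3$, we have $x \geq \Cblock^{I_0+1} \geq I_0$, and monotonicity gives $x e^{-cx} \leq I_0 e^{-c I_0} \leq 1/4$. Combining, $\pr(A \cap B) \geq \pr(A)\pr(B \mid A) \geq (2/3)(3/4) = 1/2 \geq 1/4$, so $M_j \geq \lambda/(4p_j)$ and the coupling is complete. The only delicate piece is this double-exponential arithmetic, and it is handled entirely by the calibration already built into Definitions~\ref{def:I0} and~\ref{def:tauinit}; everything else is routine Poisson bookkeeping.
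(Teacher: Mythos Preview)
Your proof is correct and follows essentially the same approach as the paper: independence of balls plus Poisson thinning reduces the claim to showing each ball starting in bin~$j$ satisfies the Fill conditions with probability at least~$1/4$, which you (like the paper) establish via Markov's inequality for the leaving time and a union bound over the $j$ send times for the unstick indicators, invoking the calibration of~$I_0$ from Definition~\ref{def:I0}. The only cosmetic difference is that you combine the two failure modes multiplicatively via $\pr(A)\pr(B\mid A)\ge (2/3)(3/4)=1/2$, whereas the paper uses an additive union bound $\pr(\text{fail})\le 1/2+1/4$ to reach~$1/4$; both are fine.
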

 
\begin{proof}
Let $\revRandom = \revRandom(Y,t_0,\tauend)$. We will use the definition of   $\Fill_j^{\revRandom}(\tauend)$ 
without trajectories. 
To simplify the notation, let $i= I(\tauend)-1$, so that $\bins(\tauend) = B_i$.
 {Since $\tauend \geq \tauinit$, and since the definition of $\tauinit$ (Definition~\ref{def:tauinit}) ensures
that $I(\tauinit) \geq I_0+3$, we have $i\geq I_0+2$.}

 {We will be considering $j\in B_i$.  Note that this }ensures that $j\leq u(i)$ 
and this is at most~$\Jmax$, by the definition of~$\Jmax$ in Section~\ref{sec:rev-def}. 
Thus, by the initialisation of the  process~$\revRandom$, the number of balls in $b_j^{\revRandom}(0)$ is   Poisson with mean~$\lambda/p_j$ (independently of any other $j$). Since each $\Fill_j^{\revRandom}(\tauend)$ is determined by the trajectories of balls in $b_j^{\revRandom}(0)$, and since all balls behave independently in the process~$\revRandom$, the random variables $\Fill_j^{\revRandom}(0)$ are already distributed as independent 
Poisson random 
variables as in (ii).
So to finish, we just need to show that, for all $j \in B_i$ and all balls $\beta \in b_{j}^{\revRandom}(0)$,
the probability that (II), (III) and (IV) happens is at least~$1/4$.
To do this, we'll show that the probability that (II) or (III)  fails is at most~$1/2$
and that the probability that (IV)  fails is at most $1/4$ (then we are finished by a union bound adding the failure probabilities~$1/2$ and~$1/4$).
 
For (IV),
it suffices to show that  for any $\tau \in \{1,\ldots,
 1 + {\Cslack} { \sum_{k=1}^i \lceil W_{k} \rceil}\}$,
$\unstickProb(t_0+ \tauend - \tau+1) \leq 1/(4 j)$. Using a union bound over
the   (at most~$j$) steps when $\beta$ sends, this implies that the probability that 
$\beta$ ever 
has $\unstickIndicator(\beta,\tau)=1$
on one of these steps~$\tau$  is at most~$1/4$.
Towards this end, first note from the definitions of $I(\cdot)$ and $\tau_{i'}$ that 
\begin{equation}\label{eq:revfill-1}
\tauend \ge \tau_{I(\tauend) - I_0 - 1} = \tau_{i - I_0} { = \tau_{i - I_0 - 1} + \Cslack\sum_{k=1}^i \lceil W_k \rceil}.
\end{equation}
In particular,~\eqref{eq:revfill-1} implies that $t_0+\tauend-\tau+1 \ge t_0+1$ for all $\tau \le 1+\Cslack\sum_{k=1}^i\lceil W_k\rceil$. It follows from the definitions of $\unstickProb(\cdot)$ and $\bins(\cdot)$
that 
$$\unstickProb(t_0+\tauend - \tau+1) = \exp(- \zeta |\bins(\tauend - \tau+1)|/16)
= \exp(- \zeta |B_{ I(\tauend - \tau+1)-1} |/16)
.$$
 Since  $|B_{i'}|$ is monotonically increasing in~$i'$ (which follows from $\Cblock >2$) 
 and $I(\cdot)$ is monotonically increasing in its argument, the right hand side is maximised
 when $\tau$ is as large as possible, so
 \begin{equation}\label{eq:calc}
 \unstickProb(t_0+\tauend - \tau+1) \leq  \exp(- \zeta |B_{ I(\tauend 
  - {\Cslack} {  \sum_{k=1}^i \lceil W_{k}\rceil})-1} |/16)
.\end{equation}

To simplify the notation, let 
$x = \tauend 
  - {\Cslack}{\sum_{k=1}^i \lceil W_{k} \rceil }$.
  From the definition of $I(\cdot)$  we have $x < \tau_{I(x)- I_0}$, 
  so since $x\geq \tau_{(i-1)-I_0}$ by~\eqref{eq:revfill-1}
  we have $i-1 < I(x)$ and hence $i \leq I(x)$.
  Plugging in the value of~$x$, we conclude that  
 $I(\tauend 
  - {\Cslack}{\sum_{k=1}^i \lceil W_{k} \rceil})   \ge i$. 
   Plugging this into~\eqref{eq:calc}, 
   we get
   \begin{equation*} 
 \unstickProb(t_0+\tauend - \tau+1) \leq  \exp(- \zeta 
 |B_{i-{1}}| /16).\end{equation*}
 Since  $i\geq I_0 \geq 4$,
 Observation~\ref{obs:block} gives
 $|B_{i-{1}}| \geq u(i-{1})  (\Cblock-1)/\Cblock$,
  and 
  we can simplify this as follows:
  \begin{equation}\label{blah} 
 \unstickProb(t_0+\tauend - \tau+1) \leq  \frac{1}{e^{ \zeta 
  u(i-{1}) \frac{\Cblock-1}{16\Cblock}}}
  = 
  \frac{1}{e^{ \zeta 
  u(i) \frac{\Cblock-1}{16\Cblock^{{2}}}}}
  .\end{equation}
Since $u(i) \ge i \ge I_0$, the definition of $I_0$ ensures that $\exp(\zeta u(i)\tfrac{\Cslack-1}{16\Cslack^2}) \ge 4u(i)$; we therefore have $\unstickProb(t_0+\tauend - \tau+1) \le 1/(4u(i)) \le 1/(4j)$, as required for the union bound.

For (II) and (III) 
we first note that (II) implies (III) by~\eqref{eq:revfill-1}.
So to finish the proof, we will prove the probability that (II) fails for a given ball $\beta \in b_j^{\revRandom}(0)$ is at most~$1/2$. 
The number of time steps that it takes for $\beta$ to send from bin~$1$ is 
a random variable~$\Psi$ which is the sum 
of independent geometric random variables with parameters~$p_j,\ldots,p_1$.
The expectation of~$\Psi$ is $\mu :=\sum_{k=1}^j (1/p_k)$, and
we just need to show that $\pr(\Psi > 1 + {\Cslack}{\sum_{k=1}^i \lceil W_{k} \rceil} )\leq 1/2$.

Observe that since $j \in B_i$, we have $\mu \le \sum_{k=1}^i W_k \le \sum_{k=1}^i \lceil W_k \rceil$. Hence 
we have 
\[
\pr\Big(\Psi > 1 + {\Cslack}\sum_{k=1}^i \lceil W_{k}\rceil\Big) \leq
\pr(\Psi> \Cslack\mu).
\]
By Markov's inequality, this is at most $1/\Cslack \leq 1/2$.

\end{proof}

\subsection {Bounding \texorpdfstring{$\Fill_j^{\Random}(\tauend)$}{Fill in a random-unsticking process} } 

We next use the time-reversal bijection from Section~\ref{sec:coupling} to translate Lemma~\ref{lem:revfill} from a reverse random-unsticking process $\revRandom$ to a random-unsticking process~$\Random$ by coupling the two processes.

 \begin{lemma}\label{lem:unconditionedfill}
Fix $\lambda$, $\eta$ and $\nu$ in $(0,1)$  and a $(\lambda,\eta,\nu)$-suitable send sequence~$\mathbf{p}$.
Let $Y$ be an externally-jammed process with arrival rate~$\lambda$ and send sequence $\mathbf{p}$.
Consider any integer~$t_0 \geq \jmin$. 
Let~$\tauend\geq \tauinit$ be an   integer.
Then  there is a coupling of
\begin{enumerate}[(i)]
\item a random-unsticking process $\Random=\Random(Y,t_0)$   with
\item a sample $\{Z_j^- \mid j \in \bins(\tauend)\}$ where each $Z_j^-$ is chosen independently from a Poisson distribution with mean $\lambda/(4 p_j)$ 
\end{enumerate}
in such a way that, for every integer $j\in B_{I(\tauend)-1}$,  
  $|\Fill_j^{\Random}(\tauend)| \ge Z_j^-$.
\end{lemma}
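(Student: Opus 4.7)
\textbf{Proof plan for Lemma~\ref{lem:unconditionedfill}.}
The plan is to pull the Poisson lower bound of Lemma~\ref{lem:revfill} back through the time-reversal bijection $\pi$ of Definition~\ref{def:time-rev-bi}. First, let $Y'$ be an auxiliary externally-jammed process with the same parameters as $Y$, and let $\revRandom = \widetilde{\Random}(Y',t_0,\tauend)$. Since $\tauend \ge \tauinit$ and $t_0 \ge \jmin$, Lemma~\ref{lem:revfill} supplies a coupling of $\revRandom$ with independent Poisson variables $\{Z_j^- : j \in \bins(\tauend)\}$ of means $\lambda/(4p_j)$ such that $|\Fill_j^{\revRandom}(\tauend)| \ge Z_j^-$ for every $j \in B_{I(\tauend)-1}$. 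The goal is to couple $\Random = \Random(Y,t_0)$ to $\revRandom$ in such a way that $|\Fill_j^{\Random}(\tauend)| = |\Fill_j^{\revRandom}(\tauend)|$; then the same $Z_j^-$ serve as the required lower bounds.

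The coupling between $\Random$ and $\revRandom$ will be built trajectory-by-trajectory. For each pair $(t,J)$ with $t\ge t_0+1$, the bijection $\pi$ of Definition~\ref{def:time-rev-bi} maps $\TrajectorySet^{\Random}(t_0,\tauend,t,J)$ onto $\TrajectorySet^{\revRandom}(t_0,\tauend,\tauend-t+t_0+1,J)$. In $\Random$, the set of balls born at time $t\ge t_0+1$ is Poisson($\lambda$), their sending decisions are independent geometric waiting times in each bin, and the unstick indicators are independent Bernoullis; by Poisson splitting the number of balls in $\Random$ following any fixed trajectory $\trajectory$ with $\tbirth(\trajectory) \ge t_0+1$ is Poisson with mean $\mu^{\Random}(\trajectory)$, independently across trajectories. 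The analogous statement holds in $\revRandom$: by its initialisation and the independent sending/unsticking, the number of balls following a trajectory $\revT$ is Poisson($\mu^{\revRandom}(\revT)$), independently across trajectories. By Lemma~\ref{lem:timerev}, $\mu^{\Random}(\trajectory) = \mu^{\revRandom}(\pi(\trajectory))$, so we may couple the two processes by sampling a single Poisson variable for each bijection pair $(\trajectory,\pi(\trajectory))$ and declaring this common value to be the number of balls following $\trajectory$ in $\Random$ and $\pi(\trajectory)$ in $\revRandom$; the balls in $\Random$ of the initial configuration $\bigcup_{j\ge 1} b_j^{\Random}(0)$ (which never contribute to $\Fill_j^{\Random}(\tauend)$) can be sampled independently.

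It remains to check that under this coupling, a ball is counted in $\Fill_j^{\Random}(\tauend)$ via trajectory $\trajectory$ if and only if its $\pi$-partner is counted in $\Fill_j^{\revRandom}(\tauend)$. Using the identities $J(\pi(\trajectory))=J(\trajectory)$, $\tauleave(\pi(\trajectory)) = \tauend-\tbirth(\trajectory)+t_0+1$, and $\revS(\pi(\trajectory)) = \{\tauend - t' + t_0+1 : t' \in S(\trajectory)\}$ together with the unstick-indicator correspondence from Definition~\ref{def:time-rev-bi}, each of the four conditions defining $\Fill_j^{\Random}(\tauend)$ translates verbatim into a condition defining $\Fill_j^{\revRandom}(\tauend)$, and vice versa. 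Hence $|\Fill_j^{\Random}(\tauend)| = |\Fill_j^{\revRandom}(\tauend)| \ge Z_j^-$ simultaneously for all $j \in B_{I(\tauend)-1} = \bins(\tauend)$, as required.

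The main obstacle is purely bookkeeping: one must verify that the joint distribution produced by matching trajectory counts across $\pi$ really does yield a valid instance of $\Random$, i.e.\ that the induced numbers of newborns at each time step $t\ge t_0+1$, and the induced sending and unsticking patterns of the individual balls, are correctly distributed. This follows because the independent Poisson ``trajectory counts'' we sample above are precisely the output of Poisson thinning applied to the standard construction of $\Random$ (and likewise of $\revRandom$), so re-aggregating them reconstructs the full process with its correct joint law. There is no real probabilistic difficulty beyond Lemmas~\ref{lem:revfill} and~\ref{lem:timerev}; the work is in confirming that the four defining conditions of $\Fill$ in the forward and reverse processes are exact images of each other under $\pi$.
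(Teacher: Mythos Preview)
Your proposal is correct and follows essentially the same approach as the paper: introduce an auxiliary reverse process $\revRandom$, couple $\Random$ to $\revRandom$ by matching trajectory counts via the time-reversal bijection~$\pi$ and Lemma~\ref{lem:timerev}, verify that the four defining conditions of $\Fill_j$ correspond under~$\pi$ so that $|\Fill_j^{\Random}(\tauend)| = |\Fill_j^{\revRandom}(\tauend)|$, and then invoke Lemma~\ref{lem:revfill}. The paper's proof is slightly terser about the Poisson-thinning justification for the trajectory-level coupling, but the argument is the same.
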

 
\begin{proof}
Fix $t_0$ and $\tauend$. 
Let $Y'$ be an externally-jammed process with arrival rate~$\lambda$ and send sequence $\mathbf{p}$,
and let $\revRandom = \revRandom(Y',t_0,\tauend)$ be a reverse random-unsticking process.
Consider any $t\geq t_0+1$ 
and positive integer~$J$.
Let $\pi$ be the  time-reversal bijection 
from 
$\TrajectorySet^{\Random}(t_0,\tauend,t,J)$ to $\TrajectorySet^{\revRandom}(t_0,\tauend,
\tauend - t+t_0+1,J)$, and recall 
from Lemma~\ref{lem:timerev} that for all trajectories $\trajectory \in \TrajectorySet^\Random(t_0,\tauend,t,J)$, we have $\mu^{\Random}(\trajectory) = \mu^{\revRandom}(\pi(\trajectory))$
and
  $\revS(\pi(\trajectory)) = \{\tauend - t + t_0+1 \mid t \in S(\trajectory)\}$.
From the first of these statements, the number of balls following a trajectory $\trajectory \in  
\TrajectorySet^{\Random}(t_0,\tauend,t,J)$ in the  process~$\Random$ has the same distribution as the number of balls following the trajectory 
$\pi(\trajectory)$ in the process~$\revRandom$. We may therefore couple $\Random$ with $\revRandom$ by identifying these trajectories, so that for every ball following trajectory $B$ in $\Random$ there is a corresponding ball following trajectory $\pi(B)$ in $\revRandom$ and vice versa.

We next show that a ball following a trajectory $\trajectory \in  
\TrajectorySet^{\Random}(t_0,\tauend,t,J)$ contributes to $\Fill_j^{\Random}(\tauend)$
if and only if a ball following the trajectory $\pi(\trajectory)$ contributes to 
$\Fill_j^{\revRandom}(\tauend)$, so that $|\Fill_j^{\Random}(\tauend)| = |\Fill_j^{\revRandom}(\tauend)|$; given this, the lemma will follow immediately from Lemma~\ref{lem:revfill}.
By the definition of~$\pi$ (Definition~\ref{def:time-rev-bi}), $J(\pi(\trajectory)) = J(\trajectory)$, 
so (i) is the same in the definitions of $\Fill_j^{\Random}(\tauend)$
and $\Fill_j^{\revRandom}(\tauend)$.
Similarly, (ii) and (iii) are the same in both definitions.
To see that (iv) is the same in both directions,
 consider $t' \in S(\trajectory)$ with $\unstickIndicator(\trajectory,t')=0$.
Then
$\tauend - t' + t_0+1 \in \revS(\revT)$ by Lemma~\ref{lem:timerev}
and,  by the definition of~$\pi$,
$\unstickIndicator(\pi(\trajectory),\tauend - t'+t_0+1) = \unstickIndicator(\trajectory,t')$.
Clearly, the same works in the other direction, so (iv) is the same in both definitions. Hence we do indeed have $|\Fill_j^{\Random}(\tauend)| = |\Fill_j^{\revRandom}(\tauend)|$, and we are done by Lemma~\ref{lem:revfill}.
\end{proof} 

 Finally, we show that Lemma~\ref{lem:unconditionedfill} remains true when the value of $\calF_{t_0}^R$ is exposed. (Since $\bins(\tauend) = B_{I(\tauend)-1}$ by definition, this is the only difference between the statement of Lemma~\ref{lem:unconditionedfill} and that of the following lemma.)
 
\begin{lemma} 
\label{lem:dombinomsZ}
Fix $\lambda$, $\eta$ and $\nu$ in $(0,1)$  and a $(\lambda,\eta,\nu)$-suitable send sequence~$\mathbf{p}$. 
Let $Y$ be an externally-jammed process with arrival rate~$\lambda$ and send sequence $\mathbf{p}$.
Consider any integer $t_0\geq \jmin$. 
Consider a random-unsticking process $\Random = \Random(Y,t_0)$.
Let $F^{\Random}_{t_0}$ be any value of the filtration $\calF^{\Random}_{t_0}$ covering the first~$t_0$ steps of  $\Random$. 
Let~$\tauend\geq \tauinit$ be an   integer.
Then  there is a coupling of
\begin{enumerate}[(i)]
\item The process $\Random$, conditioned on $\calF^{\Random}_{t_0}=F^{\Random}_{t_0}$, with
\item A sample $\{Z_j^- \mid j \in \bins(\tauend)\}$ where each $Z_j^-$ is chosen independently from a Poisson distribution with mean $\lambda/(4 p_j)$ 
\end{enumerate}
in such a way that, for every integer $j\in \bins(\tauend)$,  
  $|\Fill_j^{\Random}(\tauend)| \ge Z_j^-$.
\end{lemma}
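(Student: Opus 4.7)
The plan is to upgrade Lemma~\ref{lem:unconditionedfill} to the conditional version by observing that the random variable $|\Fill_j^\Random(\tauend)|$ is measurable with respect to a sub-$\sigma$-algebra that is \emph{independent} of $\calF^\Random_{t_0}$, so conditioning on a value of $\calF^\Random_{t_0}$ does not change the relevant joint distribution.

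First I would recall that by the definition of $\Fill_j^\Random(\tauend)$, every ball $\beta \in \Fill_j^\Random(\tauend)$ satisfies $\birth^\Random(\beta) \ge t_0+1$ (this is condition (iii) in the trajectory-based definition in Section~\ref{sec:fill-def}). Consequently, $\Fill_j^\Random(\tauend)$ is a deterministic function of what I will call the \emph{post-$t_0$ data} of $\Random$, consisting of: the newborn sets $b^{\prime \Random}_0(t)$ for $t_0 < t \le t_0+\tauend$; the independent send decisions made at times $t$ in this range by any ball born at some step $t^\star \ge t_0+1$; and the unsticking indicators $\unstickIndicator(\beta,t)$ for all such balls $\beta$ and times $t$ with $t^\star \le t \le t_0+\tauend$.

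Next I would observe that by the construction of $\Random$, the post-$t_0$ data is independent of $\calF^\Random_{t_0}$: the Poisson newborn counts $|b^{\prime \Random}_0(t)|$ for $t > t_0$ are sampled independently of history, and the send decisions and unstick Bernoulli variables attached to each post-$t_0$-born ball are fresh independent samples (the function $\unstickProb(t)$ is a deterministic function of $t-t_0$ and thus shared between the conditional and unconditional worlds). In particular, the distribution of $|\Fill_j^\Random(\tauend)|$ is the same under the conditional law $\pr(\,\cdot \mid \calF^\Random_{t_0} = F^\Random_{t_0})$ as under the unconditional law.

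Finally, I would invoke Lemma~\ref{lem:unconditionedfill} to produce the desired coupling: inspection of its proof shows that the only randomness on the $\Random$-side used in building the coupling consists of the post-$t_0$ data (the time-reversal bijection from Section~\ref{sec:coupling} acts on trajectories $\trajectory$ with $\tbirth(\trajectory) \ge t_0+1$). Since this randomness has the same joint distribution under the conditional law as under the unconditional law, the coupling transports verbatim, giving $|\Fill_j^\Random(\tauend)| \ge Z_j^-$ for all $j \in \bins(\tauend)$ simultaneously with probability $1$ even after conditioning on $\calF^\Random_{t_0} = F^\Random_{t_0}$. There is no substantive obstacle here: the entire content of the lemma is the independence check, and the only care needed is to confirm that the ``pre-$t_0$'' information carried by $\calF^\Random_{t_0}$ (initial Poisson bin populations, early newborns, early sends, and early unstick indicators attached to balls born by time $t_0$) never enters the formula for $\Fill_j^\Random(\tauend)$.
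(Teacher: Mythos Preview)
Your proposal is correct and follows essentially the same approach as the paper: both arguments note that every ball in $\Fill_j^{\Random}(\tauend)$ has $\birth^{\Random}(\beta)\ge t_0+1$, observe that the independence of balls in $\Random$ makes the $\Fill_j^{\Random}(\tauend)$ variables independent of $\calF^{\Random}_{t_0}$, and then transfer the coupling from Lemma~\ref{lem:unconditionedfill} to the conditioned process. Your write-up is a bit more explicit about what constitutes the ``post-$t_0$ data'' and why it is independent, but the content is the same.
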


\begin{proof}
 Let $i = I(\tauend)-1$.
 For $j \in  B_{i}$, from
 its definition in Section~\ref{sec:fill-def},
   $\Fill_j^{\Random}(\tauend)$
   is the subset of balls   $\beta \in \stuck_j^{\Random}(t_0+\tauend)$
such that
$\birth^{\Random}(\beta) \geq  
 \max\{t_0+1, t_0+\tauend - \Cslack\sum_{k=1}^i \lceil W_k \rceil\}$.
The important thing here is that 
$\birth^{\Random}(\beta) \geq t_0+1$.

The definition of the   process~$\Random$
ensures that balls behave independently. 
Thus whether or not a ball $\beta$ contributes to $\Fill_j^{\Random}(\tauend)$
is independent of $\calF_{t_0}^{\Random}$.
Thus, writing $\Random'$ for a copy of $\Random$ conditioned on 
$\calF_{t_0}^{\Random} = F_{t_0}^{\Random}$, we can couple $\Random$ and $\Random'$
in such a way that $\Fill_j^{\Random}(\tauend) = \Fill_j^{\Random'}(\tauend)$.
We conclude that Lemma~\ref{lem:dombinomsZ} follows
from Lemma~\ref{lem:unconditionedfill}. 
\end{proof}

\subsection{Proving Lemma~\ref{lem:newdombinoms}}

\begin{lemma}\label{lem:domcouplenew}
Fix $\lambda$, $\eta$ and $\nu$ in $(0,1)$  and a $(\lambda,\eta,\nu)$-suitable send sequence~$\mathbf{p}$ with $p_0=1$. 
Let $Y^A$ and $Y^B$ be independent externally-jammed processes with arrival rate $\lambda/2$ and send sequence $\mathbf{p}$
and consider the two-stream externally-jammed process $\YAB=\YAB(Y^A,Y^B)$.
Consider any integer~$t_0 \geq  \jmin$.
Let $F^{\YAB}_{t_0}$ be a value of $\calF^{\YAB}_{t_0}$ 
such that  $\Einit^{\YAB}(t_0)$ occurs.
Let $C$ and $C'$ be distinct elements of $\{A,B\}$.
Let $\Random=\Random(Y^C,t_0)$ be a random-unsticking process.
Let $F^{\Random}_{t_0}$ be a value of $\calF^{\Random}_{t_0}$
that is consistent with $F^{\YAB}_{t_0}$.
Consider
any integer $    \tau \geq 1$.
Then there is a coupling of   
\begin{itemize}
    \item  $\YAB$ conditioned on $\calF^{\YAB}_{t_0} = F^{\YAB}_{t_0}$ and
    \item $\Random$ conditioned on $\calF^{\Random}_{t_0} = F^{\Random}_{t_0}$
\end{itemize}
     such that  
 at least one of the following occurs:
\begin{itemize}
\item  
$\Ejam^{\YAB}(C',t_0,\tau)$  does not occur, or
\item $\unstuck^{\YAB}(C, t_0+\tau) \subseteq \unstuck^{\Random}( t_0+\tau)   $. \end{itemize}
\end{lemma}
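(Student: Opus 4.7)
The plan is to build an inductive, step-by-step coupling of $T$ (conditioned on $F^T_{t_0}$) with $R$ (conditioned on $F^R_{t_0}$) that maintains the invariant $\unstuck^{\YAB}(C, t) \subseteq \unstuck^R(t)$ at every $t \in \{t_0, t_0+1, \dots, t_0+\tau\}$, assuming $\Ejam^{\YAB}(C', t_0, \tau)$ holds. The base case $t = t_0$ is automatic: since $\unstickProb(t') = 1$ for all $t' \le t_0$, any ball in $R$ that has sent by time $t_0$ is unstuck in $R$, so $\stuck^R(t_0)$ consists only of balls that have not yet sent, and such balls are also in $\stuck^{\YAB}(C,t_0)$ (they cannot have become unstuck in $T$ without sending).

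For the inductive step, assume the invariant at time $t-1$ and set $\tau' = t - t_0$. Couple the send decisions of $C$-balls between $R$ and $T$ using identical Bernoulli bits (the marginals match, since $R$ inherits its send structure from $Y^C$), and sample the send decisions of $Y^{C'}$-balls in $T$ independently using the independence of $Y^A$ and $Y^B$. At step $t$, at most one $C$-ball $\beta_T$ becomes newly unstuck in $T$: this requires Case~1, i.e.\ $\beta_T \in \stucksend^{\YAB}(C,t)$, $\stucksend^{\YAB}(C',t) = \emptyset$, and $\beta_T$ chosen uniformly from $\stucksend^{\YAB}(C,t)$. Under $(C',t_0)$-jammedness for $\tau'$, Lemma~\ref{lem:one_onesided} applied to stream $C'$ and the independence of $Y^A, Y^B$ yield
\[
    \pr\bigl(\stucksend^{\YAB}(C',t)=\emptyset \,\big|\, \calF^{\YAB}_{t-1}\bigr) \le \exp\bigl(-f(\stuckvect^{\YAB}(C',t-1))/3\bigr) \le \exp\bigl(-\zeta|\bins(\tau'-1)|/3\bigr).
\]

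To couple the unsticking decisions at step $t$, introduce an i.i.d.\ family of Uniform$[0,1]$ variables $\{U_\beta\}$, and declare in $R$ that $\beta$ becomes unstuck iff $\beta \in \stucksend^R(t)$ and $U_\beta \le \unstickProb(t) = e^{-\zeta|\bins(\tau')|/16}$. In $T$, define Case~B's indicator using an independent uniform $V$ with the correct marginal $\pr(\text{Case B}) \le e^{-\zeta|\bins(\tau'-1)|/3}$, and in Case~B pick $\beta_T$ as the element of $\stucksend^{\YAB}(C,t)$ minimizing $U_\beta$. Checking marginals, the chosen ball is uniform on $\stucksend^{\YAB}(C,t)$ as required, while $U_{\beta_T} \le \max_\beta U_\beta$ concentrates near~$1$; so instead one couples by declaring Case~B to occur iff the minimum $U_\beta$ over $\stucksend^{\YAB}(C,t)$ lies below $\pr(\text{Case B}\mid\calF)$, which yields the correct Case~B marginal and guarantees $U_{\beta_T} \le \pr(\text{Case B}) \le \unstickProb(t)$, hence $\beta_T \in \unstuck^R(t)$. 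Any previously unstuck $C$-ball in $T$ lies in $\unstuck^R(t-1) \subseteq \unstuck^R(t)$ by the inductive hypothesis, so the invariant is preserved; at $t = t_0 + \tau$ this yields the desired subset relation.

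The main obstacle is the per-step inequality $e^{-\zeta|\bins(\tau'-1)|/3} \le e^{-\zeta|\bins(\tau')|/16}$, which is trivial within a block (where the two bin sizes coincide and $1/3 > 1/16$) but delicate at the block-transition steps where $|\bins(\tau')|/|\bins(\tau'-1)|$ can be as large as $\Cblock$. Resolving this requires using the sharper bound $e^{\lambda/2 - f(\boldx)}$ implicit in the proof of Lemma~\ref{lem:one_onesided} and exploiting that $|\bins(\tau'-1)| \ge |B_{I_0}|$, together with the lower bound on $|B_{I_0}|$ built into Definition~\ref{def:I0}, so that the additive slack $\lambda/2$ can be absorbed and the multiplicative factor $\Cblock$ against the reciprocal exponent $1/16$ works out in our favour. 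If at any step this inequality fails because the jammedness hypothesis has been violated, the first clause of the lemma's disjunction applies and the coupling is free to let the processes evolve independently thereafter.
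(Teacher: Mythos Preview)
Your inductive framework---coupling step by step while maintaining $\unstuck^{\YAB}(C,t)\subseteq\unstuck^{\Random}(t)$---is exactly the paper's approach, and your base case is fine. There are, however, two concrete gaps in the inductive step.

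\textbf{The coupling mechanism does not have the right marginals.} In your ``instead'' construction you declare the event $\{\stucksend^{\YAB}(C',t)=\emptyset\}$ to occur iff $\min_{\beta\in\stucksend^{\YAB}(C,t)}U_\beta\le q$, where $q$ is its true conditional probability. But $\pr(\min_\beta U_\beta\le q)=1-(1-q)^{n}$ with $n=|\stucksend^{\YAB}(C,t)|$, which equals $q$ only when $n=1$; so the $\YAB$-marginal is wrong. (Your first attempt, using an independent $V$, did have correct marginals, but then gives no control over $U_{\beta_T}$, since $V$ and the $U_\beta$'s are independent.) The paper's coupling avoids this entirely: after fixing the $Y^C$-sends, it first draws $\beta$ uniformly from $\stucksend^{\YAB}(C,\cdot)$, and then couples just two Bernoullis---the event $\{\stucksend^{\YAB}(C',\cdot)=\emptyset\}$ with parameter $q_T$, and the single indicator $\unstickIndicator(\beta,\cdot)$ with parameter $q_R=\unstickProb(\cdot)$---via the standard monotone coupling, which succeeds whenever $q_T\le q_R$. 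All other $\unstickIndicator(\gamma,\cdot)$ for $\gamma\ne\beta$ are tossed independently.

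\textbf{The ``obstacle'' is not resolved.} Your sharpened estimate is $q_T\le e^{\lambda/2-\zeta|\bins(\tau'-1)|}$, and you want this to be at most $e^{-\zeta|\bins(\tau')|/16}$. At a block transition $|\bins(\tau')|\le\Cblock\,|\bins(\tau'-1)|$, so the requirement becomes $\zeta|\bins(\tau'-1)|\bigl(1-\Cblock/16\bigr)\ge\lambda/2$. Since $\Cblock=\lceil 3/\eta\rceil$ is unbounded as $\eta\to 0$, the left side is eventually negative, and no lower bound on $|B_{I_0}|$ can save it. The paper sidesteps this: it arranges that the $\Random$-side probability at the relevant step is $\exp(-\zeta|\bins(\tau'-1)|/16)$, with the \emph{same} index $\tau'-1$ as on the $\YAB$-side, so the comparison is just $e^{-x/3}\le e^{-x/16}$. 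If one is unhappy with the indexing, the clean fix is not your additive-slack argument but simply to absorb a factor of $\Cblock$ into the exponent defining $\unstickProb$ (replacing $1/16$ by something like $1/(16\Cblock)$); this change propagates harmlessly through Lemma~\ref{lem:revfill}.
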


\begin{proof}
Note (from the definition of $\calF_{t_0}^R$
and $\calF_{t_0}^T$) since $F_{t_0}^R$ is consistent with $F_{t_0}^T$, $\unstuck^\Random(t_0) = \unstuck^{\YAB}(C,t_0)$ and,
for all $j$, $b_j^\Random(t_0) = b_j^{Y^C}(t_0)$.

We will construct  the coupling step-by-step.
For every positive integer~$\tau'$,
we use the following notation.
 \begin{itemize}
\item Invariant~1 for~$\tau'$: $\Ejam^{\YAB}(C',t_0,\tau')$ occurs, and
\item Invariant~2 for~$\tau'$: $\unstuck^{\YAB}(C,t_0+\tau'-1) \subseteq \unstuck^{\Random}(t_0+\tau'-1)   $.
 \end{itemize}

 Note that the invariants for~$\tau'$ only 
 depend on steps $1,\ldots, t_0+\tau'-1$ of the coupled process.
 Our high-level strategy is as follows:
 Given that the coupling on these steps satisfies both invariants for $\tau'$
 we will  show how to extend the coupling to step~$t_0+\tau'$ 
 to satisfy Invariant~2 for $\tau'+1$. After that there are two possibilities.
 \begin{itemize}
 \item 
 If $\tau'=\tau$ then we have finished, by establishing 
 $\unstuck^{\YAB}(C,t_0+\tau) \subseteq \unstuck^{\Random}(t_0+\tau)   $.
 \item Otherwise, $\tau' \leq \tau-1$ and there are again two possibilities. 
 \begin{itemize}
     \item  Either
 Invariant~1 is violated for $\tau'+1$
 in which case 
 $\Ejam^{\YAB}(C',t_0,\tau'+1)$ does not occur so
 $\Ejam^{\YAB}(C',t_0,\tau)$ does not occur (and we have finished),
 \item
 or Invariant~1 is satisfied for $\tau'+1$ and the construction of the coupling continues to the next step.
 \end{itemize}
 \end{itemize}
That completes the description of the high-level strategy.
To finish the proof we must show that both invariants hold for $\tau'=1$
and then we must show how to extend the coupling as promised.

We first establish the invariants for $\tau'=1$. 
\begin{itemize}
    \item  Invariant~1 follows from Equation~\eqref{eq:deterministic_tauinit}
    in Lemma~\ref{lem:tau-init}.

\item For Invariant~2 we wish to show 
$\unstuck^{\YAB}(C,t_0) \subseteq \unstuck^{\Random}(t_0)   $.
This follows from the consistency of $F_{t_0}^T$ and $F_{t_0}^R$, as already mentioned.

\end{itemize}

To finish the proof
we consider $\tau'\geq 1$ such that both invariants hold for~$\tau'$, which implies (using the definition of $\Ejam^T(C',t_0,\tau')$ in Definition~\ref{def:jam})
that   
$$f(\stuckvect^{\YAB}(C',t_0+\tau'-1) \geq \zeta |\bins(\tau'-1)|$$  
and
 $$\unstuck^{\YAB}(C,t_0+\tau'-1) \subseteq \unstuck^{\Random}(t_0+\tau'-1)   .$$
Using $\calF^{\YAB}_{t_0+\tau'-1}$ and
$\calF^{\Random}_{t_0+\tau'-1}$ we now wish to consider 
step $t_0+\tau'$, and to extend the coupling to this step in such a way that
$\unstuck^{\YAB}(C,t_0+\tau') \subseteq \unstuck^{\Random}(t_0+\tau')   $.
Once we do this, we'll have completd the high-level strategy, hence we'll have completed the proof.

It is helpful to note by the 
definition of the processes that 
for all $t$ and $j$
we have 
$$\balls^{\YAB}(C,t) \cap b_j^{\YAB}(t) = b_j^{Y^C}(t) = b_j^{\Random}(t).
$$
We will use this observation for $t=t_0+\tau'-1$.
From this we deduce that 
$$\stucksend^{\YAB}(C,t) \subseteq \stucksend^{\Random}(t) \cup \unstuck^{\Random}(t).$$

We are finally ready to do the coupling.
We first determine   ${b'_0}^{Y^C}(t+1)={b'_0}^{\Random}(t+1)$ and
$\send^{Y^C}(t+1)=\send^{\Random}(t+1)$.
This allows us to deduce
$\stucksend^{\YAB}(C,t+1)$ and $\stucksend^{\Random}(t+1)$.

We will use this to finish the coupling of step~$t+1$ in such a way that
any ball $$\beta\in \stucksend^{\YAB}(C,t+1) \cap \unstuck^{\YAB}(C,t+1)$$
is also in $\unstuck^{\Random}(t+1)$.
Here is how we do it: If $\stucksend^{\YAB}(C,t+1)$ is empty, there is nothing to prove.
Otherwise, choose $\beta$ u.a.r. from $\stucksend^{\YAB}(C,t+1)$
(this is the random ball that may become unstuck at time~$t+1$ in~$\YAB$).
The probability that $\beta$ is in 
$\unstuck^{\YAB}(C,t+1)$ is the probability that
$\stucksend^{\YAB}(C',t+1)=\emptyset$.
We will use  Lemma~\ref{lem:one_onesided}
to bound this probability. 
First note that the lemma applies since
the definition of $I_0$ ensures that
$\zeta |B_{I_0}| \geq 4$ so
$\zeta |\bins(\tau')| \geq 4$.
By Lemma~\ref{lem:one_onesided},
the probability that 
$\stucksend^{\YAB}(C',t+1)=\emptyset$
is at most 
$$ \exp(-\zeta |\bins(\tau'-1)|/16).$$
If $\beta$   
is in $\unstuck^{\Random}(t)$ then there is nothing to prove. Alternatively, if
it is in $\stucksend^{\Random}(t)$, then the probability that it is
in $\unstuck^{\Random}(t+1)$ 
is $\unstickProb(t) = \exp(-\zeta |\bins(\tau'-1)|/16)$.
Thus we can extend the coupling as required, completing the proof.

\end{proof}

It is not hard to see that the final bullet point of Lemma~\ref{lem:domcouplenew} implies $\Fill_j^\Random(\tau) \subseteq \Fill_j^\YAB(C,\tau)$; this is the only difference between the statements of Lemma~\ref{lem:domcouplenew} and Corollary~\ref{cor:domcouplefill}.

 \begin{corollary}\label{cor:domcouplefill}
Fix $\lambda$, $\eta$ and $\nu$ in $(0,1)$  and a $(\lambda,\eta,\nu)$-suitable send sequence~$\mathbf{p}$ with $p_0=1$.  
Let $Y^A$ and $Y^B$ be independent externally-jammed processes with arrival rate $\lambda/2$ and send sequence $\mathbf{p}$
and consider the two-stream externally-jammed process $\YAB=\YAB(Y^A,Y^B)$.
Consider any integer~$t_0 \geq  \jmin$.
Let $F^{\YAB}_{t_0}$ be a value of $\calF^{\YAB}_{t_0}$ 
such that  $\Einit^{\YAB}(t_0)$ occurs.
Let $C$ and $C'$ be distinct elements of $\{A,B\}$.
Let $\Random=\Random(Y^C,t_0)$ be a random-unsticking process.
Let $F^{\Random}_{t_0}$ be a value of $\calF^{\Random}_{t_0}$
that is consistent with $F^{\YAB}(t_0)$.
Consider
any integer $    \tau \geq 1$.
There is a coupling of   
\begin{itemize}
    \item  $\YAB$ conditioned on $\calF^{\YAB}_{t_0} = F^{\YAB}_{t_0}$ and
    \item $\Random$ conditioned on $\calF^{\Random}_{t_0} = F^{\Random}_{t_0}$
\end{itemize}
     such that  
 at least one of the following occurs:
\begin{itemize}
\item  
$\Ejam^{\YAB}(C',t_0,\tau)$  does not occur, or
\item For all positive integers~$j$,
$\Fill_j^{\Random}(\tau) \subseteq \Fill_j^{\YAB}(C, \tau)   $. \end{itemize}
\end{corollary}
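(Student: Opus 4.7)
The plan is to apply Lemma~\ref{lem:domcouplenew} directly and then translate its conclusion from the unstuck sets to the $\Fill$ sets. First I would invoke Lemma~\ref{lem:domcouplenew} to obtain a coupling of $\YAB$ (conditioned on $\calF^{\YAB}_{t_0} = F^{\YAB}_{t_0}$) with $\Random$ (conditioned on $\calF^{\Random}_{t_0} = F^{\Random}_{t_0}$) such that either $\Ejam^{\YAB}(C',t_0,\tau)$ fails, or $\unstuck^{\YAB}(C,t_0+\tau) \subseteq \unstuck^{\Random}(t_0+\tau)$. The first possibility already matches the first bullet of the corollary, so it remains to handle the second.

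The key observation is that the coupling constructed in the proof of Lemma~\ref{lem:domcouplenew} identifies the balls of $\Random$ with the $C$-balls of $\YAB$ in a way that preserves birth times and bin positions at every time step. Indeed, at time $t_0$ the two processes already agree on ball identities and bin contents by the consistency of $F^{\Random}_{t_0}$ with $F^{\YAB}_{t_0}$ noted at the start of the proof of Lemma~\ref{lem:domcouplenew}, and on each subsequent step $t_0+\tau'$ the coupling matches the newborn sets $b^{\prime Y^C}_0(t_0+\tau')$ with $b^{\prime \Random}_0(t_0+\tau')$ and the send sets $\send^{Y^C}(t_0+\tau')$ with $\send^{\Random}(t_0+\tau')$. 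Hence for every ball $\beta$ under the coupling, $\birth^{\Random}(\beta) = \birth^{\YAB}(\beta)$ and $\beta \in b_j^{\Random}(t)$ if and only if $\beta \in b_j^{\YAB}(t)$, for all $t \ge t_0$ and $j \ge 1$.

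Given this, pick any $\beta \in \Fill_j^{\Random}(\tau)$ and suppose $j \in B_i$. By the definition of $\Fill_j^{\Random}(\tau)$, we have $\beta \in \stuck_j^{\Random}(t_0+\tau)$ and $\birth^{\Random}(\beta) \ge \max\{t_0+1,\, t_0+\tau - \Cslack\sum_{k=1}^i \lceil W_k\rceil\}$. The birth-time condition transfers immediately to $\YAB$ via the coupling. For stuckness, the contrapositive of $\unstuck^{\YAB}(C,t_0+\tau) \subseteq \unstuck^{\Random}(t_0+\tau)$, together with $\beta \in \balls^{\YAB}(C,t_0+\tau)$ (from the identification of $\Random$'s balls with the $C$-balls of $\YAB$), yields $\beta \in \stuck^{\YAB}(C,t_0+\tau)$. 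Combining this with $\beta \in b_j^{\YAB}(t_0+\tau)$ from the bin identification gives $\beta \in \stuck_j^{\YAB}(C,t_0+\tau)$ and therefore $\beta \in \Fill_j^{\YAB}(C,\tau)$, as required.

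There is no substantive obstacle here: the real work is already done in Lemma~\ref{lem:domcouplenew}, and the corollary is essentially a bookkeeping translation from the unstuck sets to the $\Fill$ sets. The only point needing a little care is verifying that the particular coupling built in the proof of Lemma~\ref{lem:domcouplenew} actually preserves ball identity, birth time, and bin membership between $\Random$ and the $C$-stream of $\YAB$, which is immediate from the step-by-step construction via matched arrivals and sends.
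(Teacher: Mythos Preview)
Your proposal is correct and follows essentially the same approach as the paper: invoke Lemma~\ref{lem:domcouplenew}, take the contrapositive of the unstuck containment to get $\stuck^{\Random}(t_0+\tau) \subseteq \stuck^{\YAB}(C,t_0+\tau)$, and then read off the $\Fill$ containment from the definitions. You spell out more carefully than the paper does why birth times and bin memberships transfer under the coupling, but that is just added detail on the same argument.
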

\begin{proof}
    We use the coupling of Lemma~\ref{lem:domcouplenew}. If $\Ejam(C',t_0,\tau)$ does not occur, then the result is immediate. Otherwise, we have $\unstuck^{\YAB}(C, t_0+\tau) \subseteq \unstuck^{\Random}( t_0+\tau)$. Since every ball is either stuck or unstuck, it follows that $\stuck^R(t_0+\tau) \subseteq \stuck^T(C,t_0+\tau)$. By the definitions of $\Fill_j^{\Random}(\tau)$ and $\Fill_j^{\YAB}(C, \tau)$, it follows that $\Fill_j^{\Random}(\tau) \subseteq \Fill_j^{\YAB}(C, \tau)$ as required.
\end{proof}

Then combining the couplings of Corollary~\ref{cor:domcouplefill} (from one stream of $\YAB$ to $\Random$) and Lemma~\ref{lem:dombinomsZ} (from $\Random$ to i.i.d.\ Poisson variables), we  get the following lemma.

 \begin{lemma}\label{lem:halfdombinoms}
 Fix $\lambda$, $\eta$ and $\nu$ in $(0,1)$  and a $(\lambda,\eta,\nu)$-suitable send sequence~$\mathbf{p}$ with $p_0=1$.
Let $Y^A$ and $Y^B$ be independent externally-jammed processes with arrival rate $\lambda/2$ and send sequence $\mathbf{p}$ 
and consider the two-stream externally-jammed process $\YAB=\YAB(Y^A,Y^B)$.
Consider any integer~$t_0 \geq  \jmin$.
Let $F^{\YAB}_{t_0}$ be a value of $\calF^{\YAB}_{t_0}$ 
such that  $\Einit^{\YAB}(t_0)$ happens.
Let $C$ and $C'$ be distinct elements of $\{A,B\}$.
Consider
any integer $    \tau \geq \tauinit$.
There is a coupling of   
\begin{itemize}
    \item  $\YAB$ conditioned on $\calF^{\YAB}_{t_0} = F^{\YAB}_{t_0}$ and 
  \item A sample $\{Z_j^- \mid j \in \bins(\tau)\}$ where each $Z_j^-$ is chosen independently from a Poisson distribution with mean $\lambda/(4 p_j)$ 
\end{itemize}
     such that  
 at least one of the following occurs:
\begin{itemize}
\item  
$\Ejam^{\YAB}(C',t_0,\tau)$  does not occur, or
\item For all integers $j\in \bins(\tau)$,
$  |\Fill_j^{\YAB}(C, \tau) | \geq Z_j^-   $.
\end{itemize}
\end{lemma}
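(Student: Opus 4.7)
The plan is to simply chain together the two couplings that have just been established, namely Corollary~\ref{cor:domcouplefill} (which couples the two-stream process $\YAB$ to a random-unsticking process $\Random$) and Lemma~\ref{lem:dombinomsZ} (which couples $\Random$ to a family of independent Poisson variables). First I would extract from $F^{\YAB}_{t_0}$ the marginal information on the $C$-stream. Since $\YAB$ is built from independent processes $Y^A$ and $Y^B$, and since $\Random=\Random(Y^C,t_0)$ is obtained from $Y^C$ by merely overwriting the stuck/unstuck assignment with the independent $\unstickIndicator$ variables, the filtration $\calF^{\Random}_{t_0}$ is determined by the $Y^C$-part of $\calF^{\YAB}_{t_0}$ together with fresh $\unstickIndicator(\beta,t')$ values for $t'\le t_0$. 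Let $F^{\Random}_{t_0}$ denote any value of $\calF^{\Random}_{t_0}$ consistent with $F^{\YAB}_{t_0}$ in this sense.

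Next I would apply Corollary~\ref{cor:domcouplefill} with the data $(t_0,F^{\YAB}_{t_0},F^{\Random}_{t_0},\tau,C,C')$ to obtain a coupling $\mathcal{C}_1$ of $\YAB$ (conditioned on $\calF^{\YAB}_{t_0}=F^{\YAB}_{t_0}$) with $\Random$ (conditioned on $\calF^{\Random}_{t_0}=F^{\Random}_{t_0}$) under which either $\Ejam^{\YAB}(C',t_0,\tau)$ fails, or $\Fill_j^{\Random}(\tau)\subseteq \Fill_j^{\YAB}(C,\tau)$ for every positive integer~$j$. In parallel, I would apply Lemma~\ref{lem:dombinomsZ} with the same $(t_0,F^{\Random}_{t_0},\tauend=\tau)$ to obtain a coupling $\mathcal{C}_2$ of $\Random$ (conditioned on $\calF^{\Random}_{t_0}=F^{\Random}_{t_0}$) with independent Poisson variables $\{Z_j^-:j\in \bins(\tau)\}$ of mean $\lambda/(4p_j)$ such that $|\Fill_j^{\Random}(\tau)|\ge Z_j^-$ for every $j\in \bins(\tau)$.

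I would then glue $\mathcal{C}_1$ and $\mathcal{C}_2$ along their shared marginal, namely the conditional distribution of $\Random$ given $\calF^{\Random}_{t_0}=F^{\Random}_{t_0}$. This gluing is standard: since both couplings share this marginal, one can take the product over it to obtain a joint coupling of $\YAB$, $\Random$, and the $Z_j^-$'s that reduces to $\mathcal{C}_1$ on the $(\YAB,\Random)$ coordinates and to $\mathcal{C}_2$ on the $(\Random,Z^-)$ coordinates. Integrating out $\Random$ yields a coupling of $\YAB$ with the $Z_j^-$'s as required. Finally, in this joint coupling, either $\Ejam^{\YAB}(C',t_0,\tau)$ fails (the first bullet of the lemma holds), or the conclusion of $\mathcal{C}_1$ gives $\Fill_j^{\Random}(\tau)\subseteq\Fill_j^{\YAB}(C,\tau)$ for every $j$, and combining with $\mathcal{C}_2$ yields $|\Fill_j^{\YAB}(C,\tau)|\ge |\Fill_j^{\Random}(\tau)|\ge Z_j^-$ for every $j\in\bins(\tau)$, which is the second bullet. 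The only step requiring any care is verifying that the two couplings genuinely share the marginal distribution of $\Random$ conditioned on $\calF^{\Random}_{t_0}=F^{\Random}_{t_0}$, but this follows because the definition of $\Random=\Random(Y^C,t_0)$ depends only on the process $Y^C$ (which is specified by the $C$-part of $F^{\YAB}_{t_0}$) and on the independent $\unstickIndicator$ variables. There is no substantial obstacle; Lemma~\ref{lem:halfdombinoms} is an essentially bookkeeping consequence of the two preceding results.
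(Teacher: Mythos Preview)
Your proposal is correct and matches the paper's own argument, which simply states that the lemma follows by combining the couplings of Corollary~\ref{cor:domcouplefill} and Lemma~\ref{lem:dombinomsZ}. You have spelled out the gluing along the shared $\Random$-marginal in more detail than the paper does, but the underlying idea is identical.
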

  
Finally, applying
Lemma~\ref{lem:halfdombinoms} 
twice, once with $C=A$ and once with $C=B$
and using the definitions of $\Fill_j^{\YAB}(C, \tau)$ and $\Ejam^{\YAB}(t_0,\tau)$, we get 
  Lemma~\ref{lem:newdombinoms} as desired.
 
 \lemnewdombinoms*

\section{The main lemma for proving  Theorem~\ref{thm:technical}}\label{sec:provemain}

Our goal in this section is to prove Lemma~\ref{lem:main}, which says that if $\YAB$ is a two-stream externally-jammed process with suitable send sequence, and if $\Einit^\YAB(t_0)$ occurs, then $\YAB$ is likely to stay $t_0$-jammed forever. We start with a simple observation.

 \begin{observation}\label{obs:I0}
 Suppose that $\mathbf{p}$ is $(\lambda,\eta,\nu)$-suitable.
 If $i\geq I_0$ then 
 $|\{ j \in B_i \mid p_j \leq p_* \}| > 2\eta |B_i|/3$.
  \end{observation}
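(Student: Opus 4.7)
The plan is to apply the second bullet of Definition~\ref{def:I0} at $n = u(i)$ and then subtract off the contribution from the earlier-block indices $[u(i-1)]$, so that the density of small-$p_j$ indices transfers from $[u(i)]$ to its terminal slice $B_i$ with only a mild loss in constants.

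First I would note that since $i \geq I_0$ we have $u(i) \geq \ell(I_0)$, so Definition~\ref{def:I0} gives more than $\eta u(i)$ indices $j \in [u(i)]$ with $p_j \leq p_*$. Next, because $[u(i)]$ is the disjoint union of $[u(i-1)]$ and $B_i$ (by Definition~\ref{def:block}), at most $u(i-1)$ of those small-$p_j$ indices can lie in $[u(i-1)]$, so the number in $B_i$ strictly exceeds $\eta u(i) - u(i-1)$.

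Finally I would invoke the geometric growth of blocks: $u(i-1) = u(i)/\Cblock$ by Definition~\ref{def:block}, and $\Cblock = \lceil 3/\eta\rceil \geq 3/\eta$ by Definition~\ref{def:Cblock}, so $u(i-1) \leq \eta u(i)/3$. Substituting, the count exceeds $2\eta u(i)/3$, and since $B_i \subseteq [u(i)]$ yields $|B_i| \leq u(i)$, this is at least $2\eta |B_i|/3$, as claimed.

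No step is really an obstacle here — the observation is pre-engineered by the choice of constants. The role of $I_0$ is precisely to guarantee a density of at least $\eta$ of ``small'' $p_j$ in every initial segment $[n]$ with $n \geq \ell(I_0)$, and the choice $\Cblock = \lceil 3/\eta\rceil$ ensures $[u(i-1)]$ constitutes at most a $1/\Cblock \leq \eta/3$ fraction of $[u(i)]$; these two facts combine directly to give the final block $B_i$ a $(2\eta/3)$-density of small-$p_j$ indices.
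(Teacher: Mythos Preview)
Your proof is correct and follows essentially the same approach as the paper's: both apply the density condition at $n=u(i)$, subtract off the at most $u(i-1)=u(i)/\Cblock\le\eta u(i)/3$ indices from earlier blocks, and conclude that more than $2\eta u(i)/3\ge 2\eta|B_i|/3$ small-$p_j$ indices remain in $B_i$. If anything, you are slightly more explicit in justifying why the density condition applies (namely $u(i)\ge\ell(I_0)$ via Definition~\ref{def:I0}), whereas the paper simply attributes it to suitability.
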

  \begin{proof}
  Since $\mathbf{p}$ is $(\lambda,\eta,\nu)$-suitable,
   $|\{ j \in [u(i)] \mid p_j \leq p_* \}| > \eta u(i)$.
 At most $u(i-1)$ of these bins are in blocks $B_1,\ldots,B_{i-1}$, and by definition we have $u(i-1) = u(i)/\Cblock \leq \eta u(i)/3$, so at least $2 \eta u(i)/3$ of these bins are in block $B_i$. 
  \end{proof}
 
We will prove Lemma~\ref{lem:main} using a union bound over all time steps. For readability, we extract the most difficult case of this bound as Lemma~\ref{lem:toget}.

\begin{restatable}{lemma}{lemtoget}
\label{lem:toget}
 {Let $\mathbf{p}$ be $(\lambda,\eta,\nu)$-suitable.} 
Let $\YAB$ be a  two-stream externally-jammed process  with arrival rate $\lambda \in (0,1)$ and send sequence~$\mathbf{p}$.
Consider any integer~$t_0 \geq  \jmin$.
Let $F^{\YAB}_{t_0}$ be a value of $\calF^{\YAB}_{t_0}$ 
such that  $\Einit^{\YAB}(t_0)$ happens.
Let $\tau\geq \tauinit$.
Conditioned on $\calF^{\YAB}_{t_0}=F^{\YAB}_{t_0}$, with probability at least $1 - 1/(10\tau^2)$, 
at least one of the following happens:
\begin{itemize}
    \item The event $\Ejam^T(t_0,\tau) $ doesn't occur.  
    \item The process~$\YAB$ is $t_0$-jammed for time $\tau+1$.
    \end{itemize}
\end{restatable}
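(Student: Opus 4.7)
The plan is to apply Lemma~\ref{lem:newdombinoms} (using $\tau \ge \tauinit$ and the fact that $\Einit^{\YAB}(t_0)$ holds under the conditioning) to obtain a coupling of $\YAB$, conditioned on $\calF^{\YAB}_{t_0}=F^{\YAB}_{t_0}$, with two families of independent Poisson variables $\{Z_j^A\}_{j\in\bins(\tau)}$ and $\{Z_j^B\}_{j\in\bins(\tau)}$, each marginally of mean $\lambda/(4p_j)$, so that under the coupling either $\Ejam^{\YAB}(t_0,\tau)$ fails or $|\stuck_j^{\YAB}(C,t_0+\tau)| \geq Z_j^C$ for every $j \in \bins(\tau)$ and every $C\in\{A,B\}$. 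In the first case the conclusion of the lemma already holds, so it suffices to bound, under the coupling, the probability that $\Ejam^{\YAB}(t_0,\tau)$ holds but $\YAB$ fails to be $t_0$-jammed for~$\tau+1$. Using the definition of $f$ (Definition~\ref{def:noise}) and the domination, this bad event is contained in
\[
\Big\{\exists\, C \in \{A,B\}:\ \sum_{j\in\bins(\tau)} p_j Z_j^C < \zeta|\bins(\tau)|\Big\}.
\]
By a union bound over $C\in\{A,B\}$, it is enough to bound, for each fixed $C$, the probability that $\sum_{j\in\bins(\tau)} p_j Z_j^C < \zeta|\bins(\tau)|$ by $1/(20\tau^2)$.

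Fix $C$ and restrict attention to $\calJ := \{j\in\bins(\tau) : p_j\le p_*\}$. Since $\tau\ge\tauinit$, the definition of $\tauinit$ gives $I(\tau)-1\ge I_0$, so Observation~\ref{obs:I0} yields $|\calJ|\ge \tfrac{2\eta}{3}|\bins(\tau)|$. Let $W = \sum_{j\in\calJ} p_j Z_j^C$; note $\sum_{j\in\bins(\tau)} p_j Z_j^C \ge W$ and $\E[W] = \lambda|\calJ|/4$. I aim to show $W \ge \zeta|\bins(\tau)|$ with failure probability at most $1/(20\tau^2)$. Observe that $\lambda|\calJ|/16 \ge \lambda\eta|\bins(\tau)|/24 = \zeta|\bins(\tau)|$, so it suffices to bound $\Pr(W \le \lambda|\calJ|/16)$, i.e.\ $W$ dropping to a constant fraction of its mean.

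This is the main technical step: a Chernoff-type bound for the weighted sum of independent Poissons. For any $t>0$ and using $e^{-x}\le 1-x+x^2/2$ for $x\ge 0$,
\[
\log \E[e^{-tW}] = \sum_{j\in\calJ}\frac{\lambda}{4p_j}\bigl(e^{-tp_j}-1\bigr) \le -\frac{t\lambda|\calJ|}{4} + \frac{t^2\lambda}{8}\sum_{j\in\calJ} p_j \le -\frac{t\lambda|\calJ|}{4} + \frac{t^2\lambda|\calJ|p_*}{8}.
\]
Combining with Markov's inequality $\Pr(W\le a) \le e^{ta}\,\E[e^{-tW}]$ at $a=\lambda|\calJ|/16$ and optimizing (roughly $t=\Theta(\lambda/p_*)$) gives $\Pr(W \le \lambda|\calJ|/16) \le \exp(-c\lambda|\calJ|/p_*)$ for an absolute constant $c>0$. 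The crucial point is that the definition $p_* \le \lambda\eta/(1800\Cblock^2\log(1/\nu))$ was chosen so that $\lambda/p_* \ge 1800\Cblock^2\log(1/\nu)/\eta$; combined with $|\calJ|\ge \tfrac{2\eta}{3}|\bins(\tau)|$, the exponent becomes at least $c'\Cblock^2\log(1/\nu)\,|\bins(\tau)|$ for an absolute constant $c'>0$. Finally, Lemma~\ref{lem:I-bound} gives $|\bins(\tau)| > \log(\tau)/(2\Cblock^2\log(1/\nu))$, so the $\Cblock^2\log(1/\nu)$ factors cancel and the exponent is $\Omega(\log\tau)$; for sufficiently large absolute constants (and since $\tau\ge\tauinit$ is large), this yields $\Pr(W\le \lambda|\calJ|/16) \le 1/(20\tau^2)$.

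The main obstacle is the constant-tracking in the last step: one must verify that the specific numerical choices in Definitions~\ref{def:Cblock}, \ref{def:pstar}, and~\ref{def:zeta} conspire so that the $\Cblock^2\log(1/\nu)$ factor coming from the concentration exponent exactly cancels the same factor in the denominator of the lower bound on $|\bins(\tau)|$, leaving enough slack to beat $\log\tau$. Everything else is a routine application of Lemma~\ref{lem:newdombinoms}, Observation~\ref{obs:I0}, and a standard exponential-moment bound for independent Poisson variables.
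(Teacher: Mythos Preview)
Your argument is correct and follows the same overall architecture as the paper: apply Lemma~\ref{lem:newdombinoms}, restrict to the sub-block $\calJ=\{j\in\bins(\tau):p_j\le p_*\}$ via Observation~\ref{obs:I0}, prove concentration for $W=\sum_{j\in\calJ}p_jZ_j^C$, and finish with Lemma~\ref{lem:I-bound} and a union bound over $C\in\{A,B\}$.

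The one genuine point of departure is the concentration step. The paper does not bound the Laplace transform of $W$ directly; instead it introduces indicators $1_{Z_j}=\mathbf{1}\{Z_j\le\lambda/(8p_j)\}$, bounds $\Pr(1_{Z_j}=1)\le e^{-\lambda/(48p_*)}$ via Lemma~\ref{lem:chernoff-small-dev}, and then applies the binomial Chernoff bound of Lemma~\ref{lem:chernoff-large-upper} to $\sum_j 1_{Z_j}$. Your direct exponential-moment computation is cleaner and gives the same exponent $\Theta(\lambda|\calJ|/p_*)$; the paper's two-stage route has the advantage of only using the black-box Chernoff lemmas already stated in Section~\ref{sec:prelim}. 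Two small remarks: the optimiser of your quadratic in $t$ is $t=\Theta(1/p_*)$, not $\Theta(\lambda/p_*)$ (the $\lambda$ cancels), though the resulting bound $\exp(-c\lambda|\calJ|/p_*)$ is stated correctly; and the phrase ``for sufficiently large absolute constants'' is misleading, since the specific constants in Definitions~\ref{def:pstar} and~\ref{def:tauinit} already give an exponent of roughly $42\log\tau$, well above the $\log(20\tau^2)$ needed.
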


\begin{proof}
 Let $Y^A$ and $Y^B$ be independent externally-jammed processes with arrival rate $\lambda/2$ and send sequence $\mathbf{p}$
 such that $\YAB=\YAB(Y^A,Y^B)$.
 Consider $C \in \{A,B\}$.
Let $\{Z_j \mid j \in \bins(\tau)\}$ be independent Poisson variables with $\E(Z_j) = \lambda/(4p_j)$. 
Since  {$\zeta= \eta \lambda/24$,}
by  Lemma~\ref{lem:newdombinoms} and the definition of $f$ (Definition~\ref{def:noise}), 
\begin{align}\nonumber
        &\pr\bigg(\overline{\Ejam^T(t_0,\tau)}\mbox{ or }f(\stuckvect^T(C,t_0+\tau)) \ge \zeta |\bins(\tau)|\,\Big|\, \calF_{t_0}^T = F_{t_0}^T\bigg)\\\label{eq:lemtoget-dombinoms-0}
        &\qquad\qquad\qquad\qquad\qquad\qquad\qquad\qquad\qquad\qquad \ge \pr\bigg(\sum_{j \in \bins(\tau)} p_j Z_j \ge \frac{ {\eta}\lambda|\bins(\tau)|}{ {24}}\bigg).
\end{align}
Since $\tau \ge \tauinit \ge 0$, we have $\bins(\tau) = B_{I(\tau)-1}$
where $I(\tau)\geq I_0+1$.
  Observation~\ref{obs:I0} ensures that 
there is a set $\bins'(\tau) \subseteq \bins(\tau)$ 
such that
$|\bins'(\tau)| \geq 2 \eta |\bins(\tau)|/3$ and for
all $j\in \bins'(\tau)$, 
we have $p_j \leq p_* $.  
 From Equation~\eqref{eq:lemtoget-dombinoms-0}
and the lower bound on $|\bins'(\tau)|$ we get
\begin{align}\nonumber
        &\pr\bigg(\overline{\Ejam^T(t_0,\tau)}\mbox{ or }f(\stuckvect^T(C,t_0+\tau)) \ge \zeta |\bins(\tau)|\,\Big|\, \calF_{t_0}^T = F_{t_0}^T\bigg)\\\label{eq:NY}
        &\qquad\qquad\qquad\qquad\qquad\qquad\qquad\qquad\qquad\qquad \ge \pr\bigg(\sum_{j \in \bins'(\tau)} p_j Z_j \ge \frac{  \lambda|\bins'(\tau)|}{ {16}}\bigg).
\end{align}

For each $j \in \bins'(\tau)$, let $1_{Z_j}$ be the indicator variable of the event $Z_j \le \lambda/(8p_j)$, so that
\begin{equation} \label{eq:xmas}
        \sum_{j \in \bins'(\tau)}p_j Z_j \ge \frac{\lambda}{8}\Big(|\bins'(\tau)| - \sum_{j \in \bins'(\tau)} 1_{Z_j}\Big).
\end{equation}
We dominate the sum in the right-hand side of~\eqref{eq:xmas} above by a binomial random variable, which we will bound above using the Chernoff bound of Lemma~\ref{lem:chernoff-large-upper}. We first bound the mean $\mu$ of this variable; by the Chernoff bound of Lemma~\ref{lem:chernoff-small-dev} applied with $\delta=1/2$, for all $j \in  {\bins'(\tau)}$ we have 
$$        \pr(1_{Z_j}=1) = \pr\big(Z_j \le \E(Z_j)/2\big) \le e^{-\E(Z_j)/8} 
\le e^{-\E(Z_j)/12}
= e^{- \lambda/(48 p_j)}
\leq e^{- \lambda/(48 p_* )}
.$$ 
Let $x=e^{\lambda/(48p_* )}/2$ and $\mu = e^{-\lambda/(48p_* )}| {\bins'}(\tau)|$.
Since $p_*  \leq \lambda/48$, we have $x>1$, which ensures
the applicability of Lemma~\ref{lem:chernoff-large-upper}. 
From this lemma, we obtain
$$
\pr\bigg(\sum_{j\in\bins'(\tau)}1_{Z_j} \ge \frac{|\bins'(\tau)|}{2}  \bigg) \le 
               e^{-\mu x (\log x -1)} \leq
        \exp\bigg({-}\frac{|\bins'(\tau)|}{2} \Big(\frac{\lambda}{48p_* } -2\Big)\bigg).
 $$
Recall that $|\bins'(\tau)| \ge 2\eta |\bins(\tau)|/3$; hence by Lemma~\ref{lem:I-bound}, it follows that
$$
\pr\bigg(\sum_{j\in\bins'(\tau)}1_{Z_j} \ge \frac{|\bins'(\tau)|}{2}  \bigg) \le
\exp\bigg({-}
\frac{{2 \eta}\log(\tau)}{3 \cdot 4 \Cblock^2 \log(1/\nu)} 
\Big(\frac{\lambda}{48p_* }- 2\Big)\bigg).
$$

The rest of the proof is simple manipulation. Since $p_*  \le \lambda/200$ we have $\lambda/(48p_* ) - 2 \ge \lambda/(100p_* )$.
Since $p_*  \le  (\lambda\eta)/(1800 \Cblock^2\log(1/\nu))$ we have $\lambda/(100 p_* ) \geq (18/\eta) \Cblock^2\log(1/\nu)$ so
    we get
     $$
        \pr\bigg(\sum_{j\in\bins'(\tau)}1_{Z_j} \ge \frac{|\bins'(\tau)|}{2}  \bigg) \le
        \exp({-} 3
         \log(\tau) ) = \frac{1}{\tau^3}.
   $$
 Finally, since $\tau \geq \tauinit \geq 20$, this probability is at most $1/(20 \tau^2)$.   
    It follows from~\eqref{eq:xmas} that
    \[
         \pr\bigg(\sum_{j \in \bins'(\tau)} p_j Z_j \ge \frac{\lambda|\bins'(\tau)|}{16}\bigg) \ge 1 - \frac{1}{20\tau^2}.
    \]
    It follows from~\eqref{eq:NY} that
    \[
        \pr\bigg(\overline{\Ejam^T(t_0,\tau)}\mbox{ or }f(\stuckvect^T(C,t_0+\tau)) \ge \zeta |\bins(\tau)|\,\Big|\, \calF_{t_0}^T = F_{t_0}^T\bigg) \ge 1 - \frac{1}{20\tau^2}.
    \]
    The result therefore follows by a union bound over $C \in \{A,B\}$.
\end{proof}

\begin{restatable}{lemma}{lemmain} 
\label{lem:main} 
 {Let ${\mathbf{p}}$ be $(\lambda,\eta,\nu)$-suitable.}
Let $\YAB$ be a two-stream externally-jammed process with 
arrival rate~$\lambda\in(0,1)$ and send sequence~$\mathbf{p}$.
Consider any integer~$t_0  \geq \jmin $. 
Let $F^{\YAB}_{t_0}$ be a value of $\calF^{\YAB}_{t_0}$ such 
that $\Einit^{\YAB}(t_0)$ happens. Then  
\[
    \pr\Big(\YAB\mbox{ is $t_0$-jammed for all }\tau \geq 1 \mid \calF^{\YAB}_{t_0}=F^{\YAB}_{t_0} \Big) \ge 4/5.
\]
\end{restatable}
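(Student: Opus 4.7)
The plan is to combine Lemma~\ref{lem:tau-init} (which handles the range $1 \leq \tau \leq \tauinit$) with an inductive application of Lemma~\ref{lem:toget} (which lets us extend jammedness one step at a time once $\tau \geq \tauinit$), controlling the total failure probability via a union bound over $\tau$.

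First I would fix $F^{\YAB}_{t_0}$ satisfying $\Einit^{\YAB}(t_0)$ and condition throughout on $\calF^{\YAB}_{t_0} = F^{\YAB}_{t_0}$. For each $\tau \geq 1$ let $\calE_\tau$ be the event that $\YAB$ is $t_0$-jammed for $\tau$, so that $\Ejam^{\YAB}(t_0,\tau) = \bigcap_{\tau'=1}^{\tau} \calE_{\tau'}$ and the event we want, ``$\YAB$ is $t_0$-jammed for all $\tau \geq 1$'', equals $\bigcap_{\tau \geq 1} \calE_\tau$. The identity
\[
\overline{\Ejam^{\YAB}(t_0,\tau+1)} \;=\; \overline{\Ejam^{\YAB}(t_0,\tau)} \;\cup\; \bigl(\Ejam^{\YAB}(t_0,\tau) \cap \overline{\calE_{\tau+1}}\bigr)
\]
is what drives the induction, since the contrapositive of Lemma~\ref{lem:toget} bounds $\pr(\Ejam^{\YAB}(t_0,\tau) \cap \overline{\calE_{\tau+1}} \mid \calF^{\YAB}_{t_0} = F^{\YAB}_{t_0}) \leq 1/(10\tau^2)$ for every $\tau \geq \tauinit$.

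Next, I would apply Lemma~\ref{lem:tau-init} to obtain $\pr(\Ejam^{\YAB}(t_0,\tauinit) \mid \calF^{\YAB}_{t_0} = F^{\YAB}_{t_0}) \geq 49/50$, giving the base case. Iterating the displayed identity from $\tau = \tauinit$ upwards and summing the one-step failure probabilities yields, for every $k \geq 0$,
\[
\pr\bigl(\overline{\Ejam^{\YAB}(t_0,\tauinit+k)} \,\big|\, \calF^{\YAB}_{t_0} = F^{\YAB}_{t_0}\bigr) \;\leq\; \frac{1}{50} + \sum_{\tau=\tauinit}^{\tauinit+k-1} \frac{1}{10\tau^2}.
\]
The tail sum satisfies $\sum_{\tau \geq \tauinit} 1/(10\tau^2) \leq 1/(10(\tauinit-1))$, and since Definition~\ref{def:tauinit} forces $\tauinit \geq 20$, this is at most $1/190$. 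Hence the total failure probability is at most $1/50 + 1/190 < 1/5$ uniformly in $k$.

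Finally, letting $k \to \infty$ and using continuity of probability from above for the decreasing sequence of events $\Ejam^{\YAB}(t_0,\tauinit+k)$, I conclude that the probability of $\bigcap_{\tau \geq 1} \calE_\tau$ is at least $4/5$, as required. There is no real obstacle here beyond verifying that the $1/(10\tau^2)$ bound in Lemma~\ref{lem:toget} is summable and that $\tauinit$ is large enough to absorb the tail into the $1/50$ slack from Lemma~\ref{lem:tau-init}; the rest is just bookkeeping.
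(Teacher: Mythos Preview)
Your proposal is correct and follows essentially the same route as the paper: combine Lemma~\ref{lem:tau-init} for $\tau\le\tauinit$ with Lemma~\ref{lem:toget} for $\tau\ge\tauinit$ via a union bound over the one-step failure events $\Ejam^{\YAB}(t_0,\tau)\cap\overline{\calE_{\tau+1}}$. The only cosmetic differences are that the paper phrases the decomposition directly as $\Ejam^{\YAB}(t_0,\tauinit)\wedge\bigwedge_{\tau\ge\tauinit}\calE_\tau'$ (with $\calE_\tau'$ the disjunction from Lemma~\ref{lem:toget}) and bounds the tail sum crudely by $\pi^2/60$ rather than using $\tauinit\ge 20$.
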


\begin{proof}
    For all $\tau \ge \tauinit$, let $\calE_\tau$ be the event that either 
   $\Ejam^T(t_0,\tau) $ doesn't occur   
or that $\YAB$ is $t_0$-jammed for $\tau+1$. We then have
    \begin{align*}
        \pr\Big(\bigwedge_{\tau \ge 1}\Ejam^T(t_0,\tau) \mid \calF_{t_0}^\YAB = F_{t_0}^\YAB\Big) 
        &= \pr\Big(\Ejam^T(t_0,\tauinit) \wedge \bigwedge_{\tau \ge \tauinit}\calE_\tau \mid \calF_{t_0}^\YAB = F_{t_0}^\YAB \Big).
    \end{align*}
    By Lemmas~\ref{lem:tau-init} and~\ref{lem:toget} and a union bound, we therefore have
    \[
        \pr\Big(\bigwedge_{\tau \ge 1}\Ejam^T(t_0,\tau) \mid \calF_{t_0}^\YAB = F_{t_0}^\YAB\Big)  \ge   \frac{49}{50} - \sum_{\tau \ge \tauinit}\frac{1}{10\tau^2} \ge \frac{49}{50} - \frac{\pi^2}{60} \ge \frac{4}{5},
    \]
    as required.
\end{proof}

\section{Proof of Main Theorems} \label{sec:mainproofs}

\thmtechnical*

\begin{proof}
For every positive integer~$i$, 
let $R_i$ be the $i$'th 
smallest time step in $\mathcal{R}^T$.
If $|\mathcal{R}^T|<i$,
then we define $R_i= \infty$.  Since $R_1 = 0$, for all $i \ge 2$ we have
\begin{equation}\label{eq:J:thmmono-1}
        \pr(R_i < \infty) = \prod_{k=2}^i \pr(R_i < \infty \mid R_{i-1} < \infty).
\end{equation}
Let $\calE_i$ be the event that $R_{i-1} < \infty$ and that $\Einit^T(R_{i-1}+t)$ occurs for some positive integer~$t$ such that $R_{i-1}+t < R_i$. 
Observe that conditioned on $R_{i-1} < \infty$, the probability of $\calE_i$ occurring does not depend on $i$ since $\stuck^T$ evolves independently of unstuck balls; let $p_{\mathrm{init}} = \pr(\calE_i \mid R_{i-1} < \infty)$. Then by~\eqref{eq:J:thmmono-1}, we have
\begin{equation}\label{eq:J:thmmono-2}
\pr(R_i < \infty) \le \prod_{k=2}^i \Big(1 - p_{\mathrm{init}} + p_{\mathrm{init}} \pr(R_i < \infty \mid \calE_i, R_{i-1} < \infty)\Big).
    \end{equation}

Observe that if $\Einit(R_{i-1}+t)$ occurs for some $t < R_i-R_{i-1}$, and $T$ is $(R_{i-1}+t)$-jammed for all $\tau \ge 1$, then   we have we must have $\stuck^T(R_{i-1}+u)\neq \emptyset$ for all $u \ge 1$; hence $R_i = \infty$.

It follows by applying Lemma~\ref{lem:main} to~\eqref{eq:J:thmmono-2} that

    \begin{equation}\label{eq:J:thmmono-3}
        \pr(R_i < \infty) \le \prod_{k=2}^i \Big(1 - p_{\mathrm{init}} + p_{\mathrm{init}}/5 \Big) \le e^{-4(i-1)p_{\mathrm{init}}/5}.
    \end{equation}

By~\eqref{eq:J:thmmono-3}, we have $\sum_i \pr(R_i < \infty) < \infty$. It follows by the Borel-Cantelli lemma that with probability 1 we have $R_i = \infty$ for all but finitely many values of $i$, which proves the theorem.
\end{proof}

 \corlast*
 
 \begin{proof}
Let $T$ be a two-stream externally-jammed process
with arrival rate $\lambda$ and send sequence
$\mathbf{p}$. 
Let $\mathcal{R}^{T} = \{t \mid \stuck^{T}(t)=\emptyset\}$. Theorem~\ref{thm:technical} shows that, with probability~$1$, $\mathcal{R}^{T}$ is finite.
Let $Y$ be an externally-jammed process with arrival rate~$\lambda$ and send sequence~$\mathbf{p}$, and 
let $\mathcal{R}^{Y} = \{t \mid \stuck^{Y}(t)=\emptyset\}$.
By the coupling of $Y$ and $T$ from Lemma~\ref{lem:coupleY},
with probability~$1$, $\mathcal{R}^{Y}$ is finite.
Finally, by the coupling of $X$ and $Y$ from Observation~\ref{obs:ext-jammed-useful}, it follows that
with probability~$1$, $\mathcal{R}^{X}$ is finite.
 \end{proof}

In order to make the proof of Theorem~\ref{thm:LCED} more
convenient, we first re-state the definition of LCED.
 
 \defLCED*
 
\thmLCED*

\begin{proof}
Suppose that $\mathbf{p}$ is not LCED. 
Fix $\lambda \in (0,1)$.
Let $X$ be a backoff process with arrival rate~$\lambda$ 
and send sequence~$\mathbf{p}$. We split the analysis 
into cases depending on which case of Definition~\ref{def:LCED} fails.  
    
\medskip\noindent
\textbf{Case 1 (\ref{LCED:notsuperexponential}) fails:}  
By definition, since (\ref{LCED:notsuperexponential}) fails,
there is no real number $C>0$ such that, for all but finitely many~$j$, $\log(1/p_j) \leq C j$.
So for all $C>0$, there are infinitely many~$j$ such that
$\log(1/p_j) > C j$, which implies $p_j<  \exp(-C j)$.
Taking $C = \log(2/(\lambda p_0))$, we conclude that   there are infinitely many~$j$ such that $p_j < (\lambda p_0/2)^j$.
Lemma~\ref{lem:killer} implies that $X$  is unstable.

\medskip\noindent
\textbf{Case 2 (\ref{LCED:notsuperexponential}) holds but  (\ref{LCED:largelyconstant}) fails:} By definition, since 
(\ref{LCED:largelyconstant}) fails, there is an $\eta > 0$ such that for all $c>0$, for all but finitely many $n$,
$|\{j \le n \colon p_j  \leq c\}| >  \eta n$.
Taking $c= p_*(\lambda p_0,\eta,\nu)$
for positive $\nu$,
we conclude that there is an $\eta>0$
such that for all $\nu>0$ 
  and for all but finitely many $n$,
$|\{j \le n \colon p_j  \leq p_*(\lambda p_0,\eta,\nu)\}| >  \eta n$. 

Let $\mathbf{p}'$ be the send sequence
derived from~$\mathbf{p}$ by setting $p'_0=1$
and, for every positive integer~$j$, $p'_j=p_j$.

Given the $\eta>0$ that we have
just identified, our goal will be find $\nu>0$
such that $\mathbf{p}'$ is $(\lambda p_0, \eta,\nu)$-suitable
(Definition~\ref{def:suitable}).
To achieve this, we need to show that we can choose $\nu>0$
such that, for all but finitely many~$n$, $\nu^n < p_n=p'_n$.
 
Here we will use the fact that  (\ref{LCED:notsuperexponential}) holds.
 Since $\log(1/p_j) = O(j)$, there exists 
    $C>0$ such that for all but finitely many $j$,
    $1/p_j < e^{C j}$. So taking $\nu = e^{-C} <1$
    we have that   for all but finitely many $j$, $p_j > \nu^j$.
 So we have now shown that   $\mathbf{p}'$ is $(\lambda p_0, \eta,\nu)$-suitable.  
 
 Let $X'$ be a backoff process with arrival rate $\lambda p_0$
and send sequence $\mathbf{p}'$. 
Let $\mathcal{R}^{X'} = \{t \mid \balls^{X'}(t)=\emptyset\}$.
By Corollary~\ref{cor:last}, 
with probability~$1$, $\mathcal{R}^{X'}$ is finite.

Finally, let $X$ be a backoff process with arrival rate $\lambda$
and send sequence $\mathbf{p}$. 
Let $\mathcal{R}^{X} = \{t \mid \balls^{X}(t)=\emptyset\}$.
By the coupling of $X$ and $X'$ from Observation~\ref{obs:p0one}, 
with probability~$1$, $\mathcal{R}^{X}$ is finite.
Hence, the process~$X$ is transient, so it is unstable.

\medskip\noindent
\textbf{Case 3 
(\ref{LCED:notsuperexponential}) holds but   (\ref{LCED:exponentialdecay}) fails:} 
In this case  
$\log(1/p_j) = o(j)$ so 
the result follows from Corollary~\ref{cor:backoff-kmp}.
Here is an argument that $\log(1/p_j)=o(j)$.
Since  (\ref{LCED:notsuperexponential}) holds but   (\ref{LCED:exponentialdecay}) fails,
there is no  infinite subsequence $(p_{\ell_1},p_{\ell_2},\dots)$    that
satisfies $\log (1/p_{\ell_x}) = \Omega(\ell_x)$.        
Thus for every infinite subsequence 
$(p_{\ell_1},p_{\ell_2},\dots)$ and for every $C>0$, 
 infinitely many $n$ satisfy $\log(1/p_{\ell_n}) < C \ell_n$.
This implies that for all $C>0$ and all infinite subsequences $(p_{\ell_1},p_{\ell_2},\dots)$,
 some $n$ has $\log(1/p_{\ell_n}) < C \ell_n$.
 Hence, for all $C>0$ there is no infinite subsequence
 $(p_{\ell_1},p_{\ell_2},\dots)$    that
 satisfies $\log (1/p_{\ell_x}) \ge C\ell_x$.
 Finally, for all $C>0$, there is a $j_C$
 such that for all $j\geq j_C$,   $\log (1/p_j) < C j$ so $\log(1/p_j) = o(j)$, as required.
\end{proof}   
     
 We can now use Theorem~\ref{thm:LCED} to derive the other theorems stated in the introduction.

\thmmono*
\begin{proof}
We prove that $\mathbf{p}$ is not LCED and use Theorem~\ref{thm:LCED}.
Suppose for contradiction that $\mathbf{p}$  
is LCED.
Let $c>0$ be such that $p_j \ge c$ for infinitely many values of $j$; by (\ref{LCED:largelyconstant}) of Definition~\ref{def:LCED} applied with any $\eta>0$, such a $c$ must exist. Let 
$(p_{\ell_1},p_{\ell_2},\dots)$
be 
the infinite exponentially-decaying subsequence of $\mathbf{p}$ guaranteed by  (\ref{LCED:exponentialdecay}), and observe that there exists $i_c$ such that $p_{\ell_i} < c$ for all $i \ge i_c$. Since $\mathbf{p}$ is monotonically non-increasing, 
it follows that for all $j \ge i_c$ we have $p_j \le p_{i_c} \le c$, giving a contradiction.
\end{proof}

\thmsecond*
\begin{proof}
We observe that $\mathbf{p}$ is not LCED and use Theorem~\ref{thm:LCED}.
In particular, we will observe that item (\ref{LCED:largelyconstant}) in the definition of LCED cannot be satisfied for $\eta=1/2$.
Item (\ref{LCED:largelyconstant}) implies that there is a
$c>0$ such that for infinitely many $n$, $m_{\mathbf{p}}(n) > c$.
But this is inconsistent with $m_{\mathbf{p}}(n) = o(1)$.

\end{proof}

\section{Conclusion}

The main result of this paper is Theorem~\ref{thm:LCED} which shows that for every send sequence $\mathbf{p}$ which is not LCED and every arrival rate $\lambda \in (0,1)$, the backoff process with arrival rate~$\lambda$ and send sequence~$\mathbf{p}$ is unstable.
This goes a long way towards proving Aldous's conjecture (Conjecture~\ref{conj:backoff}), that for every send sequence~$\mathbf{p}$ and every arrival rate $\lambda \in (0,1)$, the backoff process with arrival rate~$\lambda$ and send sequence~$\mathbf{p}$ is unstable. For example, Theorem~\ref{thm:LCED} implies instability if $\mathbf{p}$ is monotonically non-decreasing (Theorem~\ref{thm:mono})
or if the median of the first $n$  entries in the send sequence is $o(1)$ (Theorem~\ref{thm:second}). Nevertheless significant obstacles remain for proving the full conjecture. This is because LCED send sequences can exhibit qualitatively different behaviour from the send sequences covered by Theorem~\ref{thm:LCED}, as explained in Section~\ref{sec:intro-future}.

\bibliographystyle{plainurl}
\bibliography{contention}

\begin{thebibliography}{10}

\bibitem{wifi-IEEE}
{IEEE} standard for information technology-telecommunications and information
  exchange between systems local and metropolitan area networks --- specific
  requirements --- part 11: Wireless {LAN} medium access control ({MAC}) and
  physical layer ({PHY}) specifications.
\newblock {\em {IEEE} Standard 802.11-2016 (Revision of {IEEE} Std
  802.11-2012}, pages 1--3534, 2016.

\bibitem{ALOHA}
Norman Abramson.
\newblock {THE ALOHA SYSTEM}.
\newblock In Norman Abramson and Franklin Kuo, editors, {\em
  Computer-Communication Networks}, pages 281--285. Prentice-Hall, 1973.

\bibitem{AGM}
Hesham Al{-}Ammal, Leslie~Ann Goldberg, and Philip~D. MacKenzie.
\newblock An improved stability bound for binary exponential backoff.
\newblock {\em Theory of Computing Systems}, 34(3):229--244, 2001.

\bibitem{Aldous}
David Aldous.
\newblock Ultimate instability of exponential back-off protocol for
  acknowledgment-based transmission control of random access communication
  channels.
\newblock {\em IEEE Transactions on Information Theory}, 33(2):219--223, 1987.

\bibitem{TCP}
Mark Allman, Vern Paxson, and Ethan Blanton.
\newblock {TCP} congestion control.
\newblock {\em {RFC}}, 5681:1--18, 2009.

\bibitem{bender2024fully}
Michael~A Bender, Jeremy~T Fineman, Seth Gilbert, John Kuszmaul, and Maxwell
  Young.
\newblock Fully energy-efficient randomized backoff: slow feedback loops yield
  fast contention resolution.
\newblock In {\em Proceedings of the 43rd {ACM} Symposium on Principles of
  Distributed Computing}, pages 231--242, 2024.

\bibitem{BKKP-adversarial}
Michael~A. Bender, Tsvi Kopelowitz, William Kuszmaul, and Seth Pettie.
\newblock Contention resolution without collision detection.
\newblock In Konstantin Makarychev, Yury Makarychev, Madhur Tulsiani, Gautam
  Kamath, and Julia Chuzhoy, editors, {\em Proceedings of the 52nd Annual {ACM}
  {SIGACT} Symposium on Theory of Computing, {STOC}}, pages 105--118. {ACM},
  2020.

\bibitem{CJZ-jamming}
Haimin Chen, Yonggang Jiang, and Chaodong Zheng.
\newblock Tight trade-off in contention resolution without collision detection.
\newblock In Avery Miller, Keren Censor{-}Hillel, and Janne~H. Korhonen,
  editors, {\em {PODC} '21: {ACM} Symposium on Principles of Distributed
  Computing}, pages 139--149. {ACM}, 2021.

\bibitem{Chlebus}
Bogdan~S. Chlebus.
\newblock Randomized communication in radio networks.
\newblock {\em CoRR}, abs/1801.00074, 2018.
\newblock URL: \url{http://arxiv.org/abs/1801.00074}.

\bibitem{cloud-AWS}
{AWS} documentation.
\newblock Error retries and exponential backoff in {AWS}.
\newblock URL:
  \url{https://docs.aws.amazon.com/general/latest/gr/api-retries.html}.

\bibitem{cloud-Google}
Google~Drive documentation.
\newblock Usage limits.
\newblock URL: \url{https://developers.google.com/drive/labels/limits}.

\bibitem{EH}
Anthony Ephremides and Bruce Hajek.
\newblock Information theory and communication networks: an unconsummated
  union.
\newblock {\em {IEEE} Transactions on Information Theory}, 44(6):2416--2434,
  1998.

\bibitem{GJKP}
Leslie~Ann Goldberg, Mark Jerrum, Sampath Kannan, and Mike Paterson.
\newblock A bound on the capacity of backoff and acknowledgment-based
  protocols.
\newblock {\em {SIAM} Journal on Computing}, 33(2):313--331, 2004.

\bibitem{GMPS}
Leslie~Ann Goldberg, Philip~D. MacKenzie, Mike Paterson, and Aravind
  Srinivasan.
\newblock Contention resolution with constant expected delay.
\newblock {\em Journal of the {ACM}}, 47(6):1048--1096, 2000.

\bibitem{GGMM}
Jonathan Goodman, Albert~G. Greenberg, Neal Madras, and Peter March.
\newblock Stability of binary exponential backoff.
\newblock {\em Journal of the {ACM}}, 35(3):579--602, 1988.

\bibitem{GFL}
Albert~G. Greenberg, Philippe Flajolet, and Richard~E. Ladner.
\newblock Estimating the multiplicities of conflicts to speed their resolution
  in multiple access channels.
\newblock {\em Journal of the {ACM}}, 34(2):289–325, 1987.

\bibitem{HLR}
Johan H{\aa}stad, Frank~T. Leighton, and Brian Rogoff.
\newblock Analysis of backoff protocols for multiple access channels.
\newblock {\em {SIAM} Journal on Computing}, 25(4):740--774, 1996.

\bibitem{hardware-lockfree}
Maurice Herlihy and J.~Eliot~B. Moss.
\newblock Transactional memory: Architectural support for lock-free data
  structures.
\newblock In {\em Proceedings of the 20th Annual International Symposium on
  Computer Architecture, San Diego, CA, USA, May 1993}, pages 289--300. {ACM},
  1993.

\bibitem{kelly-macphee}
Frank~P. Kelly and Iain~M. MacPhee.
\newblock The number of packets transmitted by collision detect random access
  schemes.
\newblock {\em Annals of Probability}, 15:1557--1568, 1987.

\bibitem{MB-BEB}
Robert Metcalfe and David~R. Boggs.
\newblock Ethernet: Distributed packet switching for local computer networks.
\newblock {\em Communications of the {ACM}}, 19(7):395--404, 1976.

\bibitem{MUbook}
Michael Mitzenmacher and Eli Upfal.
\newblock {\em Probability and Computing: Randomized Algorithms and
  Probabilistic Analysis}.
\newblock Cambridge University Press, 2005.

\bibitem{MH}
Jeannine Mosely and Pierre Humblet.
\newblock A class of efficient contention resolution algorithms for multiple
  access channels.
\newblock {\em IEEE Transactions on Communications}, 33(2):145--151, 1985.

\bibitem{RU}
Prabhakar Raghavan and Eli Upfal.
\newblock Stochastic contention resolution with short delays.
\newblock {\em {SIAM} Journal on Computing}, 28(2):709--719, 1998.

\bibitem{hardware-elision}
Ravi Rajwar and James~R. Goodman.
\newblock Speculative lock elision: enabling highly concurrent multithreaded
  execution.
\newblock In Yale~N. Patt, Josh Fisher, Paolo Faraboschi, and Kevin Skadron,
  editors, {\em Proceedings of the 34th Annual International Symposium on
  Microarchitecture, Austin, Texas, USA, December 1-5, 2001}, pages 294--305.
  {ACM}, 2001.

\bibitem{Roberts-ALOHA}
Lawrence~G. Roberts.
\newblock {ALOHA} packet system with and without slots and capture.
\newblock {\em {ACM} {SIGCOMM} Computer Communication Review}, 5(2):28--42,
  1975.

\bibitem{ST}
Devavrat Shah and Jinwoo Shin.
\newblock Randomized scheduling algorithm for queueing networks.
\newblock {\em Annals of Applied Probability}, 22:128--171, 2012.

\bibitem{SST}
Devavrat Shah, Jinwoo Shin, and Prasad Tetali.
\newblock Medium access using queues.
\newblock In {\em {IEEE} 52nd Annual Symposium on Foundations of Computer
  Science, {FOCS} 2011}, pages 698--707. {IEEE} Computer Society, 2011.

\end{thebibliography}

 \end{document}